\numberwithin{equation}{section}
\newtheorem{definition}{Definition}[section]
\newtheorem{lemma}[definition]{Lemma}
\newtheorem{theorem}[definition]{Theorem}
\newtheorem{prop}[definition]{Proposition}
\newtheorem{corollary}[definition]{Corollary}
\newtheorem{remark}[definition]{Remark}
\newtheorem{state}[definition]{Statement}
\newcommand{\ds}{\displaystyle}
\def\beq{\begin{equation}}
\def\eeq{\end{equation}}
\def\bea{\begin{eqnarray}}
\def\eea{\end{eqnarray}}
\def\beann{\begin{eqnarray*}}
\def\eeann{\end{eqnarray*}}
\def\beasn{\begin{sneqnarray}}
\def\eeasn{\end{sneqnarray}}
\def\ben{\begin{enumerate}}
\def\een{\end{enumerate}}
\def\bit{\begin{itemize}}
\def\eit{\end{itemize}}
\def\r{\ensuremath{\mathbb{R}}}
\def\rk{{\mathbb R}^{k}}
\def\rk{{\mathbb R}^{k}}
\def\g^*{\mathfrak{g}^*}
\def\derpar#1#2{\displaystyle\frac{\partial{#1}}{\partial{#2}}}
\def\qed{\ifvmode\Realemovelastskip\fi
{\unskip\nobreak\hfil\penalty50\hbox{}\nobreak\hfil \hbox{\vrule
height1.2ex width1.2ex}\parfillskip=0pt \finalhyphendemerits=0
\par\smallskip}}
\def\qed{\ifvmode\removelastskip\fi
{\unskip\nobreak\hfil\penalty50\hbox{}\nobreak\hfil \hbox{\vrule
height1.2ex width1.2ex}\parfillskip=0pt \finalhyphendemerits=0
\par\smallskip}}
\newcounter{epigrafe}[subsection]
\newcounter{subepigrafe}[epigrafe]
\title{Reduction of polysymplectic manifolds}
\author[J.C. Marrero]{Juan Carlos Marrero}
\address{Juan Carlos Marrero:
ULL-CSIC Geometr\'{\i}a Diferencial y Mec\'anica Geom\'etrica,
    Departamento de Matem\'aticas, Estad\'istica e Investigaci\'on Operativa, 
Universidad de la Laguna, Spain}
    \email{jcmarrer@ull.edu.es}
\author[N. Rom\'an-Roy]{Narciso Rom\'an-Roy}
\address{Narciso Rom\'an-Roy:
    Departamento de Matem\'atica Aplicada IV.
    Edificio C-3, Campus Norte UPC.
Universitat Polit\`ecnica de Catalunya - Barcelona Tech.
    C/ Jordi Girona 1. 08034 Barcelona, Spain}
    \email{nrr@ma4.upc.edu}
\author[M. Salgado]{Modesto Salgado}
\address{Modesto Salgado:
    Departamento de Xeometr\'{\i}a e Topolox\'{\i}a,
    Facultad de Matem\'{a}ticas, Campus Sur USC,
    C/ Lope G\'{o}mez de Marzoa, s/n. 15782 Santiago de Compostela, Spain}
    \email{modesto.salgado@usc.es}
\author[S. Vilari\~no]{Silvia Vilari\~no}
\address{Silvia Vilari\~no:
    Centro Universitario de La Defensa  de Zaragoza $\&$ I.U.M.A.,
    Academia General Militar,
    Carretera de Huesca s/n,
    50090-Zaragoza, Spain}
    \email{silviavf@unizar.es}
\keywords{Polysymplectic manifolds, Marsden-Weinstein reduction, $k$-coadjoint orbits, Polysymplectic Hamiltonian systems}
\subjclass[2000]{53D05,57M60, 57S25, 70S05, 70S10}
\begin{document}

\begin{abstract}
    The aim of this paper is to generalize the classical Marsden-Weinstein reduction procedure
    for symplectic manifolds to polysymplectic manifolds in order to obtain quotient
    manifolds which inherit the polysymplectic structure.
    This generalization allows us to reduce polysymplectic Hamiltonian systems
    with symmetries, such as those appearing in certain kinds of classical field theories.
    As an application of this technique,
    an analogous to the Kirillov-Kostant-Souriau theorem for polysymplectic manifolds is obtained
    and some other mathematical examples are also analyzed.

Our procedure corrects some mistakes and inaccuracies in previous papers
\cite{Gunther-1987,MRS-2004} on this subject.
\end{abstract}


\maketitle

\begin{center}
({\sl J. Phys. A: Math. Theor.} {\bf 48} (2015) 055206 (43pp) doi:10.1088/1751-8113/48/5/055206)
\end{center}

 \tableofcontents



\section{Introduction}

The problem of reduction of systems with symmetry
has attracted the interest of theoretical physicists
and mathematicians, who have sought to reduce the number of
equations describing the behavior of the system
by finding first integrals or conservation laws.
The use of geometrical methods has proved to be a powerful tool in the study of this topic,
and was introduced by Marsden and Weinstein in their pioneering work
of reduction of autonomous Hamiltonian systems under the action of
a Lie group of symmetries, with regular values of their momentum maps \cite{MW-1974}
(see also \cite{MW-2001} for a review of symplectic reduction).
In this case, the reduced phase space so-obtained is a symplectic manifold and inherits a Hamiltonian
dynamics from the initial system.

The Marsden-Weinstein technique was subsequently applied and generalized to many different situations;
for instance, the reduction of Hamiltonian systems with singular values of the momentum map
has been studied in several papers such as
\cite{SL-91} for the autonomous case, and  \cite{LS-93} for the non-autonomous.
In both cases, a stratified symplectic manifold is obtained as a quotient manifold which,
in the second situation, is also endowed with a cosymplectic structure.
Furthermore, with certain additional conditions, the reduced phase space inherits
a non-degenerate Poisson structure \cite{ACG-91} (see also other references quoted therein).
The reduction of time-dependent regular Hamiltonian systems (with regular values)
is developed in the framework of cosymplectic manifolds in \cite{Al-89},
obtaining a reduced phase space which is a cosymplectic manifold.
The study of autonomous systems coming from certain kinds of singular Lagrangians
can be found in \cite{CCCI-86}, where the conditions for
the reduced phase space to inherit an almost-tangent structure are given.
Some of the results here obtained are generalized
to the case of non-autonomous singular Lagrangian systems in \cite{IM-92}.
Another approach to this question is adopted in \cite{LMR-92}, where the authors give conditions for
the existence of a regular Lagrangian function in the reduced phase space,
which allows them to construct the reduced cosymplectic or contact structure
(and hence the reduced Hamiltonian function) from it.
Finally, a general study on reduction of presymplectic Hamiltonian systems with symmetry
is conducted in \cite{EMR-99}.

There are further cases in reduction theory; for instance,
the theory of reduction of Poisson manifolds is treated in works such as \cite{LM-95} and \cite{MR-86}.
Reduction of cotangent bundles of Lie groups is considered in \cite{MRW-84}.
As regards the subject of Lagrangian reduction,
some works, such as \cite{MS-93}, consider the problem from the point of view of
reducing variational principles (instead of reducing the almost tangent structure, as is
the case made in some of the above mentioned references), as well as
other approaches to the so-called Euler-Poincar\'e reduction \cite{CGR-2001,CRS-2000}
and Routh reduction for regular and singular Lagrangians \cite{CL-2010,LCV-2010}.
The study of reduction of non-holonomic systems
can be found, for instance, in \cite{BS-93}, \cite{BKMM-1996}, \cite{CLMM-98} and \cite{Ma-95}.
Finally, in \cite{BC-99} a presentation of optimal control systems on coadjoint orbits related to reduction
problems and integrability is provided, although it is in previous papers such as \cite{Su-95} and
\cite{Sh-87}, where an initial analysis of the problem of symmetries of optimal control systems
is carried out.
A more general treatment of the reduction problem of these kinds of systems
using the reduction theory for presymplectic systems is given in \cite{EMMR-2003}.
A different point of view on this topic using Dirac structures and implicit Hamiltonian systems
is adopted in \cite{Bl-2002} and \cite{BVS-2000}; while a further approach can be found in \cite{MCL-01}.
(Of course, this list of references is far from being complete).

With regard to the problem of reduction by symmetries of classical field theories,
only partial results have been achieved in the context of the Lagrangian and Poisson
reduction, leading to the analogous of the Lie-Poisson equation in classical mechanics
\cite{CM-2003}, the Euler-Poincar\'e reduction in principal fiber bundles \cite{CGR-2007,CR-2003}
and for discrete field theories \cite{Va-2007},
and other particular situations in multisymplectic field theories.
Nevertheless, although studies on symmetries and conservation laws in field theories
have already been carried out (see, for instance,
\cite{EMR-99b,Gimmsy,LMS-2004,MRSV-2010,RSV-2007}
and the references quoted therein),
a complete generalization of the Marsden-Weinstein reduction theorem
to the case of classical field theory has yet to be obtained.

The main objective of this paper is  to perform this generalization for one of the
simplest geometric formalisms of classical field theories:
the so-called {\sl $k$-symplectic formalism}  \cite{Gunther-1987} (on its Hamiltonian formulation),
and considering only the regular case.
This $k$-symplectic formalism
(also called {\sl polysymplectic formalism}\/) is the generalization
to field theories of the standard symplectic formalism in
autonomous mechanics, and is used to give a geometric description of certain kinds
of field theories: in a local description, those whose Lagrangian and Hamiltonian functions
do not depend on the coordinates in the basis (in many of these theories,
the space-time coordinates). The foundations of the
$k$-symplectic formalism are  the {\sl $k$-symplectic manifolds} \cite{aw-1992,aw-1994,AG-2000,mt1}.

An initial approach to reduction in this context was made in the seminal work of
G\"{u}nther \cite{Gunther-1987},
where the author attempts to apply the Marsden-Weinstein reduction theory for symplectic manifolds
to the polysymplectic case. Nevertheless,
in this paper (in which the author wishes to generalize some technical properties of
the orthogonal symplectic complement to the analogous polysymplectic situation)
the proof of one of the fundamental results fails to hold true.
A more recent attempt was made in \cite{MRS-2004} for reduction of
$k$-symplectic structures,
but this article contains similar inaccuracies that invalidate the proof of
the theorem of reduction of the polysymplectic structure proposed there.
On the other hand, a further analogous erroneous attempt to extend the
Marsden-Weinstein reduction theorem to multisymplectic manifolds was made in \cite{Hrabak}.
A promising way to address this problem has
been initiated very recently by Bursztyn et al \cite{BuCaIg}. The key point in this approach is
to use the notion of a multiplicative form in a Lie groupoid (see \cite{BuCa, BuCaOr}).
Another approach using a different and appropriate notion of a multi-momentum map
was proposed by Madsen and Swann \cite{MaSw1, MaSw2} (see also \cite{Sw}). The theory
is applied to closed forms of arbitrary degree. Existence and uniqueness of multi-momentum maps
was discussed and applications to the reduction of several types of ``closed geometries of higher order''
are given.

In this paper, we seek to correct the previous inaccuracies, although as
we will see, the generalization of the Marsden-Weinstein theorem to the polysymplectic context
(for regular values of the corresponding momentum maps)
is not straightforward and some additional technical conditions must be added to
the usual hypothesis. We also study how a polysymplectic structure
can be defined in the quotient space, and then, when starting
from a Hamiltonian polysymplectic system, how to reduce it.

The organization of the paper is as follows:
Section \ref{poly_manifolds} provides
a brief review on polysymplectic manifolds  (in appendix \ref{examplepoly}
we present some typical examples of these structures).
In particular, we review G\"{u}nther's reduction method
and give a counterexample showing that this procedure is not correct.
The main results of the paper are presented in Section \ref{main_reduction},
where we study the reduction procedure for polysymplectic structures in general,
first considering the reduction by a submanifold in general, and then
stating the Marsden-Weinstein reduction theorem for this case.
As an application, some typical examples are analyzed; namely, the reduction of the product of symplectic manifolds,
the reduction of cotangent bundles of $k^1$-covelocities and
the Kirillov-Kostant-Souriau theorem for polysymplectic manifolds.
In Section \ref{Ham-red}, the above results are applied and completed in order to reduce
polysymplectic Hamiltonian systems, and the procedure is applied to
certain kinds of Hamiltonian polysymplectic systems defined in
cotangent bundles of $k^1$-covelocities,
as well as to the problem of harmonic maps, as a particular example.

 Throughout this work, manifolds are real, paracompact,
 connected and $C^\infty$, maps are $C^\infty$, and sum over crossed repeated
 indices is understood. $G$ denotes a Lie group and ${\mathfrak g}$ its Lie algebra.

\section{Comments on G\"{u}nther's polysymplectic reduction: A counterexample.}
\label{poly_manifolds}

In \cite{Gunther-1987}, G\"{u}nther extends the Marsden-Weinstein reduction \cite{MW-1974}
to the polysymplectic setting. However, as commented in the introduction to the present paper,
the description given by G\"{u}nther contains some mistakes.
In this section we discuss this fact and present a simple counterexample to G\"{u}nther's results;
in particular, we see that Lemma 7.5 and Theorem 7.6 in \cite{Gunther-1987} are incorrect.
First, we recall the notions of a polysymplectic manifold,
a polysymplectic action and momentum map, and then in section \ref{Gu-error}
we discuss G\"{u}nther's results on reduction.

\subsection{Polysymplectic manifolds, actions and momentum maps.}

In this section we review the concept of a polysymplectic structure introduced by G\"{u}nther
 in \cite{Gunther-1987} and some necessary notions for the reduction procedure described by this author
(for further details see \cite{Gunther-1987} and also \cite{MRS-2004}).

    \begin{definition}
        Let $M$ be a differentiable manifold of dimension $n$.
        A {\rm $k$-polysymplectic structure} in  $M$ is a closed nondegenerated $\rk$-valued $2$-form
        $$\bar{\omega}=\displaystyle\sum_{A=1}^k \omega^A\otimes r_A \ ,$$
 where $\{r_1,\ldots, r_k\}$ denotes the canonical basis of $\rk$.
         The pair $(M,\bar\omega)$ is called a $k$-polysymplectic manifold or simply a polysymplectic manifold.
    \end{definition}

   Some typical examples of polysymplectic manifolds are analyzed in Appendix A.

   Note that $M$ has a $k$-polysymplectic structure $\bar{\omega}$ if and only if there exists a family of $k$ closed $2$-forms $(\omega^1,\ldots,\omega^k)$ such that
                 \begin{equation}\label{poly-cond}
                     \displaystyle\bigcap_{A=1}^k \ker\,\omega^A=0\, .
                 \end{equation}

    Throughout this paper we use this characterization of a polysymplectic structure. Thus,
    a family of $k$ closed $2$-forms $(\omega^1,\ldots, \omega^k)$ such that (\ref{poly-cond})
    holds is called a $k$-\textit{polysymplectic structure} or
    simply a {\it polysymplectic structure}.

    \begin{remark}
    {\rm
        The definition of a polysymplectic manifold is the differentiable version
of the notion of a polysymplectic vector space: \emph{a polysymplectic structure on a
vector space $\mathcal{V}$ is a family of $k$ skew-symmetric bilinear maps $\omega^1,\ldots, \omega^k$
such that $\ker\, \omega^1\cap \ldots \cap \ker\, \omega^k=\{0\}$.}
    }
    \end{remark}

    \begin{definition}
            An action $\Phi\colon G\times M\to M$ of a Lie group $G$ on a polysymplectic manifold
            $(M,\omega^1,\ldots, \omega^k)$, is said to be a {\rm polysymplectic action}
            if for each $g\in G$, the diffeomorphism
$$\begin{array}{ccccc}
\Phi_g &\colon& M & \to & M \\ & & x & \mapsto & \Phi(g,x)
\end{array}$$
 is polysymplectic; that is,  for $A=1,\ldots, k$,
                \[
                    \Phi_g^*\omega^A=\omega^A\,.
                \]
        \end{definition}

As in the symplectic case, we can introduce the notion of a
momentum map for polysymplectic actions in a natural way:

        \begin{definition}\label{momemtum}
            Let $(M,\omega^1,\ldots, \omega^k)$ be a polysymplectic manifold and
$\Phi\colon G\times M\to M$ a polysymplectic action.
            A mapping
                \[
                    J\equiv (J^1,\ldots, J^k)\colon M\to \mathfrak{g}^*\times\stackrel{k}{\ldots}\times\mathfrak{g}^*
                \]
            is said to be a {\rm momentum mapping} for the action $\Phi$
            if for each $\xi\in\mathfrak{g}$,
                \[
                    i_{\xi_M}\omega^A=d\hat{J}^A_\xi\ ,
                \]
            where $\hat{J}^A_\xi\colon M\to \mathbb{R}$ is the map defined by
                \[
                    \hat{J}^A_\xi(x)=J^A(x)(\xi)\,,\; x\in M
                \]
            and $\xi_M$ is the infinitesimal generator of the action
            corresponding to $\xi$.
        \end{definition}

        \begin{remark}{\rm
            In the particular case $k=1$, the above definition reduces to the definition of
the momentum mapping for a symplectic action.
            (See \cite{AM-1978}).
       } \end{remark}

        If $G$ is a  Lie group, we may  define an action of $G$ over
$\mathfrak{g}^*\times\stackrel{k}{\ldots}\times\mathfrak{g}^*$ by
            \begin{equation}\label{coad^k}
                \begin{array}{lccl}
                    Coad^k\colon & G\times \mathfrak{g}^*\times\stackrel{k}{\ldots}\times\mathfrak{g}^*
                    &\to & \mathfrak{g}^*\times\stackrel{k}{\ldots}\times\mathfrak{g}^*\\\noalign{\medskip}
                            & (g,\mu_1,\ldots, \mu_k) & \mapsto & Coad^k(g,\mu_1,\ldots, \mu_k)=
\left( Coad(g,\mu_1),\ldots, Coad(g,\mu_k)\right) \ ,
                \end{array}
            \end{equation}
where $Coad$ denotes the usual coadjoint action
$$
\begin{array}{ccccc}
Coad & \colon & G\times{\mathfrak g}^* & \to {\mathfrak g}^* \\
& & (g,\mu) & \mapsto & \mu\circ Ad_{g^{-1}}
\end{array}
$$
        $Coad^k$ is called the \textit{$k$-coadjoint action} (see Appendix A).
        \begin{definition}
            A momentum mapping
            $J\equiv (J^1,\ldots, J^k)\colon M\to \mathfrak{g}^*\times\stackrel{k}{\ldots}\times\mathfrak{g}^*$
             for the action $\Phi$ is said to be $Coad^k$-equivariant if, for every $g\in G$ and $x\in M$,
                \begin{equation}\label{equivariant}
                    J(\Phi_g(x))=Coad^k_g(J(x))\, ;
                \end{equation}
            that is, the following diagram is commutative
                \[
 \xymatrix{M \ar[d]_-{\Phi_g}\ar[r]^-{J}&
\mathfrak{g}^*\times\stackrel{k}{\ldots}\times\mathfrak{g}^*\ar[d]^-{Coad^k_g} \\
                    M \ar[r]^-{J}& \mathfrak{g}^*\times\stackrel{k}{\ldots}\times\mathfrak{g}^* }
                \]
        \end{definition}

        \begin{remark}{\rm
            \begin{enumerate}
            \item
            Observe that, for every $g\in G$ and $x\in M$, the condition (\ref{equivariant}) is
            equivalent to
                \[
                    J^A(\Phi_g(x))=Coad_g(J^A(x))\ , \quad \mbox{\rm for every $A=1,\ldots,k$} \ .
                \]
            \item If $J$ is $Coad^k$-equivariant then $T_mJ(\xi_M(m))=
\xi_{\mathfrak{g}^*\times\stackrel{k}{\ldots}\times\mathfrak{g}^*}(J(m))$, for $m\in M$ and
$\xi\in \mathfrak{g}$, where $\xi_{\mathfrak{g}^*\times\stackrel{k}{\ldots}\times\mathfrak{g}^*}$
 is the infinitesimal generator of $Coad^k$ associated with $\xi$.
            \end{enumerate}
        }\end{remark}

\begin{definition}
A polysymplectic manifold endowed with a polysymplectic action of a Lie group
and a $Coad^k$-equivariant momentum map, $(M;\omega^1,\ldots,\omega^k;\Phi; J)$,
is said to be a {\rm polysymplectic Hamiltonian $G$-space}.
\end{definition}

In this setting we can prove a result which generalizes  Lemma 4.3.2 in \cite{AM-1978}.
First we need to introduce the following concept:
let $(\mathcal{V},\omega^1,\ldots, \omega^k)$ be a polysymplectic vector space and $W$ be a subspace.
The \textit{polysymplectic orthogonal complement} of $W $
 is the linear subspace of $\mathcal{V}$ defined by
$$
        W^{\bot,k}=\{ v\in \mathcal{V}\, |\, \omega^1(v,w)=\ldots =
\omega^k(v,w)=0,\, \makebox{for every } w\in W\} = \bigcap_{A=1}^k W^{\bot, \omega^A}\,.
$$
(A complete description of the $k$-th orthogonal complement and its properties
can be found in \cite{LV-2012}).
Then:

\begin{lemma}\label{amlemma}
    Let $\Phi\colon G\times M\to M$ be a polysymplectic action with momentum mapping
$J\colon M\to \mathfrak{g}^*\times\stackrel{k}{\ldots}\times \mathfrak{g}^*$, and let
$\mu\in \mathfrak{g}^*\times\stackrel{k}{\ldots}\times \mathfrak{g}^*$
be a regular value of $J$. If $m\in J^{-1}(\mu)$ and $G_\mu$ is the isotropy group of $\mu$
 under the $k$-coadjoint action, we have:
    \begin{enumerate}
        \item $T_m(G_\mu\cdot m) = T_m(G\cdot m)\cap T_m(J^{-1}(\mu))$ and
        \item $T_m(J^{-1}(\mu))= T^{\bot,k}_m(G\cdot m)$, where $^{\bot, k}$ denotes the polysymplectic orthogonal complement.
    \end{enumerate}
\end{lemma}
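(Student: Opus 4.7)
The plan is to adapt the classical symplectic argument (Abraham--Marsden Lemma 4.3.2) by exploiting the fact that the polysymplectic condition (\ref{poly-cond}) is encoded component-wise in the $k$ forms $\omega^A$ and that both the momentum map relation and the $Coad^k$-equivariance split as statements about each $J^A$. Regularity of $\mu$ will be used only to identify $T_m(J^{-1}(\mu))$ with $\ker T_m J$, so that the problem reduces to pointwise linear algebra in the polysymplectic vector space $(T_mM,\omega^1(m),\ldots,\omega^k(m))$.

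For item (2) I would first compute, for an arbitrary $v\in T_mM$ and $\xi\in\mathfrak{g}$,
\[
\langle T_mJ^A(v),\xi\rangle \;=\; d\hat{J}^A_\xi(m)(v) \;=\; (i_{\xi_M}\omega^A)(m)(v) \;=\; \omega^A(m)(\xi_M(m),v),
\]
using Definition \ref{momemtum}. Since $\mu$ is a regular value, $v\in T_m(J^{-1}(\mu))$ iff $T_mJ^A(v)=0$ for every $A=1,\ldots,k$, and the above chain of equalities shows this is equivalent to $\omega^A(m)(\xi_M(m),v)=0$ for every $\xi\in\mathfrak{g}$ and every $A$. Because $T_m(G\cdot m)=\{\xi_M(m)\mid\xi\in\mathfrak{g}\}$, skew-symmetry of each $\omega^A$ identifies this exactly with $T_m(G\cdot m)^{\bot,k}$, giving (2).

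For item (1), the inclusion $T_m(G_\mu\cdot m)\subseteq T_m(G\cdot m)\cap T_m(J^{-1}(\mu))$ follows from $Coad^k$-equivariance: for $\xi\in\mathfrak{g}_\mu$ one has $T_mJ(\xi_M(m))=\xi_{(\mathfrak{g}^*)^k}(J(m))=\xi_{(\mathfrak{g}^*)^k}(\mu)=0$, the last equality because $\xi\in\mathfrak{g}_\mu$ means $\xi$ lies in the Lie algebra of the isotropy group of $\mu$ under $Coad^k$. Conversely, if $v\in T_m(G\cdot m)\cap T_m(J^{-1}(\mu))$, pick any $\xi\in\mathfrak{g}$ with $v=\xi_M(m)$; then $0=T_mJ(v)=\xi_{(\mathfrak{g}^*)^k}(\mu)$, so that particular $\xi$ belongs to $\mathfrak{g}_\mu$, whence $v\in T_m(G_\mu\cdot m)$.

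The only potentially delicate point is the interpretation of $\mathfrak{g}_\mu$: since $\mu$ is a $k$-tuple and $Coad^k$ acts diagonally, $\mathfrak{g}_\mu=\bigcap_{A=1}^{k}\mathfrak{g}_{\mu_A}$, so ``$\xi_{(\mathfrak{g}^*)^k}(\mu)=0$'' really means ``$\xi_{\mathfrak{g}^*}(\mu_A)=0$ for every $A$''. Once this is made explicit from (\ref{coad^k}), no further obstacle arises and the proof reduces to the verifications sketched above; unlike the erroneous step in G\"{u}nther's treatment, no dimension count or double-orthogonal identity is invoked, so the polysymplectic degeneracy of $^{\bot,k}$ causes no trouble.
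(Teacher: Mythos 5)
Your proposal is correct and follows essentially the same route as the paper: item (2) is the same chain of equivalences $T_mJ^A(v)=0 \Leftrightarrow d\hat{J}^A_\xi(m)(v)=0 \Leftrightarrow \omega^A(m)(\xi_M(m),v)=0$ for all $\xi$ and $A$, and item (1) is the same two-way argument reducing membership in $T_m(J^{-1}(\mu))$ for $v=\xi_M(m)$ to $\xi_{\mathfrak{g}^*\times\stackrel{k}{\ldots}\times\mathfrak{g}^*}(\mu)=0$, i.e.\ $\xi\in\mathfrak{g}_\mu$, via the $Coad^k$-equivariance of $J$. Your explicit remark that $\mathfrak{g}_\mu=\bigcap_A\mathfrak{g}_{\mu_A}$ and that no double-orthogonal identity is needed is consistent with (and indeed the point of) the paper's treatment.
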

\begin{proof}
For $(1)$, observe that $v\in T_m(G\cdot m)$ if and only if there exists $\xi \in \mathfrak{g}$
such that $v=\xi_M(m)$. Then, to check $(1)$ is equivalent to proving that
$\xi_M(m)\in T_m(G_\mu\cdot m)$ if and only if $\xi_M(m)\in T_m(J^{-1}(\mu))$,
or equivalently $\xi\in \mathfrak{g}_{\mu}$ if and only if $\xi_M(m)\in T_m(J^{-1}(\mu))$.

Now, note that $\xi_M(m)\in T_m(J^{-1}(\mu))$ if and only if $T_mJ(\xi_M(m))=0$,
 that is $\xi_{\mathfrak{g}^*\times\stackrel{k}{\ldots}\times \mathfrak{g}^*}(J(m))=0$.
Since $m\in J^{-1}(\mu)$, we have that
$\xi_{\mathfrak{g}^*\times\stackrel{k}{\ldots}\times \mathfrak{g}^*}(\mu)=
\xi_{\mathfrak{g}^*\times\stackrel{k}{\ldots}\times \mathfrak{g}^*}(J(m))=0$
and then $\xi\in \mathfrak{g}_\mu$. Therefore $(1)$ holds.

For the item $(2)$, we have
\begin{align*}
    X\in T_m^{\bot,k}(G\cdot m) \Leftrightarrow\, &
\omega^A(m)(X,\xi_M(m))=0,\, \forall \xi\in \mathfrak{g} \makebox{ and }\forall A=1,\ldots, k\\ \Leftrightarrow \,
&
    d\hat{J}^A_\xi (m)(X) = 0 ,\, \forall \xi\in \mathfrak{g} \makebox{ and }\forall A=1,\ldots, k\\ \Leftrightarrow \,
&
    T_mJ^A(X)=0,\; \forall A=1,\ldots, k \\ \Leftrightarrow \, & X\in T_m(J^{-1}(\mu))\,.
\end{align*}
\end{proof}

\subsection{G\"{u}nther's reduction: a counterexample.}\label{Gu-error}

The idea of the reduction of polysymplectic manifolds is to generalize the
Marsden-Weinstein reduction procedure for symplectic manifolds to polysymplectic manifolds
in order to obtain quotient manifolds which inherit the polysymplectic structure.

A first but incomplete attempt at reduction in this setting was made in \cite{Gunther-1987}
 (see also \cite{MRS-2004}). In this direction, the main result in G\"{u}nther's paper is the following statement:

\begin{state}\label{guth}
    Let $\Phi\colon G\times M\to M$ be a polysymplectic action with momentum map
$J\colon M\to \mathfrak{g}^*\times\stackrel{k}{\ldots}\times \mathfrak{g}^*$, and let
$\mu\in \mathfrak{g}^*\times\stackrel{k}{\ldots}\times \mathfrak{g}^*$
a regular value of $J$. Then there exists uniquely a polysymplectic form
$\bar{\omega}_{\mu}$ on $M_\mu=J^{-1}(\mu)/G_\mu$ with
$\pi_\mu^*\bar{\omega}_\mu= i_\mu^*\bar{\omega}$, where
$\pi_\mu\colon J^{-1}(\mu)\to M_\mu$ is the canonical projection and
$i_\mu\colon J^{-1}(\mu)\to M$ is the canonical inclusion.
\end{state}

The proof of this statement is based on the following result (Lemma 7.5 in \cite{Gunther-1987}).

\begin{state}\label{gunlemma}
    Under the same conditions as in the statement \ref{guth}, if $m\in J^{-1}(\mu)$ the following relations hold:
    \begin{enumerate}
        \item $T_m(J^{-1}(\mu))=T_m^{\bot,k}(G\cdot m),$
        \item $T_m(G_\mu\cdot m)=T_m^{\bot,k}(G\cdot m) \cap T_m^{\bot,k} (J^{-1}(\mu))\,.$
    \end{enumerate}
\end{state}

Let us observe that the above statement is true for symplectic manifolds
(and in this case it coincides with Lemma \ref{amlemma}), but in general
it is not true for polysymplectic manifolds. The key point is that if $W$
is a subspace of a polysymplectic vector space $(\mathcal{V}, \omega^1,\ldots, \omega^k)$
 then it is not true, in general, that $(W^{\bot,k})^{\bot,k}= W$,
and in the above lemma G\"{u}nther assumes that the identity $(W^{\bot,k})^{\bot,k}= W$ holds.
Next,  we present a simple counterexample of the above results.

Let $(N,\omega)$ be a symplectic manifold, then $M =N\times N$
has a polysymplectic structure given by $\omega^A=pr_A^*\omega,\, A=1,2$,  $pr_1$
and $pr_2$ being the canonical projections.

Let $\phi\colon G\times N\to N$ be a free and proper symplectic action with
equivariant momentum mapping $\tilde{J}\colon N\to \mathfrak{g}^*$.
Then we can define a free and proper polysymplectic action by
\[
    \begin{array}{lccl}
        \Phi\colon & G\times (N\times N) & \to & N\times N\\\noalign{\medskip}
         & (g, (x,y)) & \mapsto & (\phi_g(x), \phi_g(y))
    \end{array}
\]
and a $Coad^2$-equivariant  momentum mapping for $\Phi$ given by
\[
    \begin{array}{lccl}
       	J\colon & M=N\times N &\to &\mathfrak{g}^*\times \mathfrak{g}^*\\\noalign{\medskip}
         & (x,y) & \to & (\tilde{J}(x), \tilde{J}(y))
    \end{array}
\]

Let $\mu=(\mu_1, \mu_2)\in \mathfrak{g}^*\times \mathfrak{g}^*$.
Since the action $\phi$ is free and proper, $\mu_1$ and $\mu_2$ are regular values of $\tilde{J}$,
and then $\mu$ is a regular value of $J$. Therefore, $G_\mu$
acts free and properly on $J^{-1}(\mu)$ and this
implies that $J^{-1}(\mu)/G_\mu$ is a smooth quotient manifold.

Next, we see that, for this example, item $(2)$ in the statement \ref{gunlemma}
does not hold. In fact, we know that
\beann
    T_{(x_1,x_2)}J^{-1}(\mu) &=& \{(v_1,v_2)\in T_{x_1}N\times T_{x_2}N\,\vert\, T_{x_1}\tilde{J}(v_1)=0, \,
T_{x_2}\tilde{J}(v_2)=0\} \\ 
 &=& T_{x_1}(\tilde{J}^{-1}(\mu_1))\times T_{x_2}(\tilde{J}^{-1}(\mu_2)),\\
    T_{(x_1,x_2)}(G\cdot (x_1,x_2)) &=& \{ (\xi_N(x_1), \xi_N(x_2))\, \vert\, \xi\in \mathfrak{g}\}
\eeann
and, as a consequence of item $(2)$ in Lemma \ref{amlemma}, we have that
$$
    T_{(x_1,x_2)}^{\bot,2}(G\cdot (x_1,x_2)) =
 T_{x_1}(\tilde{J}^{-1}(\mu_1))\times T_{x_2}(\tilde{J}^{-1}(\mu_2))\,.
$$
On the other hand, using again Lemma \ref{amlemma}, we know that
\begin{align}\label{cond1ex}
    T_{(x_1,x_2)}\big(G_\mu\cdot (x_1,x_2)\big) =& T_{(x_1,x_2)}\big(G\cdot (x_1,x_2)\big) \cap
T_{(x_1,x_2)}(\widetilde{J}^{-1}(\mu)) \\\nonumber
    =&\{(\xi_N(x_1), \xi_N(x_2))\, \vert\, \xi\in \mathfrak{g}_{\mu_1}\cap \mathfrak{g}_{\mu_2}\}\,.
\end{align}

Finally,
\begin{align}
\nonumber
     T_{(x_1,x_2)}J^{-1}(\mu) \cap T_{(x_1,x_2)}^{\bot, 2}J^{-1}(\mu)
    =& \Big(T_{x_1}(\tilde{J}^{-1}(\mu_1))\times T_{x_2}(\tilde{J}^{-1}(\mu_2))\Big)
 \cap \Big(T_{x_1}(\tilde{J}^{-1}(\mu_1))\times T_{x_2}(\tilde{J}^{-1}(\mu_2))\Big)^{\bot,2}\nonumber \\
    =& \Big(T_{x_1}(\tilde{J}^{-1}(\mu_1))\times T_{x_2}(\tilde{J}^{-1}(\mu_2))\Big)
\cap \Big(T_{x_1}^{\bot}(\tilde{J}^{-1}(\mu_1))\times T_{x_2}^\bot (\tilde{J}^{-1}(\mu_2))\Big)
    \nonumber \\
    =&\Big(T_{x_1}(\tilde{J}^{-1}(\mu_1))\cap T_{x_1}^{\bot}(\tilde{J}^{-1}(\mu_1))\Big)\times
 \Big(T_{x_2}(\tilde{J}^{-1}(\mu_2))\cap T_{x_2}^\bot (\tilde{J}^{-1}(\mu_2))\Big)
    \nonumber \\
    =& T_{x_1}(G_{\mu_1}\cdot x_1)\times T_{x_2}(G_{\mu_2}\cdot x_2) =
 \{(\xi_N(x_1),\eta_N(x_2))\, \vert\, \xi\in \mathfrak{g}_{\mu_1},\eta\in \mathfrak{g}_{\mu_2}\}\,.
\label{condfin}
\end{align}

\begin{remark}
{\rm
In (\ref{condfin}) the symbol $^\bot$ denotes the symplectic orthogonal of a subspace.
Moreover, we use the following result: \emph{ If $(V,\omega)$ is a symplectic vector space,
 and $W,W'$ are two subspaces of the vector space $V$, then
 $(W\times W')^{\bot, 2} = W^\bot\times (W')^\bot$. }
%
}\end{remark}

Using (\ref{cond1ex}) and (\ref{condfin}), it follows that
$T_{(x_1,x_2)}\big(G_\mu\cdot (x_1,x_2)\big) \subset T_{(x_1,x_2)}J^{-1}(\mu) \cap
 T_{(x_1,x_2)}^{\bot, 2}J^{-1}(\mu)$, but in general these two spaces are different.
Therefore, item $(2)$ in the statement \ref{gunlemma} is not always right.
This implies that the quotient space $M_\mu=J^{-1}(\mu)/G_\mu$
 is not, in general, a polysymplectic manifold and the statement \ref{guth} is not
 true, in general (note that $T_{\pi_\mu(x_1,x_2)}M_\mu\cong\ds\frac{T_{(x_1,x_2)}J^{-1}(\mu)}{T_{(x_1,x_2)}
(G_\mu\cdot (x_1,x_2))}$ for $(x_1,x_2)\in J^{-1}(\mu)$).

As a consequence, we see that the generalization of the Marsden-Weinstein reduction theorem
 to the polysymplectic setting is not straightforward,
and some additional technical conditions must be added to the usual hypothesis.

\begin{remark}\label{exampleintro}
{\rm
 Note that the quotient vector space
$\ds\frac{T_{(x_1,x_2)}J^{-1}(\mu)}{T_{(x_1, x_2)}J^{-1}(\mu)\cap T_{(x_1, x_2)}^{\bot, 2}J^{-1}(\mu)}$
 is polysymplectic. In addition, using (\ref{condfin}),  we have that
 \[
    T_{(\pi_{\mu_1}(x_1),\pi_{\mu_2}(x_2))}\Big(\tilde{J}^{-1}(\mu_1)/G_{\mu_1}
 \times \tilde{J}^{-1}(\mu_2)/G_{\mu_2}\Big) \cong \ds\frac{T_{(x_1,x_2)}J^{-1}(\mu)}
{T_{(x_1, x_2)}J^{-1}(\mu)\cap T_{(x_1, x_2)}^{\bot, 2}J^{-1}(\mu)}
 \]
 for $(x_1,x_2)\in J^{-1}(\mu)= \tilde{J}^{-1}(\mu_1)\times \tilde{J}^{-1}(\mu_2)$,
 where $\pi_{\mu_i}\colon \tilde{J}^{-1}(\mu_i)\to \tilde{J}^{-1}(\mu_i)/G_{\mu_i}$
 is the canonical projection, $i\in \{1,2\}$.

 Thus, $\tilde{J}^{-1}(\mu_1)/G_{\mu_1} \times \tilde{J}^{-1}(\mu_2)/G_{\mu_2}$
 is a polysymplectic manifold (in fact, it is the product of the two reduced symplectic manifolds
$\tilde{J}^{-1}(\mu_1)/G_{\mu_1}$ and $\tilde{J}^{-1}(\mu_2)/G_{\mu_2}$).

 In the following Section \ref{main_reduction}, we develop a Marsden-Weinstein reduction procedure for
polysymplectic manifolds in such a way that when we apply this procedure to the polysymplectic manifold
$M=N\times N$ the resultant reduced polysymplectic manifold is just
$\tilde{J}^{-1}(\mu_1)/G_{\mu_1} \times \tilde{J}^{-1}(\mu_2)/G_{\mu_2}$ (see Theorem \ref{MW poly th} and Example \ref{ex0}).
}
\end{remark}

\section{Reduction of polysymplectic manifolds}
\label{main_reduction}

   The general setting of symplectic reduction (going back to E. Cartan) is the following (see \cite{AM-1978}, pag 298):
        \begin{quote}
            \textit{``Suppose that $M$ is a manifold and $\omega$ is a closed $2$-form on $M$; let
                \[
                    \ker\,\omega=\{v\in TM\  | \ \imath_v\omega=0\}
                \]
            the \textit{characteristic distribution} of $\omega$ and call $\omega$ regular if
            $\ker\,\omega$ is a subbundle of $TM$. In the regular case,
we note that $\ker\,\omega$ is an involutive distribution.
            By Frobenius's theorem $\ker\,\omega$ is integrable and hence it defines a foliation
$\mathcal{F}$ on $M$.
            Form the quotient space $M/\mathcal{F}$ by identification of all points on a leaf.
            Assume now that $M/\mathcal{F}$ is a manifold, the canonical projection
$M\to M/\mathcal{F}$ being a submersion.
            Then, the tangent space at a point ${\pi_\mu(x)}$ is isomorphic to $T_xM/\ker\,\omega(x)$ and hence
            $\omega$ projects on a well-defined closed, nondegenerate $2$-form on $M/\mathcal{F}$;
            that is, $M/\mathcal{F}$ is a symplectic manifold."}
        \end{quote}

    Marsden and Weinstein \cite{MW-1974} apply this general result to the case of submanifolds defined by
 the level sets of a
    Coad-equivariant momentum mapping of a given symplectic action.

    The aim of this section it to  extend these results to  polysymplectic manifolds, that is,
    we want define quotients of polysymplectic manifolds which inherit the respective structure in a
way analogous to the Marden-Weinstein reduction for a symplectic manifold.

    \subsection{Polysymplectic reduction by a submanifold}

        Using Frobenius' theorem and the fact that the family of $2$-forms associated with a polysymplectic structure are closed, we obtain the following lemma:
        \begin{lemma}
            Let $(M,\omega^1,\ldots, \omega^k)$ be a polysymplectic manifold and
 $\mathcal{S}$ be a submanifold of $M$
            with injective immersion $i\colon \mathcal{S}\to M$. If the distribution on
$\mathcal{S}$ given by \(\displaystyle\bigcap_{A=1}^k \ker\,(i^*\omega^A) \)
has constant rank then it defines a foliation $\mathcal{F}_\mathcal{S}$ on $\mathcal{S}$.
        \end{lemma}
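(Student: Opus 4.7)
The plan is to verify the two hypotheses of Frobenius's theorem for the distribution
$\mathcal{D}_\mathcal{S}=\bigcap_{A=1}^k \ker(i^*\omega^A)$ on $\mathcal{S}$. Constant rank is assumed in the statement, so the only real content is involutivity: if $X,Y$ are local sections of $\mathcal{D}_\mathcal{S}$, we must show $[X,Y]$ is also a section of $\mathcal{D}_\mathcal{S}$, i.e.\ that $i_{[X,Y]}(i^*\omega^A)=0$ for every $A=1,\ldots,k$.

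The key ingredient is that each pulled-back form $i^*\omega^A$ is closed, because pullback commutes with the exterior derivative and every $\omega^A$ is closed by definition of a polysymplectic structure. Fixing an index $A$, I would invoke the standard identity
$$\mathcal{L}_X(i^*\omega^A)=i_X d(i^*\omega^A)+d\,i_X(i^*\omega^A),$$
so that on sections of $\mathcal{D}_\mathcal{S}$ we get $\mathcal{L}_X(i^*\omega^A)=0$. Then from
$$i_{[X,Y]}(i^*\omega^A)=\mathcal{L}_X\bigl(i_Y(i^*\omega^A)\bigr)-i_Y\bigl(\mathcal{L}_X(i^*\omega^A)\bigr),$$
both terms on the right vanish: the first because $i_Y(i^*\omega^A)=0$ by hypothesis on $Y$, the second by the previous computation. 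Since this holds for every $A$, $[X,Y]$ lies in $\mathcal{D}_\mathcal{S}$, proving involutivity.

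Once involutivity and constant rank are both established, Frobenius's theorem (in its classical form) immediately supplies an integrable foliation $\mathcal{F}_\mathcal{S}$ on $\mathcal{S}$ whose tangent distribution is $\mathcal{D}_\mathcal{S}$, which is exactly what the lemma claims.

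There is no real obstacle here: the only subtlety to flag is that involutivity is a fiberwise-at-the-level-of-sections property, and one needs $\mathcal{D}_\mathcal{S}$ to be a genuine subbundle (not just a pointwise family of subspaces) in order to speak of smooth local sections and apply the Cartan formula; this is precisely guaranteed by the constant rank hypothesis. The argument does \emph{not} use nondegeneracy of the $\omega^A$ on $M$ and works verbatim for any family of closed $2$-forms whose common kernel, after pullback to $\mathcal{S}$, has constant rank.
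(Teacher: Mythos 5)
Your proof is correct and follows exactly the route the paper intends: the paper states this lemma with only the one-line justification ``Using Frobenius' theorem and the fact that the family of $2$-forms associated with a polysymplectic structure are closed,'' and your argument simply fills in the standard involutivity computation (Cartan's formula plus $\imath_{[X,Y]}=\mathcal{L}_X\circ\imath_Y-\imath_Y\circ\mathcal{L}_X$) that this sketch leaves implicit. Your closing remark that constant rank is what makes $\bigcap_{A=1}^k\ker(i^*\omega^A)$ a genuine subbundle, and that nondegeneracy of the $\omega^A$ on $M$ plays no role here, is also accurate.
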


        \begin{remark}
        {\rm
            Note that for each $x\in \mathcal{S}$, the following relations holds (see \cite{LV-2012})
            \[
                \displaystyle\bigcap_{A=1}^k \ker\,(i^*\omega^A)(x) =
T_x\mathcal{S} \cap T_x^{\bot, k}\mathcal{S}\,.
            \]
        }
        \end{remark}
        \begin{theorem}\label{reduc_submanifold}
            Let $(M,\omega^1,\ldots, \omega^k)$ be a polysymplectic manifold and let $\mathcal{S}$ be a
             submanifold of $M$ with injective immersion $i\colon \mathcal{S}\to M$. Assume that
                \begin{itemize}
                    \item The distribution $\displaystyle\bigcap_{A=1}^k \ker\,(i^*\omega^A)$ has constant rank,
                    \item The quotient space $\mathcal{S}/\mathcal{F}_\mathcal{S}$
is a manifold and the canonical projection
                    $\pi\colon \mathcal{S}\to \mathcal{S}/\mathcal{F}_\mathcal{S}$ is a submersion.
                \end{itemize}
            Then, there exists a unique polysymplectic structure
     $(\omega^1_{\mathcal{S}},\ldots, \omega^k_{\mathcal{S}})$ on $\mathcal{S}/\mathcal{F}_\mathcal{S}$
            such that, for every $A=1,\ldots, k$ the following relation holds:
                \[
                    \pi^*\omega^A_{\mathcal{S}}=i^*\omega^A\,.
                \]
        \end{theorem}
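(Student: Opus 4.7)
The plan is to adapt the classical Cartan/Marsden--Weinstein descent argument, but applied simultaneously to the whole family of closed $2$-forms $\{i^*\omega^A\}_{A=1}^k$ rather than to a single symplectic form. Let $E=\bigcap_{A=1}^k \ker(i^*\omega^A)$, which by hypothesis is a regular distribution, and which coincides with $\ker T\pi$ by construction of the foliation $\mathcal{F}_\mathcal{S}$. I would carry out the argument in the following order.

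First, for each $A$ I want to show that $i^*\omega^A$ is a basic form for the submersion $\pi\colon\mathcal{S}\to\mathcal{S}/\mathcal{F}_\mathcal{S}$, i.e.\ both $\pi$-horizontal and $\pi$-invariant. Horizontality $i_X(i^*\omega^A)=0$ for every $X\in E$ is immediate from the definition of $E$. For invariance, I would use that $i^*\omega^A$ is closed (since $d$ commutes with pullback and $\omega^A$ is closed), and then invoke Cartan's magic formula
\[
\Lie{X}{(i^*\omega^A)}=i_X\,d(i^*\omega^A)+d\,i_X(i^*\omega^A)=0
\]
for every $X\in E$. By the standard descent lemma for submersions, there exists a unique $2$-form $\omega^A_{\mathcal{S}}$ on $\mathcal{S}/\mathcal{F}_\mathcal{S}$ with $\pi^*\omega^A_{\mathcal{S}}=i^*\omega^A$, and uniqueness of the whole family follows at once from injectivity of $\pi^*$ on forms (another consequence of $\pi$ being a surjective submersion).

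Next, I would check closedness: from $\pi^*(d\omega^A_{\mathcal{S}})=d\,\pi^*\omega^A_{\mathcal{S}}=d(i^*\omega^A)=i^*d\omega^A=0$ and injectivity of $\pi^*$, we get $d\omega^A_{\mathcal{S}}=0$ for every $A$. Finally, I would verify the polysymplectic nondegeneracy $\bigcap_{A=1}^k\ker\omega^A_{\mathcal{S}}=\{0\}$ pointwise. Pick $x\in\mathcal{S}$ and a representative $v\in T_x\mathcal{S}$ of a class $[v]\in T_{\pi(x)}(\mathcal{S}/\mathcal{F}_\mathcal{S})$ with $\omega^A_{\mathcal{S}}(\pi(x))([v],[w])=0$ for all $[w]$ and all $A$. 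Since $\pi$ is a submersion, any $w'\in T_{\pi(x)}(\mathcal{S}/\mathcal{F}_\mathcal{S})$ lifts to some $w\in T_x\mathcal{S}$, so the condition translates via $\pi^*\omega^A_{\mathcal{S}}=i^*\omega^A$ into $(i^*\omega^A)(v,w)=0$ for every $w\in T_x\mathcal{S}$ and every $A$; hence $v\in\bigcap_A\ker(i^*\omega^A)=E_x=\ker T_x\pi$, forcing $[v]=0$.

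The main conceptual obstacle is not any single one of these steps, but rather recognizing the correct quotienting distribution: because the orthogonality $({}^{\bot,k})$ is not an involution on subspaces of a polysymplectic vector space (as illustrated by the counterexample in Section~\ref{poly_manifolds}), one cannot mimic the symplectic proof by working with a single form or with $({}^{\bot,k})^{\bot,k}$. What makes the descent work here is that the \emph{joint} characteristic distribution $\bigcap_A\ker(i^*\omega^A)$ is exactly the pointwise kernel of the family $\{i^*\omega^A\}$ on $\mathcal{S}$, so quotienting by it produces forms that are automatically nondegenerate in the polysymplectic sense. Everything else (closedness, descent, uniqueness) is routine once each $i^*\omega^A$ is seen to be basic for $\pi$.
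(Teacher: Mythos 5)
Your proof is correct and follows essentially the same route as the paper's: show each $i^*\omega^A$ is basic for the foliation (the paper also derives this from closedness), descend to the quotient, and verify nondegeneracy by lifting vectors through the submersion exactly as you do. The only difference is that you spell out the basicness via Cartan's formula and explicitly check closedness of the reduced forms, details the paper leaves implicit.
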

        \begin{proof}
            If $x$ is a point of $\mathcal{S}$, then the tangent space
$T_{\pi(x)}\left(\mathcal{S}/\mathcal{F}_\mathcal{S}\right)$ to
            $\mathcal{S}/\mathcal{F}_\mathcal{S}$ at the point $\pi(x)$ is isomorphic to the quotient space
            $T_x\mathcal{S}/\mathcal{F}_{\mathcal{S}}(x)$.

            Using that the $2$-form $\omega^A$ is closed, for $A\in \{1,\ldots, k\}$, we deduce that $i^*\omega^A$ is basic with respect to the foliation $\mathcal{F}_S$. This implies that every $i^*\omega^A$ will project on a well-defined $2$-form $\widetilde{\omega}^A_s$ on $S/\mathcal{F}_S$ such that $\pi^*\widetilde{\omega}^A_s=i^*\omega^A$.

%
%

            Finally, we will prove that $\ds\bigcap_{A=1}^k\ker\,\widetilde{\omega}^A_{\mathcal{S}}=0$. Let $[v_x]=T_x\pi(v_x)\in
            T_{\pi(x)}\left(\mathcal{S}/\mathcal{F_S}\right)$ be such
            that
                \[
                    \imath_{[v_x]}\widetilde{\omega}^A_\mathcal{S}(\pi(x))=0 \,.
                \]
            Furthermore, if $w_x\in T_x\mathcal{S}$ we obtain that
                \[
                    \begin{array}{lcl}
                        \left(\imath_{v_x}(i^*\omega^A)(x)\right)(w_x) &=&(i^*\omega^A)(x)(v_x,w_x)=
                            \left(\pi^*\widetilde{\omega}^A_\mathcal{S}\right)(x)(v_x,w_x) \\\noalign{\medskip}
                        &=& \widetilde{\omega}^A_\mathcal{S}(\pi(x))(T_x\pi(v_x),T_x\pi(w_x)) =
                         \left(\imath_{[v_x]}\widetilde{\omega}^A_\mathcal{S}(\pi(x))\right)([w_x])=0\,.
                    \end{array}
                \]
   Thus,
                \[
                    v_x\in \ds\bigcap_{A=1}^k \ker\,(i^*\omega^A)(x)\,,
                \]
            that is, $v_x$ is tangent to $\mathcal{F}_\mathcal{S}$ and
            then $[v_x]=T_x\pi(v_x)=0$.
        \end{proof}

\subsection{Marsden-Weinstein reduction for polysymplectic manifolds}\label{MWPR}

    In this section we apply the above general result to the
    case of  submanifolds defined  as the level sets of a $Coad^k$-equivariant
    momentum mapping of a given polysymplectic action. Our
    formulation follows the scheme of Marsden and Weinstein  \cite{MW-1974}.

Throughout this section we consider a polysymplectic Hamiltonian $G$-space
        $(M,\omega^1,\ldots, \omega^k; \Phi,J)$.

        The aim of this section is to impose conditions that
        guarantee that $J^{-1}(\mu)/G_\mu$ is a quotient manifold with a polysymplectic
        structure $(\omega_\mu^1,\ldots,\omega_\mu^k)$.

        As a consequence of a  well-known result, one obtains:
        \begin{lemma}
            Let $(M,\omega^1,\ldots, \omega^k; \Phi,J)$ be a polysymplectic Hamiltonian $G$-space.
            If $\mu=(\mu_1,\ldots, \mu_k)\in \mathfrak{g}^*\times \stackrel{k}{\ldots}\times\mathfrak{g}^*$
            is a regular value of the momentum map $J\equiv (J^1,\ldots, J^k)$
            (by Sard's theorem, it takes place for ``almost all'' $\mu$), then
                \[
                    \mathcal{S}=J^{-1}(\mu)=J^{-1}(\mu_1,\ldots,\,u_k)
                \]
            is a regular submanifold of $M$.
        \end{lemma}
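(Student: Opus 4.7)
The plan is to invoke the classical regular value theorem (the preimage/submersion theorem) for smooth maps between finite-dimensional manifolds. The momentum map is a smooth map
\[
J : M \longrightarrow \underbrace{\mathfrak{g}^{*}\times\cdots\times\mathfrak{g}^{*}}_{k}
\]
between the smooth manifold $M$ and the finite-dimensional vector space $\mathfrak{g}^{*}\times\stackrel{k}{\cdots}\times\mathfrak{g}^{*}$. By hypothesis, $\mu=(\mu_1,\ldots,\mu_k)$ is a regular value, which means that at every point $m\in J^{-1}(\mu)$ the tangent map
\[
T_{m}J : T_{m}M \longrightarrow T_{\mu}\bigl(\mathfrak{g}^{*}\times\stackrel{k}{\cdots}\times\mathfrak{g}^{*}\bigr)\cong \mathfrak{g}^{*}\times\stackrel{k}{\cdots}\times\mathfrak{g}^{*}
\]
is surjective.

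First I would observe that surjectivity of $T_{m}J$ at every $m\in J^{-1}(\mu)$ means that $J$ is a submersion along the level set $J^{-1}(\mu)$. By the standard submersion (implicit function) theorem, this forces $J^{-1}(\mu)$ to be a regular embedded submanifold of $M$ whose codimension equals the dimension of the target, i.e.
\[
\dim J^{-1}(\mu)=\dim M - k\,\dim\mathfrak{g}.
\]
Equivalently, at each $m\in J^{-1}(\mu)$ one has the identification
\[
T_{m}(J^{-1}(\mu))=\ker T_{m}J=\bigcap_{A=1}^{k}\ker T_{m}J^{A},
\]
and the surjectivity of $T_{m}J$ ensures that this kernel has the expected constant codimension, so the level set is locally cut out by $k\dim\mathfrak{g}$ independent equations.

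The reference to Sard's theorem in the statement is only to justify that the set of regular values is of full Lebesgue measure in $\mathfrak{g}^{*}\times\stackrel{k}{\cdots}\times\mathfrak{g}^{*}$, so that the hypothesis is generically nonempty; this is not needed for the conclusion itself. There is no real obstacle here: the only mild point worth mentioning is that ``regular value'' in our convention means that $T_{m}J$ is surjective at every $m$ in the preimage (so in particular $J^{-1}(\mu)$ may be empty, in which case the statement is vacuous). Once that convention is fixed, the result is an immediate application of the preimage theorem and no additional polysymplectic ingredient intervenes.
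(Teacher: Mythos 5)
Your proposal is correct and follows exactly the route the paper intends: the paper states this lemma as ``a consequence of a well-known result'' and offers no further proof, the well-known result being precisely the regular value (preimage/submersion) theorem you invoke, together with the identification $T_m(J^{-1}(\mu))=\ker T_mJ=\bigcap_{A=1}^k\ker T_mJ^A$ that the paper uses repeatedly afterwards. Your remarks on the role of Sard's theorem and on the possible emptiness of the level set are accurate clarifications of the same argument.
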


        Therefore, we can apply the general theorem of polysymplectic reduction
        (see Theorem \ref{reduc_submanifold}) by a submanifold with $\mathcal{S}=J^{-1}(\mu)$,
 and we obtain the following

        \begin{theorem}
             Let $(M,\omega^1,\ldots, \omega^k; \Phi,J)$ be a polysymplectic Hamiltonian $G$-space and
             $\mu=(\mu_1,\ldots, \mu_k)\in \mathfrak{g}^*\times \stackrel{k}{\ldots}\times\mathfrak{g}^*$
             be a regular value of the momentum map $J\equiv (J^1,\ldots, J^k)$.
We denote by $i\colon \mathcal{S}=J^{-1}(\mu)\to M$
             the canonical inclusion.
             Let us assume that:
                \begin{itemize}
     \item The distribution $\displaystyle\bigcap_{A=1}^k\ker\, (i^*\omega^A)$ has constant rank (we denote by
                    $\mathcal{F}_{J^{-1}(\mu)}$ the induced foliation),
                    \item $J^{-1}(\mu)/\mathcal{F}_{J^{-1}(\mu)}$ is a manifold and the canonical projection
                    $\pi_\mu\colon J^{-1}(\mu)\to  J^{-1}(\mu)/\mathcal{F}_{J^{-1}(\mu)}$ is a
                    submersion.
                \end{itemize}
             Then there exists an unique polysymplectic structure $(\omega_\mu^1,\ldots, \omega_\mu^k)$
 on $J^{-1}(\mu)/\mathcal{F}_{J^{-1}(\mu)}$
             such that the following relationship holds for every $A=1,\ldots, k$
                \[
                    \pi_\mu^*\omega^A_\mu =i^*\omega^A\,.
                \]
        \end{theorem}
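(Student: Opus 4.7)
The statement is phrased so that it is essentially a direct specialization of the general polysymplectic reduction by a submanifold (Theorem \ref{reduc_submanifold}) to the particular submanifold $\mathcal{S}=J^{-1}(\mu)$. The plan is therefore to verify the three ingredients needed to invoke that theorem, and then to quote it.

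First, I would check that $\mathcal{S}=J^{-1}(\mu)$ is a genuine embedded submanifold of $M$, so that the canonical inclusion $i\colon \mathcal{S}\hookrightarrow M$ is an injective immersion. This is exactly the content of the lemma stated immediately above the theorem: since $\mu$ is assumed to be a regular value of $J\equiv(J^1,\ldots,J^k)$, the preimage $J^{-1}(\mu)$ is a regular (hence embedded) submanifold of $M$, of codimension $k\dim\mathfrak{g}$. Thus the input data of Theorem \ref{reduc_submanifold} are well defined.

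Next, I would observe that the two bulleted hypotheses of the present theorem are literally the two hypotheses of Theorem \ref{reduc_submanifold} for the submanifold $\mathcal{S}=J^{-1}(\mu)$: the distribution $\bigcap_{A=1}^k \ker(i^*\omega^A)$ has constant rank on $\mathcal{S}$, so by Frobenius it integrates to the foliation $\mathcal{F}_{J^{-1}(\mu)}$; and the leaf space $J^{-1}(\mu)/\mathcal{F}_{J^{-1}(\mu)}$ is assumed to be a smooth manifold with $\pi_\mu$ a submersion. At this point Theorem \ref{reduc_submanifold} applies verbatim and produces a unique family $(\omega_\mu^1,\ldots,\omega_\mu^k)$ of closed $2$-forms on $J^{-1}(\mu)/\mathcal{F}_{J^{-1}(\mu)}$ satisfying $\pi_\mu^*\omega_\mu^A=i^*\omega^A$ and the polysymplectic non-degeneracy condition $\bigcap_{A=1}^k \ker\omega_\mu^A=0$.

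There is really no hidden difficulty in this step: closedness passes to the quotient because $d(i^*\omega^A)=i^*d\omega^A=0$ and $\pi_\mu$ is a surjective submersion, uniqueness follows from the surjectivity of $T\pi_\mu$ on fibres, and the non-degeneracy is precisely the reason the foliation was quotiented out. The only point one must be careful about, and which I would emphasise in the write-up, is that the quotient is taken by the characteristic foliation $\mathcal{F}_{J^{-1}(\mu)}$ and not by the action of the isotropy group $G_\mu$: as the counterexample in Section \ref{Gu-error} shows, the inclusion $T_m(G_\mu\cdot m)\subset T_m(J^{-1}(\mu))\cap T_m^{\bot,k}(J^{-1}(\mu))$ is in general strict, so the leaves of $\mathcal{F}_{J^{-1}(\mu)}$ need not coincide with $G_\mu$-orbits. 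The additional conditions ensuring that coincidence (and hence the existence of a $G_\mu$-quotient with an induced polysymplectic structure) will be the content of the subsequent Marsden--Weinstein-type theorem; for the present statement, however, no such extra hypothesis is needed and the proof is complete upon applying Theorem \ref{reduc_submanifold}.
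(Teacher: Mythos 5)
Your proposal is correct and follows exactly the route the paper takes: the preceding lemma guarantees that $J^{-1}(\mu)$ is a regular submanifold because $\mu$ is a regular value, and the theorem is then obtained verbatim by applying Theorem \ref{reduc_submanifold} to $\mathcal{S}=J^{-1}(\mu)$. Your closing caveat distinguishing the quotient by $\mathcal{F}_{J^{-1}(\mu)}$ from the quotient by $G_\mu$ is also consistent with the paper's subsequent discussion.
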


        Now we seek conditions, expressed in terms of the elements
        of the polysymplectic Hamiltonian $G$-space $(M,\omega^1,\ldots, \omega^k;
        \Phi,J)$, such that the two assumptions made in the previous
        theorem are satisfied. The  first point is to study the following question:
        \begin{center}
            \textit{Under what conditions does the distribution
$\displaystyle\bigcap_{A=1}^k\ker\, (i^*\omega^A)$ have constant rank?}.
        \end{center}
        Now we study this question, giving
        conditions that guarantee $\displaystyle\bigcap_{A=1}^k\ker\, (i^*\omega^A)=T_x(G_\mu \cdot x)$,
for every  $x\in J^{-1}(\mu)$, and assuming that the action of $G_\mu$ on $J^{-1}(\mu)$ is free.
In such a case, the leaves of the induced foliation $\mathcal{F}_{J^{-1}(\mu)}$
 are the orbits of the action of $G_\mu$ on $J^{-1}(\mu)$.

        \begin{lemma}\label{isotropy} Let $\mu=(\mu_1,\ldots, \mu_k)$ be a regular
        value of $J$.
            \begin{enumerate}
                \item If $G_{\mu_A}$ denotes the isotropy subgroup of $G$ under the coadjoint action
                $Coad$ at $\mu_A\in\mathfrak{g}^*$ and
                $\mathfrak{g}_{\mu_A}$ its Lie algebra, then
                    \[
                        G_\mu=G_{(\mu_1,\ldots,\mu_k)}=\displaystyle\bigcap_{A=1}^kG_{\mu_A}\quad
                        \makebox{and} \quad \mathfrak{g}_\mu=\mathfrak{g}_{(\mu_1,\ldots,\mu_k)}=
                        \displaystyle\bigcap_{A=1}^k\mathfrak{g}_{\mu_A}\,.
                    \]
                \item $G_\mu$ acts on $J^{-1}(\mu)$ and the orbit space $J^{-1}(\mu)/G_\mu$ is well-defined.
                \item For every $x\in J^{-1}(\mu)$,
                    \[
                        T_x(G_\mu\cdot x)\subseteq \displaystyle\bigcap_{A=1}^k\ker\,(i^*\omega^A)(x) \ .
                    \]
            \end{enumerate}
        \end{lemma}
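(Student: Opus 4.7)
\medskip

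\noindent\textbf{Proof proposal.} The three statements are essentially formal consequences of the definitions of $Coad^k$, of equivariance of $J$, and of the momentum map identity $i_{\xi_M}\omega^A=d\hat J^A_\xi$. My plan would be to handle them in the order given, reusing each part in the next.

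For item $(1)$, I would read the $k$-coadjoint action coordinatewise: by the very definition \eqref{coad^k}, $Coad^k_g(\mu_1,\ldots,\mu_k)=(Coad_g\mu_1,\ldots,Coad_g\mu_k)$, so $g\in G_\mu$ precisely when $Coad_g\mu_A=\mu_A$ for every $A$, i.e.\ $g\in\bigcap_{A=1}^k G_{\mu_A}$. Since each $G_{\mu_A}$ is a closed Lie subgroup of $G$, the intersection is a closed Lie subgroup with Lie algebra $\bigcap_A \mathfrak g_{\mu_A}$. The infinitesimal identification $\mathfrak g_\mu=\bigcap_A\mathfrak g_{\mu_A}$ can alternatively be proved by differentiating $Coad^k$ at the identity: $\xi\in\mathfrak g_\mu$ iff $\xi_{\mathfrak g^*\times\stackrel{k}{\ldots}\times\mathfrak g^*}(\mu)=0$, which coordinatewise is $\xi_{\mathfrak g^*}(\mu_A)=0$ for each $A$.

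For item $(2)$, the $Coad^k$-equivariance of $J$ is exactly the required ingredient: for $g\in G_\mu$ and $x\in J^{-1}(\mu)$ we get $J(\Phi_g(x))=Coad^k_g(J(x))=Coad^k_g(\mu)=\mu$, so $\Phi_g$ restricts to a diffeomorphism of $J^{-1}(\mu)$. Hence the restricted action $G_\mu\times J^{-1}(\mu)\to J^{-1}(\mu)$ is well-defined and its orbit space $J^{-1}(\mu)/G_\mu$ makes sense as a set (its manifold structure will be an additional hypothesis imposed in the reduction theorem itself).

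For item $(3)$, I would pick an arbitrary $\xi\in\mathfrak g_\mu$ and show that $\xi_M(x)$ lies in $\bigcap_{A=1}^k\ker(i^*\omega^A)(x)$. By $(2)$, $\xi_M(x)$ is already tangent to $J^{-1}(\mu)$, so the expression $(i^*\omega^A)(\xi_M(x),w)$ is defined for every $w\in T_xJ^{-1}(\mu)$. Using the momentum map identity and the fact that $w\in\ker T_x J$,
\[
(i^*\omega^A)(x)(\xi_M(x),w)=\omega^A(x)(\xi_M(x),w)=d\hat J^A_\xi(x)(w)=\langle T_xJ^A(w),\xi\rangle=0,
\]
because $T_xJ(w)=0$ implies $T_xJ^A(w)=0$ for every $A$. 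This holds for all $w\in T_xJ^{-1}(\mu)$ and all $A$, which is precisely the statement that $\xi_M(x)\in\bigcap_A\ker(i^*\omega^A)(x)$.

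None of the three steps is hard; the only point requiring care is to check that the chosen $\xi_M(x)$ is genuinely a tangent vector to $J^{-1}(\mu)$ before pulling $\omega^A$ back along $i$ — which is exactly why item $(2)$ is needed before $(3)$. The converse inclusion is of course false in general (this is precisely the failure of G\"unther's Lemma 7.5 discussed in Section \ref{Gu-error}), and will force the additional hypotheses in the Marsden--Weinstein reduction theorem to come.
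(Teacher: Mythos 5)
Your proposal is correct and follows essentially the same route as the paper's own proof: item (1) by reading $Coad^k$ coordinatewise, item (2) by the $Coad^k$-equivariance computation $J(\Phi_g(x))=Coad^k_g(\mu)=\mu$, and item (3) by combining the momentum-map identity $i_{\xi_M}\omega^A=d\hat J^A_\xi$ with the fact that tangent vectors to $J^{-1}(\mu)$ lie in $\ker T_xJ^A$ for every $A$. The only (harmless) additions are your explicit justification of the Lie-algebra identity in (1) and the observation that $\xi_M(x)$ must first be seen to be tangent to $J^{-1}(\mu)$, both of which the paper leaves implicit.
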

        \begin{proof}
            \begin{enumerate}
                \item Using (\ref{coad^k}), one obtains:
                    \[
                    \begin{array}{lcl}
                        G_\mu&=&\{g\in G \ \mid \ Coad^k_g(\mu)=\mu\}= \{g\in G \ \mid \ Coad_g(\mu_A)=
                        \mu_A\,,{\rm for}\  A=1,\ldots, k\}\\\noalign{\medskip}
                         &=& \displaystyle\bigcap_{A=1}^k\{g\in G \ \mid \ Coad_g(\mu_A)=\mu_A\} =
\displaystyle\bigcap_{A=1}^kG_{\mu_A} \ .
                    \end{array}
                    \]
                As a consequence of this identity,
 it is immediate to prove the analogous relationship among the  Lie  algebras.

                \item From the polysymplectic action $\Phi\colon
                G\times M\to M$, we define the action
                    \[
                        \begin{array}{rccl}
                            \Phi_\mu\colon & G_\mu\times J^{-1}(\mu) & \to & J^{-1}(\mu)\\\noalign{\medskip}
                             & (g,x) & \mapsto & \Phi_{\mu}(g,x)\colon = \Phi(g,x)
                        \end{array}
                    \]
                This is a well-defined map. Indeed, let
                $(g,x)\in G_\mu\times J^{-1}(\mu)\subset G\times
                M$, then as $J$ is $Coad^k$-equivariant, we have:
                    \[
                        J(\Phi_\mu(g,x))=J(\Phi(g,x))= Coad^k_g(J(x)) = Coad^k_g(\mu)=\mu\,.
                    \]
                Therefore, if $(g,x)\in G_\mu\times J^{-1}(\mu)$
                then $\Phi_\mu(g,x)\in J^{-1}(\mu)$.

                \item We consider the action $\Phi_\mu\colon G_\mu\times J^{-1}(\mu)\to J^{-1}(\mu)$.
                If $\mathfrak{g}_\mu$ is the Lie algebra of
                $G_\mu$ we have
                    \[
                        T_x(G_\mu\cdot x) =\{\xi_{J^{-1}(\mu)}(x) \ \mid \  \xi\in\mathfrak{g}_\mu\}\,.
                    \]

                If $\xi_{J^{-1}(\mu)}(x)\in T_x(G_\mu \cdot x)$, then
                $\xi_{J^{-1}(\mu)}(x)\in\displaystyle\bigcap_{A=1}^k\ker \,(i^*\omega^A)(x)$ if, and only if,
                 \[
                 (i^*\omega^A)(x)\left(\xi_{J^{-1}(\mu)}(x),X_x\right)=0 \] for every $X_x\in T_x(J^{-1}(\mu))$.
	Now, we have
                    \[
                        \begin{array}{lcl}
                            &&(i^*\omega^A)(x)\left(\xi_{J^{-1}(\mu)}(x),X_x\right) = \omega^A(x)(\xi_M(x),X_x)=
                            (\imath_{\xi_M}\omega^A)(x)(X_x)
                            = (d\hat{J}^A_\xi)(x)(X_x)=X_x(\hat{J}^A_\xi)\,.
                        \end{array}
                    \]
                But as $X_x\in T_x(J^{-1}(\mu))$, we obtain that
                    \[
                        0=T_xJ(X_x)=(T_xJ^1(X_x),\ldots, T_xJ^k(X_x))\,,
                    \]
                and thus, $T_xJ^A (X_x)=0$. Therefore, for
                $\xi\in\mathfrak{g}$ we have
                    \[
                        \left( T_xJ^A (X_x)\right)(\xi)=0\, ;
                    \]
                that is, $X_x(\hat{J}^A_\xi)=0$.
            \end{enumerate}
        \end{proof}

        From this lemma we obtain that,
            \[
                T_x(G_\mu\cdot x)\subseteq \displaystyle\bigcap_{A=1}^k\ker\,(i^*\omega^A)(x)=
T_x(J^{-1}(\mu))\cap T_x^{\bot, k}(J^{-1}(\mu))\,\quad \makebox{for every $x\in J^{-1}(\mu)$,}
            \]
        but, in general, the condition
            \begin{equation}\label{converse}
                \displaystyle\bigcap_{A=1}^k\ker\,(i^*\omega^A)(x)\subseteq T_x(G_\mu\cdot x)
            \end{equation}
        does not hold. Note that if (\ref{converse}) holds and the action of $G_\mu$ on $J^{-1}(\mu)$
is free then the distribution
        $\displaystyle\bigcap_{A=1}^k\ker\,(i^*\omega^A)$ has constant rank.
        In addition, if the action of $G_\mu$ on $J^{-1}(\mu)$ is proper, then $J^{-1}(\mu)/G_\mu$
        is a quotient manifold which admits a polysymplectic structure. In fact,
            \[
                J^{-1}(\mu)/G_\mu = J^{-1}(\mu)/\mathcal{F}_{J^{-1}(\mu)}\,.
            \]

        So, a new natural question arises:
            \begin{center}
            {\textit{Under what conditions can it be assured that
$T_x(G_\mu\cdot x)= \displaystyle\bigcap_{A=1}^k\ker\,(i^*\omega^A)(x)$,
            for every $x\in J^{-1}(\mu)$? }}
            \end{center}

        Now we give conditions that guarantee that
            \[
                T_x(G_\mu\cdot x)= \displaystyle\bigcap_{A=1}^k\ker\,(i^*\omega^A)(x) ,
 \makebox{ for every} \; x\in J^{-1}(\mu) \ ,
            \]
        which implies that  $T_x(J^{-1}(\mu))/T_x(G_\mu\cdot x)$ is a polysymplectic vector space.

        First, we recall  the following immediate result, which is fundamental in our description.
            \begin{lemma}\label{algebra}
                Let $\Pi_A\colon V\to V_A$ be $k$
                epimorphisms of real vector spaces of finite
                dimension. Assume that there exists a symplectic
                structure $\omega^A$ on $V_A$ for each index $A$ and
                $\bigcap_{A=1}^k\ker\, \Pi_A=\{0\}$, then
                $(V,\Omega_1,\ldots, \Omega_ k)$, with
                $\Omega_A=\Pi_A^*\omega^A$ is a
                polysymplectic vector space.
             \end{lemma}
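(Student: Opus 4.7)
The statement asks us to verify two things about $(V,\Omega_1,\ldots,\Omega_k)$: that each $\Omega_A$ is a skew-symmetric bilinear form on $V$, and that the family is nondegenerate in the polysymplectic sense, i.e.\ $\bigcap_{A=1}^{k}\ker\,\Omega_A=\{0\}$. The first point is automatic since the pullback of a skew-symmetric bilinear form by a linear map is skew-symmetric and bilinear, so the whole content of the proof is the nondegeneracy assertion.

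My plan is to show, for each fixed index $A$, the identity $\ker\,\Omega_A=\ker\,\Pi_A$. Indeed, writing $\Omega_A(v,w)=\omega^A(\Pi_A(v),\Pi_A(w))$, a vector $v\in V$ lies in $\ker\,\Omega_A$ precisely when $\omega^A(\Pi_A(v),\Pi_A(w))=0$ for every $w\in V$. Here I would use the surjectivity of $\Pi_A$ to replace $\Pi_A(w)$ by an arbitrary element $u\in V_A$, reducing the condition to $\omega^A(\Pi_A(v),u)=0$ for all $u\in V_A$. Then the nondegeneracy of the symplectic form $\omega^A$ on $V_A$ forces $\Pi_A(v)=0$, i.e.\ $v\in \ker\,\Pi_A$. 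The reverse inclusion $\ker\,\Pi_A\subseteq\ker\,\Omega_A$ is immediate from the definition of the pullback.

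Having established $\ker\,\Omega_A=\ker\,\Pi_A$ for each $A$, the conclusion follows at once:
\[
\bigcap_{A=1}^{k}\ker\,\Omega_A \;=\; \bigcap_{A=1}^{k}\ker\,\Pi_A \;=\; \{0\},
\]
by the hypothesis on the $\Pi_A$. Thus $(V,\Omega_1,\ldots,\Omega_k)$ satisfies the defining condition of a polysymplectic vector space. There is no real obstacle here: the only subtlety is remembering to invoke the surjectivity of each $\Pi_A$ when converting the pullback kernel into the ordinary kernel; without surjectivity one would only get $\ker\,\Omega_A\supseteq\ker\,\Pi_A$ and the argument would break down.
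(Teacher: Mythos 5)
Your proof is correct and is exactly the argument the paper has in mind: the authors state this lemma as ``immediate'' and omit the proof, and the identity $\ker\,\Omega_A=\ker\,\Pi_A$ (using surjectivity of $\Pi_A$ one way and nondegeneracy of $\omega^A$ the other) is the whole content. Your remark that surjectivity is genuinely needed for the inclusion $\ker\,\Omega_A\subseteq\ker\,\Pi_A$ is the right thing to flag.
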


        We consider again the example described in Section \ref{Gu-error}
(see Remark \ref{exampleintro}). In this example, the reduced polysymplectic manifold is the product of two
 reduced symplectic manifolds: $\tilde{J}^{-1}(\mu_1)/G_{\mu_1}$ and $\tilde{J}^{-1}(\mu_2)/G_{\mu_2}$.
Using this fact for each $(x_1,x_2)\in J^{-1}(\mu)$ we can obtain the reduced polysymplectic structure
by applying Lemma \ref{algebra} as follows
        \[
            V=T_{(\pi_{\mu_1}(x_1),\pi_{\mu_2}(x_2))}\left(\tilde{J}^{-1}(\mu_1)/G_{\mu_1}
\times \tilde{J}^{-1}(\mu_2)/G_{\mu_2}
             \right)
        \]
        and
        \[
            V_A= T_{\pi_{\mu_A}(x_A)}\left(\tilde{J}^{-1}(\mu_A)/G_{\mu_A}\right) =
T_{x_A}(\tilde{J}^{-1}(\mu_A)) / T_{x_A}(G_{\mu_A}\cdot x_A)\,.
        \]

        Observe that the vector spaces $V_A$  can be described as the quotients
        \[
            V_A=\displaystyle\frac{\left(\displaystyle
            \frac{\ker\, T_{(x_1,x_2)} J^A}{\ker\,\omega^A(x_1,x_2)}\right)}
            {\{[\xi_M(x_1,x_2)]\,\mid\,\xi\in\mathfrak{g}_{\mu_A}\}}
        \]
        where $J^A=\widetilde{J}$, for $A\in\{1,2\}$,  $\ker\,\omega^1(x_1,x_2)=\{0\}\times T_{x_2}N$
and $\ker\,\omega^2(x_1,x_2)=T_{x_1}N\times \{0\}$.

        We  now return to the general case of a polysymplectic Hamiltonian $G$-space
$(M,\omega^1,\ldots, \omega^k;\Phi,J)$ and assume that $\mu=(\mu_1,\ldots, \mu_k)$
is a regular value of the momentum map
$J\colon M\to \mathfrak{g}^*\times\stackrel{k}{\ldots}\times \mathfrak{g}^*$.
Then, using that $J$ is a  momentum map, we deduce that $\ker\,\omega^A(x)$
 is a subspace of $\ker\, T_xJ^A$. In fact, if $X\in \ker \omega^A(x)$ and $\xi\in \mathfrak{g}$,
 we have that $\{ (T_xJ^A)(X)\}(\xi)= d\widetilde{J}^A_\xi(X) =
 (\imath_{\xi_M}\omega^A)(x)(X)= -(\imath_X\omega^A)(x)(\xi_M(x))=0$.
On the other hand, since $G_{\mu_A}$ acts on $(J^A)^{-1}(\mu_A)$,
if $x\in (J^A)^{-1}(\mu_A)$, it follows that
$\{\xi_M(x)\,\vert\, \xi\in\mathfrak{g}_{\mu_A}\}$
is also a subspace of
$\ker\, T_xJ^A$. Thus, if $pr^M_A\colon T_xM\to \ds\frac{T_xM}{\ker\, \omega^A(x)}$
is the canonical projection, we have that
$pr^M_A(\{\xi_M(x)\,\vert\, \xi\in\mathfrak{g}_{\mu_A}\} =  \{[\xi_M(x)] \,\vert\, \xi\in\mathfrak{g}_{\mu_A}\}$
is a subspace of $\ds\frac{\ker T_xJ^A}{\ker\,\omega^A(x)}$.
Therefore, as in the previous example, we can consider the quotient space
        \[
                    V_A=\displaystyle\frac{\left(\displaystyle
                    \frac{\ker\, T_xJ^A }{\ker\,\omega^A(x)}\right)}
                    {\{[\xi_M(x)]\,\mid\,\xi\in\mathfrak{g}_{\mu_A}\}}
                \]

        Thus, the problem of finding conditions that
        guarantee that $T_x(G_\mu\cdot x)= $ $ \displaystyle\bigcap_{A=1}^k{\rm
        ker}\,(i^*\omega^A)(x)$ can be decomposed in two steps:
        \begin{enumerate}
             \item
             To prove that, for every $x\in J^{-1}(\mu)$, the vector space
                \[
                    V_A=\displaystyle\frac{\left(\displaystyle
                    \frac{\ker\, T_xJ^A }{\ker\,\omega^A(x)}\right)}
                    {\{[\xi_M(x)]\,\mid\,\xi\in\mathfrak{g}_{\mu_A}\}}
                \]
                is a symplectic vector space, where $[\xi_M(x)]=pr_A^M(\xi_M(x))$ and
                $pr_A^M\colon T_xM\to \displaystyle\frac{T_xM}{\ker\,\omega^A(x)}$
                is the canonical projection.
              \item
              To find conditions guaranteeing that we can define $k$ linear epimorphisms
                \[
                    \widetilde{\pi}^A_x: T_{{\pi_\mu(x)}}(J^{-1}(\mu)/G_\mu)\longrightarrow
                    \ds\frac{\left(\ds\frac{\ker\,  T_xJ^A  }{\ker\, \omega^A(x)}\right) }
{\{ [\xi_M(x)]\,\mid\,\xi\in \mathfrak{g}_{\mu_A} \} }
                \]
              such that $\displaystyle \bigcap_{A=1}^k\ker\,\widetilde{\pi}^A_x=\{ 0\}$.
\end{enumerate}
              We see that these conditions also imply that
$T_x(G_\mu\cdot x)=  \displaystyle\bigcap_{A=1}^k\ker \,(i^*\omega^A)(x)$.

\paragraph{$\bullet\; $ {\sc Step 1}}\

            As mentioned above, our aim is to prove the following proposition
            \begin{prop}\label{step1}
                Let $(M,\omega^1,\ldots,\omega^k;\Phi,J)$ be a polysymplectic Hamiltonian $G$-space and
                $\mu=(\mu_1,\ldots, \mu_k)\in \mathfrak{g}^*\times\stackrel{k}{\ldots}\times\mathfrak{g}^*$
 be a regular value of $J$,
                then for $A=1,\ldots,k$ and $x\in J^{-1}(\mu)$ we have that
                    \[
                        \displaystyle\frac{\left(\displaystyle
                        \frac{\ker\, T_xJ^A }{\ker\,\omega^A(x)}\right)}
                        {\{[\xi_M(x)]\,\mid\,\xi\in\mathfrak{g}_{\mu_A}\}}
                    \]
                is a symplectic vector space.
            \end{prop}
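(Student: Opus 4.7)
The plan is to apply, for each fixed $A$, a linear Marsden--Weinstein style reduction to $\omega^A(x)$ at the single point $x$, viewed as a presymplectic form on $T_xM$ for which $J^A$ supplies a (pre)symplectic momentum map; the $k$ different values of $A$ are handled independently, so no interaction between the components of $\bar\omega$ is needed at this step.

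First I would set $V = T_xM$, $\Omega = \omega^A(x)$, $K = \ker\omega^A(x)$, and pass to the symplectic quotient $\overline V = V/K$ with its induced symplectic form $\overline\Omega$. Writing $W = \ker T_xJ^A$ and $\mathcal O = T_x(G\cdot x) = \{\xi_M(x)\mid \xi\in\mathfrak g\}$, the momentum map identity $i_{\xi_M}\omega^A = d\hat J^A_\xi$ gives
\[
(T_xJ^A(v))(\xi) = d\hat J^A_\xi(x)(v) = \Omega(\xi_M(x),v)
\]
for all $v\in V$ and $\xi\in\mathfrak g$. Hence $W = \mathcal O^{\Omega}$, and in particular $K\subseteq W$, so $\overline W := W/K$ is a well defined subspace of $\overline V$. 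The above identity descends to the quotient and gives $\overline W = \overline{\mathcal O}^{\,\overline\Omega}$.

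Next I would compute the radical of $\overline\Omega|_{\overline W}$, which equals $\overline W \cap \overline W^{\,\overline\Omega}$. Since $\overline\Omega$ is symplectic on the finite-dimensional space $\overline V$ we have $(A^{\overline\Omega})^{\overline\Omega} = A$ for every subspace $A$; thus $\overline W^{\,\overline\Omega} = \overline{\mathcal O}$, and the radical reduces to the classes $[\xi_M(x)]$ with $\xi_M(x)\in W$. Differentiating the $Coad$-equivariance of $J^A$ (a projection of the $Coad^k$-equivariance of $J$) and using $J^A(x)=\mu_A$ yields
\[
T_xJ^A(\xi_M(x)) = \xi_{\mathfrak g^*}(\mu_A),
\]
which vanishes exactly when $\xi\in\mathfrak g_{\mu_A}$. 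Therefore the radical of $\overline\Omega|_{\overline W}$ is precisely $\{[\xi_M(x)]\mid \xi\in\mathfrak g_{\mu_A}\}$, and this subspace does lie in $\overline W$.

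Finally, I would invoke the standard linear fact that for any subspace $U$ of a symplectic vector space, the restriction of the symplectic form descends to a genuine symplectic form on $U/(U\cap U^{\perp})$. Applied to $U = \overline W\subseteq\overline V$, this delivers the required symplectic structure on
\[
V_A \;=\; \frac{\overline W}{\{[\xi_M(x)]\mid \xi\in\mathfrak g_{\mu_A}\}} \;=\; \frac{\ker T_xJ^A/\ker\omega^A(x)}{\{[\xi_M(x)]\mid \xi\in\mathfrak g_{\mu_A}\}}.
\]
I do not expect a serious obstacle: the content is purely linear-algebraic once the presymplectic quotient $\overline V$ has been formed, and the only delicate bookkeeping is verifying the two inclusions $K\subseteq W$ and $\{[\xi_M(x)]\mid \xi\in\mathfrak g_{\mu_A}\}\subseteq\overline W$, both of which follow immediately from the momentum map identity and the $Coad^k$-equivariance of $J$.
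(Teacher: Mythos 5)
Your proposal is correct and follows essentially the same route as the paper: you pass to the symplectic quotient $T_xM/\ker\omega^A(x)$, identify $\ker T_xJ^A/\ker\omega^A(x)$ as the symplectic orthogonal of the projected orbit, show via equivariance that its radical is exactly $\{[\xi_M(x)]\mid\xi\in\mathfrak g_{\mu_A}\}$, and then quotient by that radical. This is precisely the content of Lemma \ref{tildeomega}, Lemma \ref{properties} (items (1)--(4)), and the subsequent lemma on $\widetilde{\omega_{J^A}(x)}$ in the paper, merely presented in a more condensed form.
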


            The idea of the proof is to obtain
            a family of closed 2-forms in the different quotient spaces of the following diagram
            (based on   Marsden-Weinstein's reduction procedure):
            \[
 \xymatrix@R=1.3cm{   \ker \,  T_xJ^A
 {\;,\; {\omega_{J^A}(x)}}   \ar[r]^-{i^A_x}\ar[d]^-{pr^{J^A}}& T_xM\;,\; {\omega^A(x)}\ar[d]^-{pr^M_A}\\
            \ds\frac{\ker \,  T_xJ^A  }{\ker\, \omega^A(x)}
{\;,\; {\widetilde{\omega_{J^A}(x)}}} \ar[r]^-{\widetilde{i^A_x}}\ar[d]^-{\widetilde{pr_A}} &
             \ds\frac{T_xM}{\ker\, \omega^A(x)} {\;,\; {\widetilde{\omega^{A}(x)}}}\\
   \ds\frac{\left(\ds\frac{\ker \,  T_xJ^A  }{\ker\, \omega^A(x)}\right)}{ \{ [ \xi_M(x) ]  \,\mid\,
    \xi\in \mathfrak{g}_{\mu_A} \} }  {\;,\; { \omega_{\mu_A}(x) } }&}
            \]

Before proving this proposition, we first need some lemmas in which we assume the same hypothesis
as in  Proposition \ref{step1}. The first is a straightforward consequence of the definition
of a symplectic form on a vector space and the definition of $\ker\,\omega^A(x)$.

            \begin{lemma}\label{tildeomega}
                For every $A=1,\ldots, k$, there exists a unique symplectic form
                $\widetilde{\omega^A(x)}$ on $\displaystyle\frac{T_xM}{\ker\, \omega^A(x)} $ such that
                    \[
                        [pr^M_A]^*[\widetilde{\omega^A(x)}]= \omega^A(x)\,.
                    \]
            \end{lemma}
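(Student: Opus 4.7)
The plan is to directly construct $\widetilde{\omega^A(x)}$ by the only possible formula dictated by the required pullback relation, and then verify well-definedness, skew-symmetry, non-degeneracy, and uniqueness.

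First I would set, for any $[u],[v]\in T_xM/\ker\,\omega^A(x)$,
\[
\widetilde{\omega^A(x)}([u],[v]) := \omega^A(x)(u,v),
\]
where $u,v\in T_xM$ are arbitrary representatives. Well-definedness is the only non-formal point: if $u'=u+k_1$ and $v'=v+k_2$ with $k_1,k_2\in\ker\,\omega^A(x)$, then expanding $\omega^A(x)(u+k_1,v+k_2)$ by bilinearity and using that the contractions of $\omega^A(x)$ with $k_1$ and $k_2$ vanish identically shows that the value is independent of the choice of representatives. Skew-symmetry is immediate from that of $\omega^A(x)$.

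Next I would check non-degeneracy. Suppose $[u]$ is such that $\widetilde{\omega^A(x)}([u],[v])=0$ for every $[v]$. Since $pr^M_A$ is surjective, $v$ ranges over all of $T_xM$, so $\omega^A(x)(u,v)=0$ for every $v\in T_xM$; hence $u\in\ker\,\omega^A(x)$, which means exactly $[u]=0$. Being a non-degenerate skew-symmetric bilinear form on a finite-dimensional vector space, $\widetilde{\omega^A(x)}$ is a (linear) symplectic form on the quotient.

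For uniqueness, observe that if $\tau$ is any $2$-form on $T_xM/\ker\,\omega^A(x)$ with $[pr^M_A]^*\tau=\omega^A(x)$, then for all $u,v\in T_xM$ we have $\tau(pr^M_A(u),pr^M_A(v))=\omega^A(x)(u,v)=\widetilde{\omega^A(x)}(pr^M_A(u),pr^M_A(v))$, and surjectivity of $pr^M_A$ forces $\tau=\widetilde{\omega^A(x)}$. No real obstacle is expected here; the entire content is the standard linear-algebraic fact that a presymplectic form descends to a symplectic form on the quotient by its radical, and the statement is purely pointwise at $x$, so no smoothness issues enter.
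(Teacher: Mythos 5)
Your proof is correct and is precisely the standard quotient-by-the-radical argument that the paper itself invokes (it states the lemma as "a straightforward consequence of the definition of a symplectic form on a vector space and the definition of $\ker\,\omega^A(x)$" without writing out the details). Your write-up simply fills in those routine verifications — well-definedness, skew-symmetry, non-degeneracy, and uniqueness via surjectivity of $pr^M_A$ — exactly as intended.
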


    Now we consider the quotient space $\displaystyle\frac{\ker \,  T_xJ^A  }{\ker\, \omega^A(x)}$, and the
           vectorial subspaces of $\displaystyle\frac{T_xM}{\ker\,\omega^A(x)}$ defined by
             $\{[\xi_M(x)]\,\mid\,\xi\in\mathfrak{g}\}$ and $\{[\xi_M(x)]\,\mid\,\xi\in\mathfrak{g}_{\mu_A}\}$
            which satisfy the following properties:

            \begin{lemma}\label{properties}\

                \begin{enumerate}
                  \item $ \{[\xi_M(x)]\,\mid\,\xi\in\mathfrak{g}_{\mu_A}\} =
             \{[\xi_M(x)] \,\mid\,\xi\in\mathfrak{g}\} \cap \displaystyle\frac{\ker \,  T_xJ^A  }{\ker\, \omega^A(x)}$.
 \item $\displaystyle\frac{\ker \,  T_xJ^A  }{\ker\, \omega^A(x)}= \{[\xi_M(x)] \,\mid\,\xi\in\mathfrak{g}\}^\perp$,
where the symbol  $^\perp$ denotes the symplectic orthogonal in
$\displaystyle\frac{T_xM}{\ker\,\omega^A(x)}$ with
                     respect to $\widetilde{\omega^A(x)}$.
                    \item $\left[\displaystyle\frac{\ker \, T_xJ^A  }{\ker\, \omega^A(x)}\right]^\perp=
\{[\xi_M(x)] \,\mid\,\xi\in\mathfrak{g}\}$.
                    \item $ \{[\xi_M(x)]\,\mid\,\xi\in\mathfrak{g}_{\mu_A}\} =
  \displaystyle\frac{\ker \,  T_xJ^A  }{\ker\, \omega^A(x)} \cap \left[\displaystyle\frac{\ker \,  T_xJ^A  }
{\ker\, \omega^A(x)}\right]^\perp$.
                \end{enumerate}
            \end{lemma}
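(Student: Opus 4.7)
The four statements are all algebraic, and the overarching plan is to prove them in the order (2), (1), (3), (4) since each subsequent item follows readily from those already established, together with two basic identities built from the momentum map condition.

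The two preliminary observations I would record at the start are: (a) $\ker\omega^A(x)\subseteq \ker T_xJ^A$, so the quotient $\frac{\ker T_xJ^A}{\ker \omega^A(x)}$ is a well-defined subspace of $\frac{T_xM}{\ker \omega^A(x)}$; and (b) for every $\xi\in\mathfrak{g}$ one has $T_xJ^A(\xi_M(x))=\xi_{\mathfrak{g}^*}(\mu_A)$, which vanishes if and only if $\xi\in\mathfrak{g}_{\mu_A}$. Both (a) and (b) are immediate from the defining relation $i_{\xi_M}\omega^A = d\hat{J}^A_\xi$ of the momentum map together with the identity $d\hat{J}^A_\xi(v)=(T_xJ^A(v))(\xi)$ and the $Coad^k$-equivariance of $J$.

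With these in hand, the proofs run as follows. For (2), unravel the definition of $\widetilde{\omega^A(x)}$: for $v\in T_xM$, the class $[v]$ lies in $\{[\xi_M(x)]:\xi\in\mathfrak{g}\}^\perp$ iff $\omega^A(x)(\xi_M(x),v)=0$ for every $\xi\in\mathfrak{g}$, iff $d\hat{J}^A_\xi(v)=0$ for every $\xi$, iff $T_xJ^A(v)=0$; this gives (2). For (1), the inclusion $\subseteq$ is exactly observation (b): if $\xi\in\mathfrak{g}_{\mu_A}$ then $\xi_M(x)\in\ker T_xJ^A$, so $[\xi_M(x)]$ lies in both sets. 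Conversely, if $[\xi_M(x)]\in\frac{\ker T_xJ^A}{\ker\omega^A(x)}$ then by (a) a representative of $[\xi_M(x)]$ differs from $\xi_M(x)$ by an element of $\ker\omega^A(x)\subseteq\ker T_xJ^A$, hence $\xi_M(x)\in \ker T_xJ^A$, and (b) forces $\xi\in\mathfrak{g}_{\mu_A}$. Item (3) is obtained by taking symplectic orthogonals of (2) in the symplectic space $\left(\frac{T_xM}{\ker\omega^A(x)},\widetilde{\omega^A(x)}\right)$ (Lemma~\ref{tildeomega}) and using the finite-dimensional involution $W^{\perp\perp}=W$. Finally (4) is the direct substitution of (3) into the right-hand side of (1).

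I expect no genuine obstacle here: all four items reduce to the single pair of facts (a) and (b) combined with the biorthogonality in a symplectic vector space. The only point that requires a little care is to keep the bookkeeping of quotients straight, in particular to check (a) so that $\frac{\ker T_xJ^A}{\ker\omega^A(x)}$ is meaningful, and to check that $\widetilde{\omega^A(x)}$ is honestly nondegenerate on $\frac{T_xM}{\ker\omega^A(x)}$ so that $\perp\perp$ acts as the identity on subspaces — both already guaranteed by the momentum map hypothesis and by Lemma~\ref{tildeomega} respectively.
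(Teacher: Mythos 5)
Your proposal is correct and follows essentially the same route as the paper: the paper disposes of items (1) and (2) by citing the corresponding parts of Lemma 4.3.2 in Abraham--Marsden (whose arguments are exactly your observations (a) and (b) together with the unravelling of $\widetilde{\omega^A(x)}$), then obtains (3) from (2) by biorthogonality in the symplectic quotient and (4) by combining (1) and (3). You merely write out the cited arguments explicitly and reorder (1) and (2), which changes nothing of substance.
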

            \begin{proof}
                \begin{enumerate}
                                      \item
                    The proof of this item is similar to the proof of item $(i)$ of Lemma 4.3.2
                    in \cite{AM-1978}.
                    \item
 Taking into account that $\ker\,\omega^A(x)\subseteq\ker\,  T_xJ^A $, the proof of this item is similar to the
                    proof of item $(ii)$ of Lemma 4.3.2 in \cite{AM-1978}.                    \item
                    It is a consequence of $(2)$, since $\widetilde{\omega^A(x)}$ is symplectic.
                    \item
                    It is a consequence of items $(1)$ and $(3)$ of this lemma.
                \end{enumerate}
            \end{proof}

            \begin{lemma}
                Let $(M,\omega^1,\ldots,\omega^k;\Phi,J)$ be a
                polysymplectic Hamiltonian $G$-space,
                $\mu=(\mu_1,\ldots, \mu_k)\in \mathfrak{g}^*\times \stackrel{k}{\ldots}\times\mathfrak{g}^*$
           be  a regular value of $J$ and $\widetilde{\omega^A(x)}$
  the symplectic structure  on $\ds\frac{T_xM}{\ker\,\omega^A(x)}$ defined in Lemma \ref{tildeomega}.
                Then there exists a skew-symmetric bilinear form
    $\widetilde{\omega_{J^A}(x)}$ on $\displaystyle\frac{\ker \,  T_xJ^A  }{\ker\, \omega^A(x)}$ such that
                    \[
                        [pr^{J^A}]^*\widetilde{\omega_{J^A}(x)}=
                        \omega_{J^A}(x)\,,
                    \]
                where $pr^{J^A}\colon \ker\,  T_xJ^A  \to\displaystyle\frac{\ker\,  T_xJ^A } {\ker\,\omega^A(x)}$
                is the canonical projection,
 $i^A_x\colon \ker\,  T_xJ^A  \to T_xM$ is the canonical inclusion and
$\omega_{J^A}(x)\colon=(i^A_x)^*[\omega^A(x)]$.
Moreover,  taking the inclusion
                $\widetilde{i^A_x}\colon \displaystyle\frac{\ker\, T_xJ^A }{\ker\,\omega^A(x)}\longrightarrow
                 \displaystyle\frac{T_xM}{\ker\,\omega^A(x)}$,
                the following  relation holds:
                    \[
                        \widetilde{\omega_{J^A}(x)}=[\widetilde{i^A_x}]^*[\widetilde{\omega^A(x)}]
                    \]
            \end{lemma}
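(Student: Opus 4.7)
The plan is to prove that the pullback $\omega_{J^A}(x) = (i^A_x)^*\omega^A(x)$ descends to a well-defined skew-symmetric form on the quotient $\ker T_xJ^A / \ker \omega^A(x)$, and then identify this descended form with the pullback of $\widetilde{\omega^A(x)}$ along the induced injection $\widetilde{i^A_x}$. The two statements of the lemma are essentially two descriptions of the same object, so the real work is to verify that the descent is legal, after which the factorization through $\widetilde{\omega^A(x)}$ follows from the commutativity of the natural square of inclusions and projections.

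The first step is to check that $\ker \omega^A(x) \subseteq \ker T_xJ^A$, which is precisely the observation already made in the paper just before Step 1: for any $X \in \ker \omega^A(x)$ and $\xi \in \mathfrak{g}$, one has
\[
\{(T_xJ^A)(X)\}(\xi) = d\hat{J}^A_\xi(x)(X) = (\imath_{\xi_M}\omega^A)(x)(X) = -\omega^A(x)(X,\xi_M(x)) = 0,
\]
so the inclusion $\widetilde{i^A_x}$ of quotient spaces makes sense. Next, for $v \in \ker\omega^A(x)$ (viewed as a subspace of $\ker T_xJ^A$) and any $w \in \ker T_xJ^A$,
\[
\omega_{J^A}(x)(v,w) = \omega^A(x)(v,w) = 0,
\]
and by skew-symmetry the same holds with $v$ in the second slot. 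Hence $\ker \omega^A(x)$ lies in the radical of $\omega_{J^A}(x)$, and since $pr^{J^A}$ is a surjection of vector spaces with kernel exactly $\ker \omega^A(x)$, standard linear algebra yields a unique skew-symmetric bilinear form $\widetilde{\omega_{J^A}(x)}$ on $\ker T_xJ^A / \ker \omega^A(x)$ satisfying $[pr^{J^A}]^*\widetilde{\omega_{J^A}(x)} = \omega_{J^A}(x)$.

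For the second assertion, I would exploit the commutative diagram already drawn in the paper: by construction $pr^M_A \circ i^A_x = \widetilde{i^A_x} \circ pr^{J^A}$. Pulling back $\widetilde{\omega^A(x)}$ through both compositions gives
\[
[pr^{J^A}]^*[\widetilde{i^A_x}]^*[\widetilde{\omega^A(x)}] = [i^A_x]^*[pr^M_A]^*[\widetilde{\omega^A(x)}] = [i^A_x]^*\omega^A(x) = \omega_{J^A}(x),
\]
using Lemma \ref{tildeomega} in the middle equality. By the uniqueness part of the previous paragraph, and since $[pr^{J^A}]^*$ is injective on bilinear forms on the quotient, we conclude $\widetilde{\omega_{J^A}(x)} = [\widetilde{i^A_x}]^*[\widetilde{\omega^A(x)}]$.

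There is no real obstacle here; the statement is a linear-algebraic descent lemma, and the only nontrivial ingredient is the inclusion $\ker \omega^A(x) \subseteq \ker T_xJ^A$, which comes directly from the defining property of the momentum map. Note however that the form $\widetilde{\omega_{J^A}(x)}$ need not be symplectic at this stage, since $\widetilde{i^A_x}$ may have image of strictly smaller dimension than the complement required for non-degeneracy of the restriction; establishing non-degeneracy is exactly the content of the subsequent Proposition \ref{step1}, for which this lemma is only a preparatory step.
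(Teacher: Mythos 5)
Your proof is correct and follows essentially the same route as the paper's: establish $\ker\,\omega^A(x)\subseteq\ker\,\omega_{J^A}(x)$ so that the form descends to the quotient, then use the commutativity $pr^M_A\circ i^A_x=\widetilde{i^A_x}\circ pr^{J^A}$ together with the injectivity of $[pr^{J^A}]^*$ to identify the descended form with $[\widetilde{i^A_x}]^*[\widetilde{\omega^A(x)}]$. Your closing remark that $\widetilde{\omega_{J^A}(x)}$ need not be nondegenerate at this stage is also accurate and consistent with the paper's subsequent development.
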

            \begin{proof}
                Consider the $2$-form on $\ker\,  T_xJ^A $ defined by
                    \[
                        \omega_{J^A}(x) = (i^A_x)^* [\omega^A(x)] \ ;
                    \]
                that is, if $v_x,w_x\in \ker\,  T_xJ^A $ then
                    \[
                        \omega_{J^A}(x)(v_x,w_x)=\omega^A(x)(i^A_x(v_x),i^A_x(w_x))=\omega^A(x)(v_x,w_x)\,.
                    \]
                Taking into account that $ \ker\,\omega^A(x)\subseteq \ker\, T_xJ^A$, it is easy to prove that
                    \begin{equation}\label{ker}
                        \ker\,\omega^A(x)\subseteq \ker\,
                        \omega_{J^A}(x)\,.
                    \end{equation}
                As (\ref{ker}) holds, $\omega_{J^A}(x)$ induces a
                well-defined $2$-form $\widetilde{\omega_{J^A}(x)}$
                on the vector space $\displaystyle\frac{\ker\, T_xJ^A }
                {\ker\,\omega^A(x)}$.  Furthermore, it is clear that
$[pr^{J^A}]^*\widetilde{\omega_{J^A}(x)}=\omega_{J^A}(x)$.
                Moreover, $\widetilde{\omega_{J^A}(x)}$ is the
                restriction of $\widetilde{\omega^A(x)}$ to the
                subspace $\displaystyle\frac{\ker\,  T_xJ^A }
                {\ker\,\omega^A(x)}$. Indeed, by definition,
                $\widetilde{\omega_{J^A}(x)}$ is characterized by $[pr^{J^A}]^*\widetilde{\omega_{J^A}(x)}=
                        \omega_{J^A}(x)$. In addition, we have that
                    \[\begin{array}{lcl}
                        \omega_{J^A}(x)=(i^A_x)^*(\omega^A(x))&=&
                        (i^A_x)^*\left((pr^M_A)^*\widetilde{\omega^A(x)}
                        \right) = (pr^M_A\circ
                        i^A_x)^*(\widetilde{\omega^A(x)})\\\noalign{\medskip}
                        &=& (\widetilde{i^A_x}\circ\, pr^{J^A})^*
                        (\widetilde{\omega^A(x)}) =
                        (pr^{J^A})^*\left((\widetilde{i^A_x})^*(\widetilde{\omega^A(x)})\right)\, ,
                    \end{array}\]
                and then $\widetilde{\omega_{J^A}(x)}=(\widetilde{i^A_x})^*(\widetilde{\omega^A(x)})$.
            \end{proof}

            Now, as a consequence of the above lemmas, we are able to prove Proposition  \ref{step1}.
            \begin{proof}[\textit{Proof of Proposition \ref{step1}}]
                We have that
                $\{[\xi_M(x)]\,\mid\,\xi\in\mathfrak{g}_{\mu_A}\}$
                is a subspace of
                $\displaystyle\frac{\ker \,  T_xJ^A  }{\ker\,
                \omega^A(x)}$. Then, we can consider the quotient
                vector space $\displaystyle
                        \displaystyle\frac{\left(\displaystyle
                        \frac{\ker\, T_xJ^A }{\ker\,\omega^A(x)}\right)}
                        {\{[\xi_M(x)]\,\mid\,\xi\in\mathfrak{g}_{\mu_A}\}}$,
                with canonical projection
                $$
                \widetilde{pr_A}\colon \displaystyle\frac{\ker \, T_xJ^A  }{\ker\,
                \omega^A(x)}\longrightarrow \displaystyle\frac{\left(\displaystyle
                        \frac{\ker\, T_xJ^A }{\ker\,\omega^A(x)}\right)}
                        {\{[\xi_M(x)]\,\mid\,\xi\in\mathfrak{g}_{\mu_A}\}} \ .
	$$
                Now, using  item $(4)$ in Lemma \ref{properties},
                it is easy to prove that
                $\widetilde{\omega_{J^A}(x)}$ induces a well-defined
                non-degenerate  $2$-form $\omega_{\mu_A}(x)$ on $\displaystyle\frac{\left(\displaystyle
                        \frac{\ker\, T_xJ^A }{\ker\,\omega^A(x)}\right)}
                        {\{[\xi_M(x)]\,\mid\,\xi\in\mathfrak{g}_{\mu_A}\}}$
                        given by
                    \[
                        \omega_{\mu_A}(x)(\big[[v_x]\big],\big[[w_x]\big])
                        \colon
                        =\widetilde{\omega_{J^A}(x)}
                        ([v_x],[w_x])\,,\quad \makebox{for\ } [v_x],[w_x]\in \displaystyle
                        \frac{\ker\,
                         T_xJ^A }{\ker\,\omega^A(x)}\,.
                    \]
            \end{proof}

\paragraph{$\bullet\; $ {\sc Step 2}}\

        In this step  we assume that the action of $G_\mu$ on $J^{-1}(\mu)$ is free and proper, and
        thus $J^{-1}(\mu)/G_\mu$ is a quotient manifold. Then we can define $k$ linear morphisms
                \[
                    \widetilde{\pi}^A_x: T_{{\pi_\mu(x)}}(J^{-1}(\mu)/G_\mu) \longrightarrow
                    \ds\frac{\left(\ds\frac{\ker\,  T_xJ^A  }{\ker\, \omega^A(x)}\right) }{\{ [\xi_M(x)]
\,\mid\, \xi\in \mathfrak{g}_{\mu_A} \} }
                \]
       In fact:

        \begin{prop}\label{tildepi}
            Let $(M,\omega^1,\ldots,\omega^k;\Phi,J)$ be a polysymplectic Hamiltonian $G$-space and let
            $\mu=(\mu_1,\ldots, \mu_k)\in \mathfrak{g}^*\times\stackrel{k}{\ldots}\times\mathfrak{g}^*$
be a regular value of $J$.
            Suppose that $G_\mu$ acts freely and properly on $J^{-1}(\mu)$, then:
            \begin{enumerate}
                \item
                For every $x\in J^{-1}(\mu)$,
  $T_{\pi_{\mu}(x)}(J^{-1}(\mu)/G_\mu)\equiv \displaystyle\frac{T_x(J^{-1}(\mu))}{T_x(G_\mu\cdot x)}\equiv
   \displaystyle\frac{\bigcap_{A=1}^k\ker\, T_xJ^A }{\{\xi_{J^{-1}(\mu)}(x)\,\mid\,\xi\in \mathfrak{g}_\mu\}}$\,.
                \item
             There exists a linear map $\widetilde{\pi}^A_x$
                 between the quotient vector spaces
                 $T_{{\pi_\mu(x)}}(J^{-1}(\mu)/G_\mu)\equiv\ds\frac{T_x(J^{-1}(\mu))} {T_x(G_\mu\cdot x)}$ and
                 $\ds\frac{\left(\ds\frac{\ker\,  T_xJ^A  }{\ker\, \omega^A(x)}\right) }{\{ [\xi_M(x)]
\,\mid\, \xi\in \mathfrak{g}_{\mu_A} \} }$, for every $A=1,\ldots, k$.
                \end{enumerate}
        \end{prop}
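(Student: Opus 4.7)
The plan is twofold: item (1) is a direct application of the quotient manifold theorem together with the definition of a regular value, while item (2) is obtained by factoring a natural composition of projections through the quotient by $T_x(G_\mu\cdot x)$.

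For (1), I would invoke that, since $G_\mu$ acts freely and properly on $J^{-1}(\mu)$, the orbit space $J^{-1}(\mu)/G_\mu$ carries a unique smooth structure making $\pi_\mu$ a surjective submersion whose vertical kernel at $x$ equals the orbit tangent $T_x(G_\mu\cdot x)=\{\xi_{J^{-1}(\mu)}(x)\mid \xi\in \mathfrak{g}_\mu\}$; this yields the first identification. The second follows from the regularity assumption on $\mu$: the map $T_xJ$ is surjective onto $\mathfrak{g}^*\times\stackrel{k}{\ldots}\times \mathfrak{g}^*$, hence by the implicit function theorem $T_x(J^{-1}(\mu))=\ker\, T_xJ = \bigcap_{A=1}^k \ker\, T_xJ^A$, and the infinitesimal generators of $G_\mu$ acting on $J^{-1}(\mu)$ are simply the restrictions of the corresponding vectors $\xi_M(x)$.

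For (2), I would construct $\widetilde{\pi}^A_x$ as the factorization through $T_x(G_\mu\cdot x)$ of the composition $\widetilde{pr_A}\circ pr^{J^A}\circ \iota$, where $\iota\colon T_x(J^{-1}(\mu))=\bigcap_{B=1}^k\ker\, T_xJ^B \hookrightarrow \ker\, T_xJ^A$ is the inclusion coming from item (1), $pr^{J^A}$ is the canonical projection onto $\ker\, T_xJ^A/\ker\,\omega^A(x)$, and $\widetilde{pr_A}$ is the further projection onto $(\ker\, T_xJ^A/\ker\,\omega^A(x))/\{[\xi_M(x)]\mid \xi\in\mathfrak{g}_{\mu_A}\}$. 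The only point to check is that this composition annihilates $T_x(G_\mu\cdot x)$. Any vector in $T_x(G_\mu\cdot x)$ has the form $\xi_M(x)$ with $\xi\in\mathfrak{g}_\mu$, and Lemma \ref{isotropy}(1) gives $\mathfrak{g}_\mu=\bigcap_{B=1}^k\mathfrak{g}_{\mu_B}\subseteq \mathfrak{g}_{\mu_A}$, so $[\xi_M(x)]$ lies in the subspace that $\widetilde{pr_A}$ quotients out. The universal property of quotient vector spaces then delivers the desired linear map.

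I do not anticipate a genuine obstacle within this proposition itself: both items are rather formal consequences of the quotient manifold theorem, the implicit function theorem, and the inclusion $\mathfrak{g}_\mu\subseteq\mathfrak{g}_{\mu_A}$ already established in Lemma \ref{isotropy}. The real difficulty of the reduction program lies one step further, namely in proving that $\bigcap_{A=1}^k\ker\,\widetilde{\pi}^A_x=\{0\}$ under appropriate additional hypotheses, so that Lemma \ref{algebra} produces a polysymplectic structure on $J^{-1}(\mu)/G_\mu$; it is at that stage that the subtleties exposed by the counterexample to G\"{u}nther's Lemma 7.5 will have to be confronted.
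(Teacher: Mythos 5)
Your proof is correct and follows essentially the same route as the paper: item (1) via the quotient manifold theorem plus $T_x(J^{-1}(\mu))=\bigcap_{A}\ker T_xJ^A$, and item (2) by factoring the composition $\widetilde{pr_A}\circ pr^{J^A}\circ j_A$ through $T_x(G_\mu\cdot x)$, using $\mathfrak{g}_\mu=\bigcap_B\mathfrak{g}_{\mu_B}\subseteq\mathfrak{g}_{\mu_A}$ to see that the orbit directions land in the subspace being quotiented out. The only cosmetic difference is that the paper routes this last inclusion through Lemma \ref{properties} (the symplectic-orthogonal characterization of $\{[\xi_M(x)]\mid\xi\in\mathfrak{g}_{\mu_A}\}$), whereas your direct argument via the inclusion of isotropy algebras suffices.
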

        \begin{proof}
            \begin{enumerate}
                \item Using the fact that  $T_x(J^{-1}(\mu))=\bigcap_{A=1}^k\ker\, T_xJ^A $, we deduce the result.

                \item
                As $T_x(J^{-1}(\mu))=\bigcap_{B=1}^k\ker\, T_xJ^B$, then,
                for every $A=1,\ldots, k$ we have that $T_x(J^{-1}(\mu))\subseteq \ker\,  T_xJ^A $
                and therefore we can consider the composition
                    \[
   \xymatrix{ T_x(J^{-1}(\mu))\ar[r]^-{j_A} \ar@/^{9mm}/[rrr]^{\pi^A_x} & \ker\, T_xJ^A  \ar[rr]^-{pr^{J^A}}& &
                        \displaystyle\frac{\ker\,  T_xJ^A }{\ker\, \omega^A(x)} } \ .
                    \]

                    Moreover, as
                    $\mathfrak{g}_\mu=\bigcap_{A=1}^k\mathfrak{g}_{\mu_A}$
                    (see item $(1)$ in Lemma \ref{isotropy}), we have
                        \[
                            \pi^A_x\left(T_x(G_\mu\cdot x)\right) \subseteq
 \{[\xi_M(x)]\,\mid\,\xi\in\mathfrak{g}_{\mu_A}\}=\Big[\ds\frac{\ker\, T_xJ^A}{\ker\,\omega^A(x)}\Big]^\bot\,.
                        \]
                    Therefore (see Lemma \ref{properties}),
                        \[
                            \pi^A_x(\xi_{J^{-1}(\mu)}(x))\in \displaystyle\frac{\ker\, T_xJ^A }{\ker\, \omega^A(x)} \cap
 \left[\displaystyle\frac{\ker \, T_xJ^A  }{\ker\, \omega^A(x)}\right]^\perp=
 \{[\xi_M(x)]\,\mid\,\xi\in\mathfrak{g}_{\mu_A}\}\,.
                        \]
                     Hence, $\pi^A_x$ induces a well-defined linear map
                        \[
                        \widetilde{\pi}^A_x: \ds\frac{T_x(J^{-1}(\mu))} {T_x(G_\mu\cdot x)}\longrightarrow
\ds\frac{\left(\ds\frac{\ker\,  T_xJ^A  }{\ker\, \omega^A(x)}\right)}{\{ [\xi_M(x)]/\xi\in \mathfrak{g}_{\mu_A} \} }\ .
                        \]
            \end{enumerate}
        \end{proof}

	Further results require to prove the following lemma:
	
        \begin{lemma}\label{omegamua}
            Let $(M,\omega^1,\ldots, \omega^k; \Phi,J)$ be a polysymplectic Hamiltonian $G$-space
            and let $\mu=(\mu_1,\ldots, \mu_k)\in \mathfrak{g}^*\times \stackrel{k}{\ldots}\times\mathfrak{g}^*$
            be a regular value of  $J\equiv (J^1,\ldots, J^k)$.
            Let $i\colon J^{-1}(\mu)\to M$ be the canonical
            inclusion. Assume that $G_\mu$ acts freely and properly on
            $J^{-1}(\mu)$.

            For every $A=1,\ldots, k$, the $2$-form $i^*\omega^A$ on
            $J^{-1}(\mu)$ induces a closed  $2$-form $\omega^A_\mu$
             on $ J^{-1}(\mu)/G_\mu$ which satisfies the following properties:
                \begin{enumerate}
                    \item $\pi_\mu^*\omega^A_\mu=i^*\omega^A$,
where $\pi_\mu\colon J^{-1}(\mu) \to J^{-1}(\mu)/G_\mu$ is the canonical projection.
                    \item If $x\in J^{-1}(\mu)$ then $[\widetilde{\pi}^A_x]^*(\omega_{\mu_A}(x))=
\omega^A_\mu(\pi_\mu(x))$
                \end{enumerate}
        \end{lemma}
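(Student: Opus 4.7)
My plan is to descend $i^{*}\omega^{A}$ to the quotient $J^{-1}(\mu)/G_\mu$ by the usual ``basic form'' argument, and then identify the pointwise fibrewise expression of the descended form with the $\omega_{\mu_A}(x)$ constructed in Step 1. Since the action of $G_\mu$ on $J^{-1}(\mu)$ is free and proper, the canonical projection $\pi_\mu\colon J^{-1}(\mu)\to J^{-1}(\mu)/G_\mu$ is a principal $G_\mu$-bundle submersion, so a $p$-form on $J^{-1}(\mu)$ descends to $J^{-1}(\mu)/G_\mu$ if and only if it is $G_\mu$-invariant and horizontal with respect to the action of $\mathfrak{g}_\mu$.

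For invariance, given $g\in G_\mu$, the restricted action preserves $J^{-1}(\mu)$ (by Lemma \ref{isotropy}) and $\Phi_g$ is polysymplectic, so
\[
\Phi_g^{*}(i^{*}\omega^{A}) = i^{*}(\Phi_g^{*}\omega^{A}) = i^{*}\omega^{A}.
\]
For horizontality, let $\xi\in\mathfrak{g}_\mu$. Then $\xi_{J^{-1}(\mu)}$ is $i$-related to $\xi_M$, so
\[
\imath_{\xi_{J^{-1}(\mu)}}(i^{*}\omega^{A}) = i^{*}(\imath_{\xi_M}\omega^{A}) = i^{*}(d\hat{J}^{A}_\xi) = d(i^{*}\hat{J}^{A}_\xi).
\]
But on $J^{-1}(\mu)$ the function $\hat{J}^{A}_\xi$ is identically equal to the constant $\mu_A(\xi)$, so $i^{*}\hat{J}^{A}_\xi$ is constant and its differential vanishes. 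Therefore a unique $2$-form $\omega^{A}_\mu$ on $J^{-1}(\mu)/G_\mu$ exists with $\pi_\mu^{*}\omega^{A}_\mu = i^{*}\omega^{A}$, proving (1). Closedness of $\omega^A_\mu$ then follows from $\pi_\mu^{*}(d\omega^{A}_\mu)=d(\pi_\mu^{*}\omega^{A}_\mu)=d(i^{*}\omega^{A})=i^{*}(d\omega^{A})=0$ and the fact that $\pi_\mu^{*}$ is injective on forms (since $\pi_\mu$ is a surjective submersion).

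For item (2), I will check the identity $[\widetilde{\pi}^{A}_x]^{*}(\omega_{\mu_A}(x))=\omega^{A}_\mu(\pi_\mu(x))$ pointwise by unwinding the definitions in the commutative diagram displayed before Lemma \ref{tildeomega}. Take $v,w\in T_x(J^{-1}(\mu))\subseteq \ker T_xJ^{A}$; their classes in $T_{\pi_\mu(x)}(J^{-1}(\mu)/G_\mu)\cong T_x(J^{-1}(\mu))/T_x(G_\mu\cdot x)$ are denoted $[v],[w]$. By (1),
\[
\omega^{A}_\mu(\pi_\mu(x))([v],[w]) = (i^{*}\omega^{A})(x)(v,w) = \omega^{A}(x)(v,w).
\]
On the other hand, by construction $\widetilde{\pi}^{A}_x[v]=\widetilde{pr_A}(pr^{J^A}(v))$ and analogously for $w$, so using the defining identities of $\omega_{\mu_A}(x)$ and $\widetilde{\omega_{J^A}(x)}$ in the proof of Proposition \ref{step1},
\[
\omega_{\mu_A}(x)\bigl(\widetilde{\pi}^{A}_x[v],\widetilde{\pi}^{A}_x[w]\bigr) = \widetilde{\omega_{J^A}(x)}\bigl(pr^{J^A}(v),pr^{J^A}(w)\bigr) = \omega_{J^A}(x)(v,w) = \omega^{A}(x)(v,w),
\]
which coincides with the previous expression.

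The only genuinely delicate point is the horizontality check, since one must verify that $\xi_M$ is tangent to $J^{-1}(\mu)$ precisely for $\xi\in\mathfrak{g}_\mu$; this is exactly what $Coad^{k}$-equivariance of $J$ supplies, and everything else amounts to bookkeeping along the quotient diagram. Descent of closedness and the compatibility with $\omega_{\mu_A}$ are then formal consequences.
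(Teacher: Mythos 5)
Your proof is correct and follows essentially the same route as the paper: descend $i^*\omega^A$ via the basic-form criterion for the principal $G_\mu$-bundle $\pi_\mu$ (horizontality coming from $\imath_{\xi_{J^{-1}(\mu)}}(i^*\omega^A)=d(i^*\hat{J}^A_\xi)=0$), and verify item (2) by unwinding the same quotient diagram using $\widetilde{\pi}^A_x\circ T_x\pi_\mu=\widetilde{pr_A}\circ\pi^A_x$. The only (harmless) difference is that you establish $G_\mu$-invariance directly from $\Phi_g^*\omega^A=\omega^A$, whereas the paper deduces basicness from horizontality together with closedness of $i^*\omega^A$; your version is in fact slightly more careful when $G_\mu$ is not connected.
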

        \begin{proof}
        \begin{enumerate}
        \item If $x\in J^{-1}(\mu)$ we have that $T_x(G_\mu\cdot x)\subseteq \bigcap_{A=1}^k\ker\,
            (i^*\omega^A)(x)$, (see item $\mathbf{(3)}$ in lemma
            \ref{isotropy}) and, thus, $\imath_{\xi_{J^{-1}(\mu)}}(i^*\omega^A)=0$,
for $\xi\in \mathfrak{g}_\mu$. In addition, using that $i^*\omega^A$ is  a closed $2$-form,
we deduce that $i^*\omega^A$ is $\pi_\mu$-basic.
 Therefore, there exists a unique $2$-form $\omega^A_\mu$ on $J^{-1}(\mu)/G_\mu$
such that $\pi_\mu^*\omega^A_\mu=i^*\omega^A$. Moreover, since $\omega^A$ is a closed $2$-form, we have that $\omega^A_\mu$ is a closed $2$-form.

%
%

            \item If $[v_x]=T_x\pi_\mu(v_x)$  denotes
            the corresponding equivalence class in
            $\displaystyle\frac{T_x(J^{-1}(\mu))}{T_x(G_\mu\cdot x)}$ of
            $v_x\in T_x(J^{-1}(\mu))$, then
                \[\begin{array}{lcl}
                    \left([\widetilde{\pi}^A_x]^*(\omega_{\mu_A}(x))\right)([v_x],[w_x])
 &=& \left([\widetilde{\pi}^A_x]^*(\omega_{\mu_A}(x))\right)(T_x\pi_\mu(v_x),T_x\pi_\mu(w_x))\\
\noalign{\medskip}
                    &=& \omega_{\mu_A}(x)\left((\widetilde{\pi}^A_x\circ T_x\pi_\mu)(v_x),
                    (\widetilde{\pi}^A_x\circ T_x\pi_\mu)(w_x)\right)\\\noalign{\medskip}
                    &=& \omega_{\mu_A}(x)\left((\widetilde{pr_A}\circ \pi^A_x)(v_x),
(\widetilde{pr_A}\circ \pi^A_x)(w_x)\right)\\\noalign{\medskip}
                    &=& \widetilde{\omega_{J^A}(x)}(\pi^A_x(v_x),\pi^A_x(w_x))\\\noalign{\medskip}
                    &=& [\pi^A_x]^* \widetilde{\omega_{J^A}(x)}(v_x,w_x)=(i^*\omega^A)(x)(v_x,w_x)
                    \\\noalign{\medskip}
                    &=& \omega^A_\mu(\pi_\mu(x))([v_x],[w_x])
                    \,,
                \end{array}\]
            where we have used that
            $\widetilde{\pi}^A_x\circ T_x\pi_\mu=\widetilde{pr_A}\circ \pi^A_x$ and that
            $[\pi^A_x]^* \widetilde{\omega_{J^A}(x)}=(i^*\omega^A)(x)$. The following diagram illustrates the situation:
            \[
                \xymatrix{
                (T_x(J^{-1}(\mu)), i^*\omega^A(x))
                \ar[r]^-{j_A}
                \ar[d]_{T_x\pi_\mu}
                \ar[dr]^-{\pi^A_x}
                &
                (\ker T_xJ^A, \omega_{J^A}(x))
                \ar[r]^-{i^A_x}
                \ar[d]^-{pr^{J^A}}
                &
                (T_xM,\omega^A(x))
                \ar[d]
                \\
                \left(\displaystyle\frac{T_x(J^{-1}(\mu))}{T_x(G_\mu\cdot x)},\omega^A_\mu(\pi_\mu(x))\right)
                \ar[r]
                \ar[rd]^-{\widetilde{\pi}^A_x}
                &
                \left(\displaystyle\frac{\ker T_xJ^A}{\ker \omega^A(x)}, \widetilde{\omega_{J^A}(x)}\right)
                \ar[d]^-{\widetilde{pr_A}}
                \ar[r]^-{\widetilde{i}^A_x}
                &
                \left(\displaystyle\frac{T_xM}{\ker\omega^A(x)},\widetilde{\omega^A(x)}\right)
                \\
                &
                \left(\ds\frac{\left(\ds\frac{\ker\,  T_xJ^A  }{\ker\, \omega^A(x)}\right)}{\{ [\xi_M(x)]/\xi\in \mathfrak{g}_{\mu_A} \} }, \omega_{\mu_A}(x)\right)
                &
                }
            \]

            \end{enumerate}
        \end{proof}

        \begin{prop}\label{step2}
            For $A=1,\ldots, k$, let $\widetilde{\pi}^A_x$ be the linear maps
            defined in Proposition \ref{tildepi}. If every
            $\widetilde{\pi}^A_x$ is an epimorphism and
$\displaystyle\bigcap_{A=1}^k\ker\,\widetilde{\pi}^A_x=\{0\}$, for every $x\in J^{-1}(\mu)$,
then $(\omega^1_\mu,\ldots,\omega^k_\mu)$ is a polysymplectic
            structure on $J^{-1}(\mu)/G_\mu$, which satisfies $\pi_\mu^*\omega^A_\mu = i^*\omega^A$
for every $A$.
        \end{prop}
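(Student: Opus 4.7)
The plan is to assemble the proposition as a direct application of Lemma \ref{algebra} pointwise, using Lemma \ref{omegamua} to produce the global $2$-forms and Proposition \ref{step1} to supply the symplectic targets.

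First, I would invoke Lemma \ref{omegamua} to obtain, for each $A=1,\ldots,k$, a well-defined closed $2$-form $\omega^A_\mu$ on $J^{-1}(\mu)/G_\mu$ satisfying $\pi_\mu^*\omega^A_\mu = i^*\omega^A$ together with the pointwise identity $[\widetilde{\pi}^A_x]^*(\omega_{\mu_A}(x)) = \omega^A_\mu(\pi_\mu(x))$. Closedness of every $\omega^A_\mu$ is already given, so the only thing left to check is the polysymplectic nondegeneracy condition $\bigcap_{A=1}^k \ker\,\omega^A_\mu(\pi_\mu(x)) = \{0\}$ at every point $\pi_\mu(x) \in J^{-1}(\mu)/G_\mu$.

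Next, I would fix a point $x\in J^{-1}(\mu)$, set $V = T_{\pi_\mu(x)}(J^{-1}(\mu)/G_\mu)$ and, for each $A$, set
\[
V_A = \ds\frac{\left(\ds\frac{\ker\, T_xJ^A}{\ker\,\omega^A(x)}\right)}{\{[\xi_M(x)]\mid \xi\in\mathfrak{g}_{\mu_A}\}}\,.
\]
By Proposition \ref{step1}, each $(V_A,\omega_{\mu_A}(x))$ is a symplectic vector space. By the hypothesis of the current proposition, the linear maps $\widetilde{\pi}^A_x: V\to V_A$ of Proposition \ref{tildepi} are epimorphisms satisfying $\bigcap_{A=1}^k\ker\,\widetilde{\pi}^A_x = \{0\}$. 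This is exactly the configuration required by Lemma \ref{algebra}, which then asserts that the pulled-back forms $[\widetilde{\pi}^A_x]^*\omega_{\mu_A}(x)$ endow $V$ with a polysymplectic vector-space structure.

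Finally, I would identify these pullbacks with $\omega^A_\mu(\pi_\mu(x))$ through the formula of Lemma \ref{omegamua}(2) to conclude that $\bigcap_{A=1}^k \ker\,\omega^A_\mu(\pi_\mu(x)) = \{0\}$; since this holds for every $x\in J^{-1}(\mu)$, the family $(\omega^1_\mu,\ldots,\omega^k_\mu)$ is indeed a polysymplectic structure on $J^{-1}(\mu)/G_\mu$, and the relation $\pi_\mu^*\omega^A_\mu = i^*\omega^A$ is inherited from Lemma \ref{omegamua}(1). There is essentially no hard step: all the analytic and algebraic work has been absorbed into Proposition \ref{step1}, Proposition \ref{tildepi}, Lemma \ref{omegamua} and Lemma \ref{algebra}, so the proof reduces to a clean assembly of these ingredients; the only thing one must be careful about is keeping the pointwise identifications consistent so that the application of Lemma \ref{algebra} really yields the forms $\omega^A_\mu$ and not just abstract pullbacks.
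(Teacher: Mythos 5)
Your proposal is correct and follows exactly the route the paper intends: the paper's own proof is the one-line statement that the result ``is a consequence of Lemmas \ref{algebra} and \ref{omegamua}'', and your assembly — Lemma \ref{omegamua} for the forms $\omega^A_\mu$ and the identity $[\widetilde{\pi}^A_x]^*(\omega_{\mu_A}(x))=\omega^A_\mu(\pi_\mu(x))$, Proposition \ref{step1} for the symplectic targets, and Lemma \ref{algebra} applied pointwise to conclude nondegeneracy — is precisely the intended argument, spelled out in more detail than the paper itself provides.
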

        \begin{proof} It is a consequence of Lemmas \ref{algebra} and
                \ref{omegamua}.        \end{proof}

        Observe that, after these two steps, a polysymplectic structure is obtained
         on the quotient space $J^{-1}(\mu)/\mathcal{F}_{J^{-1}(\mu)}\equiv J^{-1}(\mu)/G_\mu$
          using a family of auxiliary maps $\widetilde{\pi}^A_x,\; A=1,\ldots, k,\, x\in J^{-1}(\mu)$.
          Nevertheless we want find conditions on the momentum map, and the kernel of the $2$-forms $\omega^A$ such
           that the two conditions of the last proposition on the linear map $\widetilde{\pi}^A_x$ hold.
    \begin{lemma}
        The linear map
        \[
            \widetilde{\pi}^A_x\colon \displaystyle\frac{T_x(J^{-1}(\mu))}{T_x(G_\mu\cdot x)} \to \displaystyle\frac{\left( \ds\frac{\ker (T_xJ^A)}{\ker\omega^A(x)}\right)}{\{[\xi_M(x)] \,\vert\;\xi\in \mathfrak{g}_{\mu_A}\}}
        \]
        is an epimorphism if and only if
        \[
            \ker (T_xJ^A) = T_x(J^{-1}(\mu)) + \ker\omega^A(x)+ T_x(G_{\mu_A}\cdot x)\,.
        \]
    \end{lemma}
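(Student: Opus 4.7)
The plan is to prove the equivalence by unraveling the definition of $\widetilde{\pi}^A_x$ as the factorization of the map
\[
\pi^A_x \;=\; \widetilde{pr_A}\circ pr^{J^A}\circ j_A \colon T_x(J^{-1}(\mu))\longrightarrow \frac{\ker(T_xJ^A)/\ker\omega^A(x)}{\{[\xi_M(x)]\mid \xi\in\mathfrak{g}_{\mu_A}\}},
\]
where $j_A\colon T_x(J^{-1}(\mu))\hookrightarrow \ker(T_xJ^A)$ is the natural inclusion (this is well defined because $T_x(J^{-1}(\mu))=\bigcap_B\ker(T_xJ^B)\subseteq \ker(T_xJ^A)$). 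Since $\widetilde{\pi}^A_x\circ T_x\pi_\mu=\pi^A_x$ and $T_x\pi_\mu$ is surjective, surjectivity of $\widetilde{\pi}^A_x$ is equivalent to surjectivity of $\pi^A_x$. So the task reduces to characterizing when $\pi^A_x$ is an epimorphism.

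First I would compute the kernel of the composite $\widetilde{pr_A}\circ pr^{J^A}\colon \ker(T_xJ^A)\to (\ker(T_xJ^A)/\ker\omega^A(x))/\{[\xi_M(x)]\mid\xi\in\mathfrak{g}_{\mu_A}\}$. A vector $v\in \ker(T_xJ^A)$ lies in this kernel iff $[v]=[\xi_M(x)]$ in $\ker(T_xJ^A)/\ker\omega^A(x)$ for some $\xi\in\mathfrak{g}_{\mu_A}$, equivalently $v\in \ker\omega^A(x)+T_x(G_{\mu_A}\cdot x)$. Both summands sit in $\ker(T_xJ^A)$ (the first because $\ker\omega^A(x)\subseteq \ker(T_xJ^A)$, which was already used in the previous lemma, and the second because $G_{\mu_A}$ preserves $(J^A)^{-1}(\mu_A)\ni x$). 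Moreover, the composite $\widetilde{pr_A}\circ pr^{J^A}$ is surjective onto the target quotient, as both factors are.

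From this, a standard sum-of-subspaces argument yields that the composite $\pi^A_x=(\widetilde{pr_A}\circ pr^{J^A})\circ j_A$ is surjective if and only if
\[
T_x(J^{-1}(\mu))\,+\,\ker\bigl(\widetilde{pr_A}\circ pr^{J^A}\bigr)\;=\;\ker(T_xJ^A),
\]
that is,
\[
T_x(J^{-1}(\mu))+\ker\omega^A(x)+T_x(G_{\mu_A}\cdot x)=\ker(T_xJ^A).
\]
Finally, I note that the inclusion $\supseteq$ is automatic, since all three summands on the left are contained in $\ker(T_xJ^A)$, so the condition is really just the opposite inclusion. This gives the stated equivalence. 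There is no real obstacle here; the argument is a careful diagram-chase through the four quotient maps introduced in the proof of Proposition \ref{step1}, and the main point is to correctly identify the kernel of $\widetilde{pr_A}\circ pr^{J^A}$.
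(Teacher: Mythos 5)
Your argument is correct and is essentially the paper's own proof: the paper identifies the target space as $\bigl(\ker(T_xJ^A)/\ker\omega^A(x)\bigr)$ modulo the image of $T_x(G_{\mu_A}\cdot x)$ and then leaves the rest to ``a direct algebraic computation,'' which is precisely your identification of $\ker\bigl(\widetilde{pr_A}\circ pr^{J^A}\bigr)=\ker\omega^A(x)+T_x(G_{\mu_A}\cdot x)$ followed by the standard fact that a restriction of a surjection $f$ to a subspace $U$ is onto iff $U+\ker f$ is everything. You merely make explicit the details the paper omits, including the reduction from $\widetilde{\pi}^A_x$ to $\widetilde{\pi}^A_x\circ T_x\pi_\mu$ via surjectivity of $T_x\pi_\mu$.
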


    \begin{proof}We remark that

    \begin{equation}\label{cond_1}T_x(J^{-1}(\mu)) = \bigcap_{B=1}^k\ker (T_xJ^B),\quad T_x(G_\mu\cdot x) = \{\xi_M(x) \vert\,\xi\in \mathfrak{g}_{\mu_B},\forall B\}=\bigcap_{B=1}^kT_x(G_{\mu_B}\cdot x)\end{equation} and that
    \begin{equation}\label{cond_2}
        \displaystyle\frac{\left(\displaystyle\frac{\ker (T_xJ^A)}{\ker\omega^A(x)}\right)}{\{[\xi_M(x)] \vert\,\xi\in \mathfrak{g}_{\mu_A}\}} =\displaystyle\frac{\left(\displaystyle\frac{\ker (T_xJ^A)}{\ker\omega^A(x)} \right)}{\left( \displaystyle\frac{T_x(G_{\mu_A}\cdot x)}{T_x(G_{\mu_A}\cdot x) \cap \ker \omega^A(x)}\right)}\,.
    \end{equation}

    Thus, a direct algebraic computation proves the result.

    \end{proof}

    \begin{lemma}
    The condition $\bigcap_{B=1}^k \ker \tilde{\pi}^B_x=\{0\}$ holds if and only if
    \[
        T_x(G_\mu\cdot x) = \bigcap_{B=1}^k \left( \ker \omega^B(x) + T_x(G_{\mu_B}\cdot x)\right) \cap T_x(J^{-1}(\mu))\,.
    \]
    \end{lemma}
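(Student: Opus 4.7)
The plan is to compute $\ker\widetilde{\pi}^A_x$ explicitly for each $A$, intersect the resulting subspaces, and translate the vanishing condition into the claimed equality.

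Unwinding the construction of $\widetilde{\pi}^A_x$, I have the identity $\widetilde{\pi}^A_x\circ T_x\pi_\mu = \widetilde{pr_A}\circ pr^{J^A}\circ j_A$ on $T_x(J^{-1}(\mu))$. Therefore, for $v_x\in T_x(J^{-1}(\mu))$, the class $[v_x]=T_x\pi_\mu(v_x)$ lies in $\ker\widetilde{\pi}^A_x$ if and only if $pr^{J^A}(v_x)\in\{[\xi_M(x)]\mid\xi\in\mathfrak{g}_{\mu_A}\}$; equivalently, there exists $\xi\in\mathfrak{g}_{\mu_A}$ with $v_x-\xi_M(x)\in\ker\omega^A(x)$, i.e.\ $v_x\in\ker\omega^A(x)+T_x(G_{\mu_A}\cdot x)$. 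Under the identification $T_{\pi_\mu(x)}(J^{-1}(\mu)/G_\mu)\cong T_x(J^{-1}(\mu))/T_x(G_\mu\cdot x)$, this gives
$$\ker\widetilde{\pi}^A_x=\frac{T_x(J^{-1}(\mu))\cap\bigl(\ker\omega^A(x)+T_x(G_{\mu_A}\cdot x)\bigr)}{T_x(G_\mu\cdot x)}.$$

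Intersecting over $B=1,\ldots,k$ and pulling $T_x(J^{-1}(\mu))$ out of the intersection yields
$$\bigcap_{B=1}^k\ker\widetilde{\pi}^B_x=\frac{T_x(J^{-1}(\mu))\cap\bigcap_{B=1}^k\bigl(\ker\omega^B(x)+T_x(G_{\mu_B}\cdot x)\bigr)}{T_x(G_\mu\cdot x)}.$$
This quotient is trivial exactly when its numerator equals $T_x(G_\mu\cdot x)$. The inclusion $T_x(G_\mu\cdot x)\subseteq T_x(J^{-1}(\mu))\cap\bigcap_B(\ker\omega^B(x)+T_x(G_{\mu_B}\cdot x))$ is automatic: by Lemma \ref{isotropy}, $G_\mu$ acts on $J^{-1}(\mu)$ and $\mathfrak{g}_\mu\subseteq\mathfrak{g}_{\mu_B}$ for every $B$, so $T_x(G_\mu\cdot x)\subseteq T_x(J^{-1}(\mu))$ and $T_x(G_\mu\cdot x)\subseteq T_x(G_{\mu_B}\cdot x)\subseteq\ker\omega^B(x)+T_x(G_{\mu_B}\cdot x)$. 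Hence the vanishing of $\bigcap_B\ker\widetilde{\pi}^B_x$ is equivalent to the reverse inclusion, which is precisely the stated equality.

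The only delicate point is the correct unravelling of the double quotient in the target of $\widetilde{\pi}^A_x$; once that description is in place, the argument is a transparent piece of linear algebra, and the freedom to replace a representative $v_x$ by $v_x+u$ with $u\in T_x(G_\mu\cdot x)\subseteq T_x(G_{\mu_A}\cdot x)$ leaves the characterization of $\ker\widetilde{\pi}^A_x$ unchanged.
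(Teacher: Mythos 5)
Your proposal is correct and follows essentially the same route as the paper: the paper's proof also reduces to the identification of $\ker\widetilde{\pi}^B_x$ with $\bigl(T_x(J^{-1}(\mu))\cap(\ker\omega^B(x)+T_x(G_{\mu_B}\cdot x))\bigr)/T_x(G_\mu\cdot x)$ together with the automatic inclusion $T_x(G_\mu\cdot x)\subseteq\bigcap_B(\ker\omega^B(x)+T_x(G_{\mu_B}\cdot x))\cap T_x(J^{-1}(\mu))$, although the paper leaves that kernel computation as ``a direct algebraic computation'' which you have spelled out explicitly and correctly.
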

    \begin{proof}
    From (\ref{cond_1}), we have that
    \[
        T_x(G_\mu\cdot x) \subseteq \bigcap_{B=1}^k \left(\ker\omega^B(x) + T_x (G_{\mu_B}\cdot x)\right)\,.
    \]

    Therefore,
    \[
        T_x(G_\mu\cdot x) \subseteq \bigcap_{B=1}^k \left(\ker\omega^B(x) + T_x (G_{\mu_B}\cdot x)\right)\cap T_x(J^{-1}(\mu))\,.
    \]

    On the other hand, using (\ref{cond_1}) and (\ref{cond_2}), we deduce that the condition $\bigcap_{B=1}^k\ker \tilde{\pi}_x^B=\{0\}$ holds if and only if
    \[
        \bigcap_{B=1}^k \left(\ker \omega^B(x) + T_x (G_{\mu_B}\cdot x)\right) \cap T_x(J^{-1}(\mu))\subseteq T_x(G_\mu\cdot x)\,.
    \]
    This proves the result.
    \end{proof}

        Finally, we can summarize the  results of this section in the following reduction theorem for
polysymplectic manifolds.

        \begin{theorem}\label{MW poly th}
            Let $(M,\omega^1,\ldots, \omega^k; \Phi, J)$ be a polysymplectic Hamiltonian $G$-space such that
     $\mu=(\mu_1,\ldots,$ $ \mu_k)\in  \mathfrak{g}^* {\bf \times} \stackrel{k}{\dots} {\bf \times} \mathfrak{g}^*$
                is a regular value of $J$ and
                $G_\mu$ acts freely and properly on $J^{-1}(\mu)$.
            Assume that for every $x\in J^{-1}(\mu)$ the following conditions hold:

                \begin{equation}\label{MW_cond_1}
                    \ker (T_xJ^A) = T_x(J^{-1}(\mu)) + \ker\omega^A(x)+ T_x(G_{\mu_A}\cdot x), \, \text{for every $A$},
                \end{equation}
                and
                \begin{equation}\label{MW_cond_2}
                     T_x(G_\mu\cdot x) = \bigcap_{B=1}^k \left( \ker \omega^B(x) + T_x(G_{\mu_B}\cdot x)\right) \cap T_x(J^{-1}(\mu))\,.
                \end{equation}
            Then the orbit space $J^{-1}(\mu)/G_{\mu}$ is a smooth manifold which admits a unique
polysymplectic structure
            $(\omega^1_\mu, \ldots ,\omega^k_\mu)$ satisfying the property
            \begin{equation}\label{redform}
                \pi_\mu^*\omega_\mu^A=i^*\omega^A\, ,
            \end{equation}
            where $\pi_\mu\colon J^{-1}(\mu)\to J^{-1}(\mu)/G_{\mu}$ is the canonical projection,
 and $i\colon J^{-1}(\mu)\to M$ is the canonical inclusion.
        \end{theorem}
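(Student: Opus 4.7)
The plan is to assemble the technical machinery that has already been developed throughout Section \ref{MWPR}. First, because $G_\mu$ acts freely and properly on $J^{-1}(\mu)$, standard results on Lie group actions yield a unique smooth manifold structure on the orbit space $J^{-1}(\mu)/G_\mu$ such that the canonical projection $\pi_\mu\colon J^{-1}(\mu)\to J^{-1}(\mu)/G_\mu$ becomes a surjective submersion. This settles the manifold part of the statement.

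Next, I would invoke Lemma \ref{omegamua} directly: its hypotheses (namely, that $\mu$ is a regular value and that $G_\mu$ acts freely and properly on $J^{-1}(\mu)$) are in force, so for each $A=1,\ldots,k$ we obtain a unique closed $2$-form $\omega^A_\mu$ on $J^{-1}(\mu)/G_\mu$ satisfying $\pi_\mu^*\omega^A_\mu=i^*\omega^A$, which is exactly the compatibility relation (\ref{redform}). Uniqueness of the family $(\omega^1_\mu,\ldots,\omega^k_\mu)$ is automatic from $\pi_\mu$ being a surjective submersion. At this stage the only thing still to verify is the non-degeneracy condition $\bigcap_{A=1}^k\ker\omega^A_\mu(\pi_\mu(x))=\{0\}$ at every point.

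This is precisely where the two technical hypotheses enter. The first lemma preceding this theorem characterizes (\ref{MW_cond_1}) as exactly the statement that each $\widetilde{\pi}^A_x$ is surjective, and the second lemma characterizes (\ref{MW_cond_2}) as exactly the statement $\bigcap_{B=1}^k\ker\widetilde{\pi}^B_x=\{0\}$. Granted these two facts, Proposition \ref{step2}, which itself combines Proposition \ref{step1} (providing the symplectic forms $\omega_{\mu_A}(x)$ on the target spaces) with the abstract Lemma \ref{algebra}, immediately gives that $(\omega^1_\mu,\ldots,\omega^k_\mu)$ defines a polysymplectic structure on $J^{-1}(\mu)/G_\mu$. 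In particular, via item $(2)$ of Lemma \ref{omegamua}, the identification $T_{\pi_\mu(x)}(J^{-1}(\mu)/G_\mu)\cong T_x(J^{-1}(\mu))/T_x(G_\mu\cdot x)$ transports the non-degeneracy obtained from Lemma \ref{algebra} to the desired condition on $(\omega^1_\mu,\ldots,\omega^k_\mu)$.

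The main obstacle, really the only nontrivial point, is to make sure that the algebraic non-degeneracy provided by Lemma \ref{algebra} in the quotient spaces $V_A$ correctly translates into pointwise non-degeneracy of the family $(\omega^1_\mu,\ldots,\omega^k_\mu)$. This is a bookkeeping matter: since $\widetilde{\pi}^A_x$ is an epimorphism, $\omega_{\mu_A}(x)$ is symplectic on $V_A$ and $[\widetilde{\pi}^A_x]^*\omega_{\mu_A}(x)=\omega^A_\mu(\pi_\mu(x))$, the hypothesis $\bigcap_{A}\ker\widetilde{\pi}^A_x=\{0\}$ is exactly what Lemma \ref{algebra} requires in order to conclude that the pulled-back family is polysymplectic on the tangent space of $J^{-1}(\mu)/G_\mu$ at $\pi_\mu(x)$. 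Once this is noted, closedness of each $\omega^A_\mu$ (already in Lemma \ref{omegamua}), the uniqueness, and the relation (\ref{redform}) assemble into the theorem.
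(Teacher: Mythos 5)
Your proposal is correct and follows exactly the route the paper intends: the theorem is stated there as a summary of Section~\ref{MWPR}, and your assembly of the free-and-proper quotient, Lemma~\ref{omegamua}, the two lemmas translating (\ref{MW_cond_1}) and (\ref{MW_cond_2}) into surjectivity of the $\widetilde{\pi}^A_x$ and $\bigcap_B\ker\widetilde{\pi}^B_x=\{0\}$, and Proposition~\ref{step2} (via Proposition~\ref{step1} and Lemma~\ref{algebra}) is precisely the intended argument. No gaps.
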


\newpage
    \subsection{Examples}
    \subsubsection{The product of symplectic manifolds}\label{ex0}

        Let $M$ be the product of the symplectic manifolds $M_A$, with $A\in \{1,\ldots, k\}$. Denote by $\tilde{\omega}^A$ the symplectic $2$-form on $M_A$ and by $(\omega^1,\ldots, \omega^k)$ the corresponding $k$-polysymplectic structure on $M$ (see Appendix \ref{examplepoly}).

        Suppose that for every $A\in \{1,\ldots, k\}$, $\tilde{\Phi}^A\colon G_A\times M_A \to M_A$ is a free proper symplectic action of the Lie group $G_A$ on $M_A$ which admits a $Coad$-equivariant momentum map $\tilde{J}^A\colon M_A\to \mathfrak{g}^*_A$.

        Then, we may consider the free and proper action $\Phi$ of the product Lie group $G=G_1\times \cdots \times G_k$ on $M$ given by
        \[
            \Phi((g_1,\ldots, g_k), (x_1,\ldots, x_k))= (\tilde{\Phi}^1(g_1,x_1), \ldots, \tilde{\Phi}^k(g_k,x_k))\,.
        \]

        It is clear that $\Phi$ is polysymplectic.

        Moreover, if $\mathfrak{g}=\mathfrak{g}_1\times \ldots \times \mathfrak{g}_k$ is the Lie algebra of $G$, we have that $J=(J^1,\ldots, J^k)$ is a $Coad^k$- equivariant momentum map for the action $\Phi$, where $J^A\colon M\to \mathfrak{g}^*=\mathfrak{g}_1^*\times \ldots\times \mathfrak{g}^*_k$ is defined by
        \[
            J^A(x_1,\ldots, x_k)(\xi_1,\ldots, \xi_k)=\tilde{J}^A(x_A)(\xi_A)\,,
        \]
        for $(x_1,\ldots, x_k)\in M$ and $(\xi_1,\ldots, \xi_k)\in \mathfrak{g}_1\times \ldots \times \mathfrak{g}_k=\mathfrak{g}$.

        Now, assume that $\tilde{\mu}_A\in \mathfrak{g}^*_A$ is a regular value of the momentum map $\tilde{J}^A\colon M_A\to  \mathfrak{g}^*_A$, for $A\in \{1,\ldots,k\}.$

        Then, $\mu=(\mu_1,\ldots, \mu_k)$ is a regular value of the momentum map $J$, with $\mu_A=(0,\ldots, 0,\tilde{\mu}_A, 0,\ldots, 0) \in \mathfrak{g}^*_1\times \cdots\times \mathfrak{g}^*_k=\mathfrak{g}^*$.

        In addition, if $x=(x_1,\ldots, x_k)\in J^{-1}(\mu)$, it follows that
        \begin{align*}
           & \ker (T_xJ^A)  = T_{x_1}M_1\times \cdots \times T_{x_{A-1}}M_{A-1}\times \ker (T_{x_A}\widetilde{J}^A)\times T_{x_{A+1}} M_{A+1}\times \cdots\times T_{x_k}M_k\,,\\
           &T_x(J^{-1}(\mu)) = \ker (T_{x_1}\tilde{J}^1) \times \cdots\times \ker (T_{x_k}\tilde{J}^k)\,,\\
           &
           \ker\omega^A(x) = T_{x_1}M_1\times \cdots\times T_{x_{A-1}}M_{A-1} \times \{0\}\times T_{x_{A+1}}M_{A+1}\times \cdots\times T_{x_k}M_k\,,\\
           &
           T_x(G_{\mu_A}\cdot x) =T_{x_1}(G_1\cdot x_1)\times\cdots \times T_{x_{A-1}}(G_{A-1}\cdot x_{A-1})\times T_{x_A}((G_A)_{\tilde{\mu}_A}\cdot x_A)\times T_{x_{A+1}}(G_{A+1}\cdot x_{A+1})\\
           &\qquad\qquad\qquad\times \cdots\times T_{x_{k}}(G_{k}\cdot x_{k})\,,\\
           &
           T_x(G_\mu\cdot x) = T_{x_1}((G_1)_{\tilde{\mu}_1}\cdot x_1)\times \cdots\times T_{x_k}((G_k)_{\tilde{\mu}_k}\cdot x_k)\,.
        \end{align*}

        Thus, we deduce that
        \[
            \ker (T_xJ^A) = T_{x}(J^{-1}(\mu)) + \ker\omega^A(x),\quad \forall A
        \]
        and
        \[
        \bigcap_{B=1}^k\left(\ker\omega^B(x)+T_x(G_{\mu_B}\cdot x)\right) = T_x(G_\mu\cdot x)\,.
        \]

        Therefore, we may apply Theorem \ref{MW poly th} and we obtain a reduced polysymplectic manifold $J^{-1}(\mu)/G_{\mu}$.

        Note that
        \[
            J^{-1}(\mu) = (\tilde{J}^1)^{-1}(\tilde{\mu}_1) \times \cdots \times (\tilde{J}^k)^{-1}(\tilde{\mu}_k)
        \]
        and
        \[
        G_\mu= (G_1)_{\tilde{\mu}_1}\times \cdots \times (G_k)_{\tilde{\mu}_k}\,.
        \]

        In fact, $J^{-1}(\mu)/G_{\mu}$ is polysymplectomorphic to the product of the reduced symplectic manifolds $\displaystyle\frac{(\tilde{J}^A)^{-1}(\tilde{\mu}_A)}{(G_A)_{\tilde{\mu}_A}}$, with $A\in \{1,\ldots, k\}$, that is,
        \[
            J^{-1}(\mu)/G_{\mu}\cong \displaystyle\frac{(\tilde{J}^1)^{-1}(\tilde{\mu}_1)}{(G_1)_{\tilde{\mu}_1}}\times \cdots \times \displaystyle\frac{(\tilde{J}^k)^{-1}(\tilde{\mu}_k)}{(G_k)_{\tilde{\mu}_k}}\,.
        \]
\subsubsection{The cotangent bundle of $k^1$-covelocities}\label{ex1}

        In this case we consider the model of polysymplectic manifold $M=(T^1_k)^* Q$ (see Appendix A).

        Let $\varphi\colon Q\to Q$ be a diffeomorphism. The {\rm
            canonical prolongation} of $\varphi$ to the bundle of
            $k^1$-covelocities of $Q$, is the map $(T^1_k)^* \varphi\colon (T^1_k)^*Q\to (T^1_k)^* Q$ given by
                \[
                    (T^1_k)^*\varphi(\alpha^1_q,\ldots,\alpha^k_q)=
                    [(\alpha^1_q\circ T\varphi)(\varphi^{-1}(q)),\ldots,(\alpha^k_q\circ T\varphi)(\varphi^{-1}(q))] \ .
                \]
            An interesting property of this map $(T^1_k)^*\varphi$
is that it conserves the canonical polysymplectic structure of
            $(T^1_k)^* Q$, that is,
                \[
                    [(T^1_k)^*\varphi]^*\omega^A=\omega^A \ .
                \]
            Observe that in the case $k=1$, this notion
            reduces to the canonical prolongation $T^*\varphi$ from $Q$ to $T^*Q$.

            Using the canonical prolongation, we can define a polysymplectic action in the following way.

        Every action $\phi\colon G\times Q\to Q$ of a Lie group $G$ on an arbitrary manifold $Q$
can be lifted to a polysymplectic action
                \begin{equation}\label{lifted action}
                \begin{array}{rccl}
                    \Phi=\phi^{T^*_k}\colon & G\times (T^1_k)^*Q & \to & (T^1_k)^* Q\\\noalign{\medskip}
                     & (g,\alpha^1_q,\ldots,\alpha^k_q) &\mapsto &\Phi^{T^*_k}(g,\alpha^1_q,\ldots,\alpha^k_q) =
                     (T^1_k)^*(\Phi_{g^{-1}})(\alpha^1_q,\ldots,\alpha^k_q)\,.
                \end{array}
                \end{equation}

        Now, in order to define a $Coad^k$-equivariant momentum map for this action $\phi^{T^*_k}$,
we recall the following theorem, which can be found in \cite{Gunther-1987,{MRS-2004}}.

        \begin{theorem}\label{momemtum covelocities}
            Let $\Phi\colon G\times M\to M$ be a polysymplectic
            action on a polysymplectic manifold $(M,\omega^1,\ldots,$ $ \omega^k)$.
Assume the polysymplectic structure is
            exact, that is, there exists a family of $1$-forms
            $\theta^1,\dots, \theta^k$ such that,
            $\omega^A=-d\theta^A$. Assume
            that  the action leaves each $\theta^A$ invariant, i.e.,
            $(\Phi_g)^*\theta^A=\theta^A$ for every $g\in G$
            (and then it is called a {\rm $k$-polysymplectic exact action}). Then the mapping
            $J\equiv (J^1,\ldots, J^k)\colon M\to
            \mathfrak{g}^*\times\stackrel{k}{\ldots}\times\mathfrak{g}^*$
            defined by
                \[
                    J^A(x)(\xi)=\theta^A(x)\left(\xi_M(x)\right)\quad ; \quad \xi\in \mathfrak{g}\ ,\ x\in M
                \]
             is a $Coad^k$-equivariant momentum map for $\Phi$.
        \end{theorem}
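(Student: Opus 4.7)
The plan is to verify the two defining properties of a $Coad^k$-equivariant momentum map for each index $A$ separately, since the $k$ closed $1$-forms $\theta^A$ play essentially independent roles. The arguments are completely parallel to the classical symplectic case (cf.\ \cite{AM-1978}), adapted componentwise; the hypotheses give us both a primitive for each $\omega^A$ and invariance of that primitive under the action, which is exactly what the symplectic argument requires.

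First I would check that $J$ is a momentum map, that is, $i_{\xi_M}\omega^A=d\hat{J}^A_\xi$ for every $\xi\in\mathfrak{g}$ and every $A=1,\ldots,k$. The key observation is that, since $\Phi_g^*\theta^A=\theta^A$ for all $g\in G$, differentiating at $g=e$ along the one-parameter subgroup generated by $\xi$ yields $\mathcal{L}_{\xi_M}\theta^A=0$. Applying Cartan's magic formula gives
\[
0=\mathcal{L}_{\xi_M}\theta^A=i_{\xi_M}d\theta^A+d(i_{\xi_M}\theta^A)=-i_{\xi_M}\omega^A+d(i_{\xi_M}\theta^A).
\]
Since $\hat{J}^A_\xi(x)=J^A(x)(\xi)=\theta^A(x)(\xi_M(x))=(i_{\xi_M}\theta^A)(x)$ by construction, this is precisely the desired identity $d\hat{J}^A_\xi=i_{\xi_M}\omega^A$.

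Next I would verify $Coad^k$-equivariance, which by definition amounts to checking $J^A(\Phi_g(x))=Coad_g(J^A(x))$ for each $A$. The standard naturality identity for infinitesimal generators of a left action, namely
\[
\xi_M(\Phi_g(x))=T_x\Phi_g\bigl((Ad_{g^{-1}}\xi)_M(x)\bigr),
\]
combined with the invariance of $\theta^A$, would then give
\[
J^A(\Phi_g(x))(\xi)=\theta^A(\Phi_g(x))\bigl(T_x\Phi_g((Ad_{g^{-1}}\xi)_M(x))\bigr)=(\Phi_g^*\theta^A)(x)\bigl((Ad_{g^{-1}}\xi)_M(x)\bigr)=\theta^A(x)\bigl((Ad_{g^{-1}}\xi)_M(x)\bigr),
\]
which equals $J^A(x)(Ad_{g^{-1}}\xi)=(Coad_g J^A(x))(\xi)$ by the very definition of the coadjoint action.

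There is really no hard step here: the whole argument is a two-line application of Cartan's formula plus a functoriality identity, and the reason it works componentwise is that the polysymplectic condition $\bigcap_A\ker\omega^A=0$ plays no role in the proof; only the exactness of each $\omega^A$ and the separate invariance of each $\theta^A$ are used. The only point requiring mild care is the sign convention relating $\omega^A$ to $\theta^A$ (the hypothesis $\omega^A=-d\theta^A$) and the convention for $\xi_M$ as the infinitesimal generator of a left action, both of which are fixed by the definitions already given in the paper.
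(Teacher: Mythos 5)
Your proof is correct and is the standard componentwise adaptation of the symplectic argument: Cartan's formula applied to the invariant primitive $\theta^A$ gives the momentum-map identity $i_{\xi_M}\omega^A=d\hat{J}^A_\xi$, and the naturality identity $\xi_M(\Phi_g(x))=T_x\Phi_g\bigl((Ad_{g^{-1}}\xi)_M(x)\bigr)$ together with $\Phi_g^*\theta^A=\theta^A$ gives $Coad^k$-equivariance, matching the paper's convention $Coad_g(\mu)=\mu\circ Ad_{g^{-1}}$. The paper itself supplies no argument here, deferring entirely to Proposition 6.9 of G\"unther's paper, so your write-up is essentially the proof that reference contains, and your closing remark that the nondegeneracy condition $\bigcap_A\ker\omega^A=0$ is never used is accurate.
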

        \proof It is equivalent to Proposition 6.9 in G\"unther's paper \cite{Gunther-1987}.\qed

        Consider now the special case when $M=(T^1_k)^* Q$ with
        $\theta^1,\ldots, \theta^k$ the canonical $1$-forms. As we
        have seen, a diffeomorphism $\varphi$
        of $Q$ to $Q$ lifts to a diffeomorphism $(T^1_k)^*\varphi$
        of $(T^1_k)^*Q$ that preserves each $\theta^A$, and an action
        $\phi$ of $G$ on $Q$ can be lifted to obtain an action on
        $(T^1_k)^* Q$ (see example \ref{lifted action}).

        \begin{corollary}\label{k-cotangent lift}
            Let $\phi\colon G\times Q\to Q$ be an action of $G$ on
            $Q$ and let $\Phi=\phi^{T^*_k}$ be the lifted action on
            $M=(T^1_k)^*Q$. Then this polysymplectic action has a
            $Coad^k$-equivariant momentum mapping $J\equiv
            (J^1,\ldots,J^k)\colon (T^1_k)^*Q\to
            \mathfrak{g}^*\times\stackrel{k}{\ldots}\times\mathfrak{g}^*$
            given by
                \[
                    J^A(\alpha^1_q,\ldots,\alpha^k_q)(\xi)=\alpha^A_q(\xi_Q(q))
                \]
            where $\xi_Q$ is the infinitesimal generator of $\phi$
            on $Q$.
        \end{corollary}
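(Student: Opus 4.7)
The plan is to deduce this corollary directly from Theorem \ref{momemtum covelocities} applied to the exact polysymplectic structure on $(T^1_k)^*Q$ with the canonical $1$-forms $\theta^1,\ldots,\theta^k$ (which satisfy $\omega^A=-d\theta^A$). So the task reduces to checking two things: (a) the lifted action $\Phi=\phi^{T^*_k}$ is a $k$-polysymplectic exact action, i.e.\ $(\Phi_g)^*\theta^A=\theta^A$ for every $g\in G$ and every $A$; and (b) the formula $J^A(x)(\xi)=\theta^A(x)(\xi_M(x))$ specializes to the claimed expression $\alpha^A_q(\xi_Q(q))$.

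For (a), I would use the fact, already recalled in the excerpt just before the definition of the lifted action, that for any diffeomorphism $\varphi\colon Q\to Q$ its canonical prolongation $(T^1_k)^*\varphi$ preserves each canonical $1$-form $\theta^A$. Since $\Phi_g=(T^1_k)^*(\phi_{g^{-1}})$ is precisely such a prolongation, we get $(\Phi_g)^*\theta^A=\theta^A$, so $\Phi$ is $k$-polysymplectic exact and Theorem \ref{momemtum covelocities} applies, producing a $Coad^k$-equivariant momentum map $J=(J^1,\ldots,J^k)$ with $J^A(x)(\xi)=\theta^A(x)(\xi_M(x))$.

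For (b), I would compute the infinitesimal generator $\xi_M$ of $\Phi$ and then use the defining property of the canonical $1$-form $\theta^A$, which on a point $x=(\alpha^1_q,\ldots,\alpha^k_q)$ and a tangent vector $V\in T_x((T^1_k)^*Q)$ reads $\theta^A(x)(V)=\alpha^A_q(T_x\pi^k_Q(V))$, where $\pi^k_Q\colon (T^1_k)^*Q\to Q$ is the canonical projection. Since $\Phi$ covers $\phi$ via $\pi^k_Q$, differentiating the equivariance relation $\pi^k_Q\circ\Phi_g=\phi_g\circ\pi^k_Q$ at the identity gives $T_x\pi^k_Q(\xi_M(x))=\xi_Q(q)$. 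Substituting yields $\theta^A(x)(\xi_M(x))=\alpha^A_q(\xi_Q(q))$, which is precisely the asserted formula.

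The only genuine step of content is the identity $(\Phi_g)^*\theta^A=\theta^A$ for prolongations; once this and the naturality of $\pi^k_Q$ are in hand, the rest is a one-line substitution. I do not expect any obstacle here: the argument is a direct transcription of the classical cotangent-lift momentum map computation, applied coordinate-by-coordinate in the $k$ factors, with Theorem \ref{momemtum covelocities} playing the role of the symplectic exact-action momentum map theorem.
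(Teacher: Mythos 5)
Your proposal is correct and follows essentially the same route as the paper: the corollary is obtained by applying Theorem \ref{momemtum covelocities} to the exact structure $\omega^A=-d\theta^A$, using the fact (recalled in the text just before the corollary) that the canonical prolongation $(T^1_k)^*\varphi$ preserves each $\theta^A$, and then evaluating $\theta^A(x)(\xi_M(x))=\alpha^A_q(T_x\pi^k_Q(\xi_M(x)))=\alpha^A_q(\xi_Q(q))$ via the projectability of $\xi_M$ onto $\xi_Q$. No gaps.
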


        We consider the Hamiltonian polysymplectic $G$-space $(M,\omega^1,\ldots,\omega^k; \Phi;J)$ where
        \begin{itemize}
            \item $M=(T^1_k)^*Q$ is the tangent bundle of $k^1$-covelocities of a manifold $Q$,
            with the canonical polysymplectic structure defined in Appendix A.
     \item The polysymplectic action is $\Phi=\phi^{T^*_k}$, the lift  of an action $\phi\colon G\times Q\to Q$,
           (see  (\ref{lifted action}) ).
            \item The $Coad^k$- equivariant momentum map $J\equiv(J^1,\ldots, J^k)\colon (T^1_k)^*Q
            \to\mathfrak{g}^*\times\stackrel{k}{\ldots}\times\mathfrak{g}^*$
            for the action $\Phi^{T^*_k}$ is defined by (see corollary \ref{k-cotangent lift})
                    \[
                        J^A(\alpha^1_q,\ldots,\alpha^k_q)(\xi)=\alpha^A_q\left(\xi_Q(q)\right)\,,\quad A=1,\ldots, k\,.
                    \]
        \end{itemize}

        If $\xi\in\mathfrak{g}$, we denote by $\xi_Q$ the
        infinitesimal generator of the action $\phi$ associated to
        $\xi$, by $\xi_{T^*Q}$ the infinitesimal generator of the
        cotangent lifting $\phi^{T^*}$ of the action $\phi$
        associated to $\xi$, and finally,  by $\xi_{(T^1_k)^* Q}$ the
        infinitesimal generator of $\phi^{T^*_k}$ associated to
        $\xi$. It is immediate to prove that
            \begin{itemize}
                \item $\xi_{T^*Q}$ is $\pi_Q$-projectable on
                $\xi_Q$.
                \item $\xi_{(T^1_k)^*Q}$ is $\pi^{k,A}_Q$-projectable on
                $\xi_{T^*Q}$, and $\pi^k_Q$-projectable on
                $\xi_{Q}$.
            \end{itemize}

        Next, we will see that if the action $\Phi$ is infinitesimally free then the Hamiltonian polysymplectic $G$-space $(M,\omega^1,\ldots, \omega^k; \Phi; J)$ satisfies the hypotheses of Theorem \ref{MW poly th}. We remark that if $\Phi$ is free then $\Phi$ is infinitesimally free.

        Denote by $\mathcal{J}\colon T^*Q\to \mathfrak{g}^*$ the standard momentum map associated to the action $\phi\colon G\times Q\to Q$, that is, for $\alpha_q\in T^*_qQ$,
        \[
            \begin{array}{rccl}
                \mathcal{J}(\alpha_q)\colon & \mathfrak{g} & \to & \mathbb{R}\\\noalign{\medskip}
                & \xi & \mapsto & \mathcal{J}(\alpha_q)(\xi)=\alpha_q(\xi_Q(q))\,.
            \end{array}
        \]
    \begin{lemma}\label{lemma 320}
    If $\phi$ is infinitesimally free, that is, the linear map
    \[
        \begin{array}{rcl}
            \mathfrak{g} & \to & T_qQ\\\noalign{\medskip}
            \xi & \mapsto & \xi_Q(q)
        \end{array}
    \]
    is injective, for every $q\in Q$, then
    \[
        \ker (T_{\alpha_q}\mathcal{J}) + V_{\alpha_q}(\pi_Q) = T_{\alpha_q} (T^*Q)\,.
    \]
    \end{lemma}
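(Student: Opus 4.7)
The plan is to prove this by dualizing with respect to the canonical symplectic form $\omega$ on $T^*Q$ and showing that the symplectic orthogonal of the sum is trivial. I would proceed in three steps.

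First, I will identify the kernel $\ker(T_{\alpha_q}\mathcal{J})$ as a symplectic orthogonal. Since $\mathcal{J}$ is a momentum map for the cotangent lifted action, for every $\xi\in\mathfrak{g}$ one has $\imath_{\xi_{T^*Q}}\omega = d\hat{\mathcal{J}}_\xi$, where $\hat{\mathcal{J}}_\xi(\alpha_q)=\mathcal{J}(\alpha_q)(\xi)$. Hence $v\in\ker(T_{\alpha_q}\mathcal{J})$ if and only if $\omega(\alpha_q)(\xi_{T^*Q}(\alpha_q),v)=0$ for every $\xi\in\mathfrak{g}$; that is,
\[
\ker(T_{\alpha_q}\mathcal{J})=\bigl\{\xi_{T^*Q}(\alpha_q)\,\vert\,\xi\in\mathfrak{g}\bigr\}^{\perp,\omega}.
\]

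Second, I will use the standard fact that the vertical subspace $V_{\alpha_q}(\pi_Q)$ is Lagrangian in $(T_{\alpha_q}(T^*Q),\omega(\alpha_q))$, so $V_{\alpha_q}(\pi_Q)^{\perp,\omega}=V_{\alpha_q}(\pi_Q)$. Combining these two identifications,
\[
\bigl(\ker(T_{\alpha_q}\mathcal{J})+V_{\alpha_q}(\pi_Q)\bigr)^{\perp,\omega}=\bigl\{\xi_{T^*Q}(\alpha_q)\,\vert\,\xi\in\mathfrak{g}\bigr\}\cap V_{\alpha_q}(\pi_Q).
\]

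Third, I would show that this intersection is trivial using infinitesimal freeness. Since $\xi_{T^*Q}$ is $\pi_Q$-projectable onto $\xi_Q$, if $\xi_{T^*Q}(\alpha_q)$ lies in $V_{\alpha_q}(\pi_Q)=\ker T_{\alpha_q}\pi_Q$, then $\xi_Q(q)=T_{\alpha_q}\pi_Q(\xi_{T^*Q}(\alpha_q))=0$. By infinitesimal freeness of $\phi$ at $q$, this forces $\xi=0$, and consequently $\xi_{T^*Q}(\alpha_q)=0$. Therefore the orthogonal of the sum is zero, and nondegeneracy of $\omega(\alpha_q)$ gives $\ker(T_{\alpha_q}\mathcal{J})+V_{\alpha_q}(\pi_Q)=T_{\alpha_q}(T^*Q)$.

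There is no serious obstacle; the only subtlety is invoking the Lagrangian character of the vertical distribution of $T^*Q$, which I would recall as a standard fact (verifiable in cotangent coordinates $(q^i,p_i)$ where $V_{\alpha_q}(\pi_Q)=\mathrm{span}\{\partial/\partial p_i\}$ and $\omega=dq^i\wedge dp_i$).
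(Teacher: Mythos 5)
Your proof is correct, but it takes a genuinely different route from the one in the paper. The paper argues by a dimension count: infinitesimal freeness makes $\mathcal{J}\vert_{T^*_qQ}\colon T^*_qQ\to\mathfrak{g}^*$ surjective, so $\dim\ker(T_{\alpha_q}\mathcal{J})=2\dim Q-\dim G$; it then identifies $\ker(T_{\alpha_q}\mathcal{J})\cap V_{\alpha_q}(\pi_Q)$ explicitly as the vertical lift of the annihilator $T^0_q(G\cdot q)$, of dimension $\dim Q-\dim G$, and concludes via $\dim(A+B)=\dim A+\dim B-\dim(A\cap B)=2\dim Q$. You instead dualize with the canonical symplectic form: you use the momentum map identity $\imath_{\xi_{T^*Q}}\omega=d\hat{\mathcal{J}}_\xi$ to get $\ker(T_{\alpha_q}\mathcal{J})=\{\xi_{T^*Q}(\alpha_q)\}^{\perp,\omega}$, the Lagrangian character of the vertical distribution, biduality $(W^{\perp,\omega})^{\perp,\omega}=W$, and the projectability of $\xi_{T^*Q}$ onto $\xi_Q$ to show the orthogonal of the sum vanishes. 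Both arguments are complete; yours buys a coordinate-free, dimension-free argument that makes transparent exactly where infinitesimal freeness enters (it kills $\{\xi_{T^*Q}(\alpha_q)\}\cap V_{\alpha_q}(\pi_Q)$), at the price of invoking the momentum-map relation and Lagrangianity of the fibres, whereas the paper's computation stays at the level of linear algebra on $T_{\alpha_q}(T^*Q)$ and produces as a by-product the explicit description of $\ker(T_{\alpha_q}\mathcal{J})\cap V_{\alpha_q}(\pi_Q)$, which is reused implicitly in the subsequent propositions.
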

    \begin{proof}
        If $\Phi$ is infinitesimally free then $\mathcal{J}\vert_{T_q^*Q}\colon T_q^* Q\to \mathfrak{g}^*$ is a linear epimorphism. This implies that $\mathcal{J}\colon T^*Q\to \mathfrak{g}^*$ is a submersion and
        \begin{equation}\label{ker_mo}
            \dim \ker (T_{\alpha_q}\mathcal{J}) = 2\dim Q - \dim G\,.
        \end{equation}

        Now, let $^V _{\alpha_q}\colon T^*_qQ\to T_{\alpha_q}(T^*_qQ) = V_{\alpha_q}(\pi_Q)\subseteq T_{\alpha_q}(T^*Q)$ be the canonical isomorphism. A direct computation proves that
        \[
            \ker (T_{\alpha_q}\mathcal{J}) \cap V_{\alpha_q}(\pi_Q) = \{(\beta_q)^V_{\alpha_q}\in V_{\alpha_q}(\pi_Q) \, \vert\, \beta_q\in T^0_q(G\cdot q)\},
        \]
        where $T^0_q(G\cdot q)$ is the annihilator of the subspace $T_q(G\cdot q)$.

        Thus,
        \begin{equation}\label{Inter}
            \dim\left(\ker (T_{\alpha_q}\mathcal{J}) \cap V_{\alpha_q}(\pi_Q)\right) = \dim Q - \dim G\,.
        \end{equation}

        Therefore, using (\ref{ker_mo}) and (\ref{Inter}), we deduce that
        \[
            \dim\left(\ker (T_{\alpha_q}\mathcal{J}) + V_{\alpha_q}(\pi_Q)\right)=2\dim Q=\dim T_{\alpha_q}(T^*Q)
        \]
        which ends the proof of the result.
    \end{proof}

    If $\phi$ is infinitesimally free, then the momentum map $\mathcal{J}\colon T^*Q\to \mathfrak{g}^*$ is a submersion (see proof of Lemma \ref{lemma 320}) and this implies that the polysymplectic momentum map $J\colon (T^1_k)^*Q\to \mathfrak{g}^*\times \stackrel{k)}{\cdots}\times \mathfrak{g}^*$ also is a submersion. Thus, every $\mu\in \mathfrak{g}^*\times \stackrel{k)}{\cdots}\times \mathfrak{g}^*$ is a regular value of $J$. Moreover, we may prove the following results.
    \begin{prop}\label{prop 321}
        If $\phi\colon G\times Q\to Q$ is infinitesimally free, $\alpha_q=(\alpha^1_q,\ldots, \alpha^k_q)\in (T^1_k)^*_qQ$ and $\mu = J(\alpha_q)\in \mathfrak{g}^*\times \stackrel{k)}{\cdots}\times \mathfrak{g}^*$, we have that
        \[
            \ker (T_{\alpha_q} J^A) = T_{\alpha_q}(J^{-1}(\mu)) + \ker \omega^A(\alpha_q),\quad \makebox{ for all $A$}.
        \]
    \end{prop}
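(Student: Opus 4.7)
The plan is to exploit the factorization of $J^A$ and $\omega^A$ through the canonical projection $\pi^{k,A}_Q\colon (T^1_k)^*Q\to T^*Q$, $(\alpha^1_q,\ldots,\alpha^k_q)\mapsto \alpha^A_q$. By Corollary \ref{k-cotangent lift} and the definition of $\mathcal{J}$ one has $J^A = \mathcal{J}\circ \pi^{k,A}_Q$, and the canonical polysymplectic structure on $(T^1_k)^*Q$ satisfies $\omega^A = (\pi^{k,A}_Q)^*\omega_Q$, where $\omega_Q$ is the canonical symplectic form on $T^*Q$. Since $\omega_Q$ is symplectic and $\pi^{k,A}_Q$ is a surjective submersion, this immediately yields
\[
    \ker\omega^A(\alpha_q) \;=\; \ker T_{\alpha_q}\pi^{k,A}_Q;
\]
geometrically, $\ker\omega^A(\alpha_q)$ consists of the tangent vectors that are vertical in every fiber except the $A$-th. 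The inclusion $T_{\alpha_q}(J^{-1}(\mu)) + \ker\omega^A(\alpha_q)\subseteq \ker T_{\alpha_q}J^A$ is then immediate: the first summand is contained in $\ker T_{\alpha_q}J^A$ by definition, and the second is contained in it because $J^A$ factors through $\pi^{k,A}_Q$.

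For the reverse inclusion, given $v\in \ker T_{\alpha_q}J^A$, the strategy is to produce $v_2\in \ker\omega^A(\alpha_q)$ such that $v-v_2\in \bigcap_{B=1}^k \ker T_{\alpha_q}J^B = T_{\alpha_q}(J^{-1}(\mu))$. For each $B\neq A$, set $\lambda^B := T_{\alpha_q}J^B(v)\in \mathfrak{g}^*$. Because $\phi$ is infinitesimally free, the linear map $\mathfrak{g}\to T_qQ$, $\xi\mapsto \xi_Q(q)$, is injective, hence its transpose $T^*_qQ\to \mathfrak{g}^*$ — which is precisely the restriction $\mathcal{J}|_{T^*_qQ}$ — is surjective (this was already observed in the proof of Lemma \ref{lemma 320}). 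Consequently, for each $B\neq A$ there exists $\beta^B\in T^*_qQ$ with $\mathcal{J}(\beta^B)=\lambda^B$. Define
\[
    v_2 \;=\; \sum_{B\neq A} (\beta^B)^V_{\alpha_q},
\]
where $(\beta^B)^V_{\alpha_q}$ is the vertical lift of $\beta^B$ into the $B$-th fiber of $(T^1_k)^*Q\to Q$ at $\alpha_q$. Clearly $v_2 \in \ker T_{\alpha_q}\pi^{k,A}_Q = \ker\omega^A(\alpha_q)$. A straightforward computation using the explicit form $J^B(\alpha^1_q,\ldots,\alpha^k_q)(\xi)=\alpha^B_q(\xi_Q(q))$ gives $T_{\alpha_q}J^B(v_2) = \mathcal{J}(\beta^B) = \lambda^B = T_{\alpha_q}J^B(v)$ for $B\neq A$, while $T_{\alpha_q}J^A(v_2)=0=T_{\alpha_q}J^A(v)$. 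Therefore $v_1:=v-v_2$ lies in $\ker T_{\alpha_q}J^B$ for every $B$, i.e.\ in $T_{\alpha_q}(J^{-1}(\mu))$, and the decomposition $v=v_1+v_2$ is the required one.

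The main obstacle is essentially conceptual rather than technical: one must recognize that the obstruction to the identity is controlled, fiber by fiber, by the surjectivity of $\mathcal{J}|_{T^*_qQ}$, a property that is equivalent to the infinitesimal freeness of $\phi$. Once the clean factorization $\omega^A=(\pi^{k,A}_Q)^*\omega_Q$ and the identification of $\ker\omega^A(\alpha_q)$ with vertical vectors away from the $A$-th fiber are in place, the construction of the corrector $v_2$ reduces to $k-1$ independent surjectivity arguments, one for each index $B\neq A$, and no additional regularity hypothesis beyond those already assumed is required.
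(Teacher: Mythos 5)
Your proof is correct and takes essentially the same route as the paper: both arguments reduce the identity, factor by factor, to the surjectivity of $\mathcal{J}|_{T^*_qQ}$ forced by infinitesimal freeness (the content of Lemma \ref{lemma 320}) — the paper by writing out the three subspaces as explicit products inside the fibered product $T_{\alpha_q}((T^1_k)^*Q)$ and then invoking that lemma, you by using the same surjectivity to build the vertical corrector $v_2$ directly. The only blemish is cosmetic: $\ker\omega^A(\alpha_q)=\ker T_{\alpha_q}\pi^{k,A}_Q$ consists of vectors that are vertical in \emph{all} factors and vanish in the $A$-th one, not merely ``vertical in every fiber except the $A$-th''; since your $v_2$ is exactly of that form, the argument is unaffected.
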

    \begin{proof}
        From Corollary \ref{k-cotangent lift} it follows that
        \[
            \ker (T_{\alpha_q} J^A) = T_{\alpha_q}((T^1_k)^* Q) \cap \left(
            T_{\alpha^1_q}(T^*Q)\times \cdots \times T_{\alpha^{A-1}_q}(T^*Q)\times \ker (T_{\alpha_q^A}\mathcal{J})\times T_{\alpha^{A+1}_q}(T^*Q)\times \cdots \times T_{\alpha^k_q}(T^*Q)
            \right)
        \]
        and
        \[
            T_{\alpha_q}(J^{-1}\mu) = T_{\alpha_q} ((T^1_k)^*Q) \cap \left( \ker (T_{\alpha^1_q}\mathcal{J}) \times \cdots \times \ker (T_{\alpha^k_q}\mathcal{J}) \right)\,.
        \]

        On the other hand, using (\ref{locexp}), we deduce that
        \[
            \ker(\omega^A(\alpha_q)) = V_{\alpha^1_q}(\pi_Q) \times \cdots\times V_{\alpha^{A-1}_q}(\pi_Q)\times \{0\} \times V_{\alpha^{A+1}_q}(\pi_Q) \times \cdots \times V_{\alpha^k_q}(\pi_Q)\,.
        \]
        Therefore, from Lemma \ref{lemma 320}, the result follows.
    \end{proof}
    \begin{prop}\label{prop322}
        If $\phi\colon G\times Q\to Q$ is infinitesimally free, $\alpha_q=(\alpha^1_q,\ldots \alpha^k_q)\in (T^1_k)^* Q$ and $\mu = J(\alpha_q)\in \mathfrak{g}^*\times \stackrel{k)}{\cdots}\times \mathfrak{g}^*$ then
        \[
            \bigcap_{B=1}^k \left ( \ker\omega^B(\alpha_q) + T_{\alpha_q}(G_{\mu_B}\cdot \alpha_q)\right) = T_{\alpha_q}(G_\mu\cdot \alpha_q)\,.
        \]
    \end{prop}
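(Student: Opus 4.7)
The plan is to prove the two inclusions separately, with the nontrivial one being $\subseteq$. Three ingredients do all the work: (i) the description of $\ker\omega^B(\alpha_q)$ already employed in the proof of Proposition~\ref{prop 321}, from which it follows that $T_{\alpha_q}\pi^k_Q$ vanishes on $\ker\omega^B(\alpha_q)$ (since in each factor $A$ the components are either zero or tangent to the fiber of $\pi_Q$); (ii) the infinitesimal freeness of $\phi$, which makes $\xi\mapsto \xi_Q(q)$ injective; and (iii) the polysymplectic nondegeneracy $\bigcap_{B=1}^k\ker\omega^B(\alpha_q)=\{0\}$.

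The inclusion $\supseteq$ is immediate from Lemma~\ref{isotropy}(1): for $\xi\in\mathfrak{g}_\mu=\bigcap_B\mathfrak{g}_{\mu_B}$, the vector $\xi_{(T^1_k)^*Q}(\alpha_q)$ already sits in every $T_{\alpha_q}(G_{\mu_B}\cdot \alpha_q)$, hence in each summand of the intersection. For the reverse inclusion, I would take $v$ in the intersection and, for every $B=1,\dots,k$, write
\[
v = u_B + \xi^B_{(T^1_k)^*Q}(\alpha_q), \qquad u_B\in\ker\omega^B(\alpha_q),\ \ \xi^B\in\mathfrak{g}_{\mu_B}.
\]
Applying $T_{\alpha_q}\pi^k_Q$ and using that $\xi^B_{(T^1_k)^*Q}$ is $\pi^k_Q$-projectable onto $\xi^B_Q$, the term $u_B$ drops out by ingredient (i), giving $T_{\alpha_q}\pi^k_Q(v) = \xi^B_Q(q)$ for every $B$. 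Thus the vectors $\xi^1_Q(q),\dots,\xi^k_Q(q)$ coincide, and infinitesimal freeness forces $\xi^1=\cdots=\xi^k$; denote this common element by $\xi$. Since $\xi=\xi^B\in\mathfrak{g}_{\mu_B}$ for every $B$, one has $\xi\in\bigcap_B\mathfrak{g}_{\mu_B}=\mathfrak{g}_\mu$.

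It remains to show $v=\xi_{(T^1_k)^*Q}(\alpha_q)$. Setting $u:=v-\xi_{(T^1_k)^*Q}(\alpha_q)$, the decompositions above give $u=u_B\in\ker\omega^B(\alpha_q)$ simultaneously for every $B$, so $u\in\bigcap_{B=1}^k\ker\omega^B(\alpha_q)=\{0\}$ by polysymplectic nondegeneracy. Hence $v=\xi_{(T^1_k)^*Q}(\alpha_q)\in T_{\alpha_q}(G_\mu\cdot\alpha_q)$, which completes the proof. I do not foresee a serious obstacle: the only step that requires care is invoking the explicit form of $\ker\omega^B(\alpha_q)$ to guarantee that $T_{\alpha_q}\pi^k_Q$ kills it, and this is already contained in the proof of Proposition~\ref{prop 321} through the local expression of the canonical $2$-forms on $(T^1_k)^*Q$.
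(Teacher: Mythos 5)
Your proof is correct and follows essentially the same route as the paper: both decompose an element $v$ of the intersection as a vertical part in $\ker\omega^B(\alpha_q)$ plus an infinitesimal generator, project by $\pi^k_Q$, and use infinitesimal freeness to force $\xi^1=\cdots=\xi^k=\xi\in\bigcap_B\mathfrak{g}_{\mu_B}=\mathfrak{g}_\mu$. The only (harmless) variation is the finishing step: you conclude $v-\xi_{(T^1_k)^*Q}(\alpha_q)=0$ from the polysymplectic nondegeneracy $\bigcap_{B=1}^k\ker\omega^B(\alpha_q)=\{0\}$, whereas the paper reads off each component directly from the explicit description of $\ker\omega^B(\alpha_q)$ (whose $B$-th factor is $\{0\}$).
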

    \begin{proof}
        We have that
        \[
            \ker\omega^B(\alpha_q) + T_{\alpha_q}(G_{\mu_B}\cdot \alpha_q) = \{ (v_1^B+\xi_{T^*Q}(\alpha^1_q), \ldots, v^B_k+\xi_{T^*Q}(\alpha^k_q))\,|\, \xi \in \mathfrak{g}_{\mu_B}, v^B_A\in V_{\alpha^A_q}(\pi_Q), \makebox{for every $A$}\}\,.
        \]

        Now, using that the action $\phi$ is infinitesimally free and the fact that $\xi_{T^*Q}$ is a $\pi_Q$-projectable over $\xi_Q$, we deduce that
        \[
            \bigcap_{B=1}^k\left( \ker\omega^B(\alpha_q) + T_{\alpha_q}(G_{\mu_B}\cdot \alpha_q)
            \right) = \{(\xi_{T^*Q}(\alpha^1_q),\ldots, \xi_{T^*Q}(\alpha^k_q))\, \vert\, \xi\in \mathfrak{g}_{\mu_B}, \forall B\}\,.
        \]

        Thus, since $\mathfrak{g}_\mu=\bigcap_{B=1}^k \mathfrak{g}_{\mu_B}$, it follows that
        \[
            T_{\alpha_q} (G_\mu\cdot \alpha_q) = \{(\xi_{T^*Q}(\alpha^1_q), \ldots, \xi_{T^*Q}(\alpha^k_q)) \, \vert\, \xi\in \mathfrak{g}_{\mu_B}, \forall B\}
        \]
        which proves the result.
    \end{proof}

    From Theorem \ref{MW poly th} and Propositions \ref{prop 321} and \ref{prop322}, we conclude that if $\phi$ is an infinitesimally free action, $\mu\in \mathfrak{g}^*\times \stackrel{k)}{\cdots}\times \mathfrak{g}^*$ and $G_\mu$ acts properly on $J^{-1}(\mu)$, then $J^{-1}(\mu) /G_\mu$ is a polysymplectic manifold.

    \subsubsection{Kirillov-Kostant-Souriau theorem for polysymplectic manifolds}
    \label{KKSth}

        In this case, we specialize the above example, taking $Q=G$
        with $G$ acting on itself by left translations, that is,
        $\phi_g\equiv L_g$ for every $g\in G$.

        The momentum mapping of the action $\Phi^{T^*_k}$ is $J\equiv(J^1,\ldots, J^k)\colon
        (T^1_k)^*G\to\mathfrak{g}^*\times\stackrel{k}{\ldots}\times\mathfrak{g}^*$
        where
            \[
                J^A(\alpha^1_g,\ldots,\alpha^k_g)(\xi)=\alpha^A_g\circ T_eR_g(\xi)\,,\quad \xi\in\mathfrak{g} \ ,
            \]
        where $R_g$ denotes the right translation by $g\in G$.

        Using the identification
            \[
                \begin{array}{rcl}
                    (T^1_k)^*G\equiv T^*G\oplus\stackrel{k}{\dots}\oplus T^*G & \cong &
                    G\times (\mathfrak{g}^*\times\stackrel{k}{\ldots}\times \mathfrak{g}^*)\\\noalign{\medskip}
      (\alpha^1_g,\ldots, \alpha^k_g) & \equiv & (g,\alpha^1_g\circ T_eL_g,\ldots,\alpha^k_g\circ T_eL_g)
                \end{array}
            \]
        the momentum mapping $J$ can be written as follows:
            \[
                \begin{array}{rccl}
                    J\colon &   G\times (\mathfrak{g}^*\times\stackrel{k}{\ldots}\times \mathfrak{g}^*) &
                    \to  &\mathfrak{g}^*\times\stackrel{k}{\ldots}\times \mathfrak{g}^*\\\noalign{\medskip}
                    &(g,\nu_1,\ldots\nu_k) &\mapsto & (Coad_g(\nu_1),\ldots,
Coad_g(\nu_k))=Coad^k_g(\nu_1,\ldots\nu_k)\,.
                \end{array}
            \]

        On the other hand, it is well-known that if $\omega$ is the canonical symplectic structure of $T^*G$
 then, under the identification $T^*G\cong G\times \mathfrak{g}^*$, we have that
        \[
            \omega(g,\nu)(((T_eLg)(\xi), \alpha), ((T_eLg)(\eta),\beta)) = -\alpha(\eta) + \beta(\xi) + \nu[\xi,\eta]\,,
        \]
        for $(g,\nu)\in G\times \mathfrak{g}^*,\, \xi,\eta\in \mathfrak{g}$ and
$\alpha,\beta\in \mathfrak{g}^*$ (see, for instance, \cite{AM-1978}).

        Thus, if $(\omega^1,\ldots, \omega^k)$ is the polysymplectic structure on
$(T^1_k)^*G\cong G\times \mathfrak{g}^*\times\stackrel{k}{\ldots}\times \mathfrak{g}^*$ it follows that
        \begin{equation}\label{omegaA_kksth}
            \omega^A(g,\nu_1,\ldots, \nu_k)(((T_eLg)(\xi), \alpha_1,\ldots, \alpha_k), ((T_eLg)(\eta),
\beta_1,\ldots, \beta_k)) = -\alpha_A(\eta) + \beta_A(\xi) + \nu_A[\xi,\eta]\,,
        \end{equation}
        for $g,\in G,\, \xi,\eta\in \mathfrak{g}$ and $(\nu_1,\ldots, \nu_k),\,(\alpha_1,\ldots, \alpha_k),
(\beta_1,\ldots, \beta_k)\in \mathfrak{g}^*\times\stackrel{k}{\ldots}\times \mathfrak{g}^*$.

        Then, if $\mu=(\mu_1,\ldots, \mu_k)\in\mathfrak{g}^*\times \stackrel{k}{\ldots}\times \mathfrak{g}^*$
 we have that
            \[
                J^{-1}(\mu_1,\ldots, \mu_k)=
                \{(g,\nu_1,\ldots\nu_k)\in G\times (\mathfrak{g}^*\times\stackrel{k}{\ldots}\times \mathfrak{g}^*) \
\mid \  Coad_g(\nu_A)=\mu_ A\}\ .
                \]
        Therefore, there exists a diffeomorphism between $J^{-1}(\mu)=J^{-1}(\mu_1,\ldots, \mu_k)$
        and $G$ given by
            \[
                \begin{array}{lcl}
                    G & \to &  J^{-1}(\mu_1,\ldots, \mu_k)\\\noalign{\medskip}
                    g & \to & (g,Coad_{g^{-1}}(\mu_1),\ldots, Coad_{g^{-1}}(\mu_k)) \ .
                \end{array}
            \]
        Thus,
            \[
                J^{-1}(\mu_1,\ldots,\mu_k)/G_{(\mu_1,\ldots,\mu_k)}\cong G/G_{(\mu_1,\ldots,\mu_k)} \cong
                \mathcal{O}_{(\mu_1,\ldots,\mu_k)}\subseteq \mathfrak{g}^*\times\stackrel{k}{\ldots}\times
\mathfrak{g}^*\,,
            \]
        that is, the ``reduced phase space'' is just the orbit of
        the $k$-coadjoint action at $\mu=(\mu_1,\ldots,\mu_k)$.
        As a consequence, as the action of $G$ on itself is free, using the results from Section \ref{ex1}
        we deduce that $\mathcal{O}_{(\mu_1,\ldots,\mu_k)}$ is a polysymplectic manifold.

        \begin{remark} {\rm
            In the case $k=1$, this result reduces to the following: \textit{the orbit of
            $\mu\in\mathfrak{g}^*$ under the coadjoint
            representation is a symplectic manifold}. This is the
            statement of  Kirillov-Kostant-Souriau theorem (
            see, for instance, \cite{AM-1978, MW-1974}).}
        \end{remark}


        Note that, under the previous identifications, the canonical projection
$\pi_\mu\colon J^{-1}(\mu)\to J^{-1}(\mu)/G_\mu$ is just the map $\pi_\mu\colon G\to \mathcal{O}_\mu$
given by $$\pi_\mu(g)=Coad^k_{g^{-1}}\mu$$ and
        \[
            T_g\pi_\mu ((T_eL_g)(\xi))=-\xi_{\mathfrak{g}^*\times\stackrel{k}{\ldots}\times\mathfrak{g}^*}
(Coad^k_{g^{-1}}\mu)
        \]
        for $g\in G$ and $\xi\in \mathfrak{g}$.

        Consequently, using (\ref{omegaA_kksth}) and the fact that
$\pi_\mu^*\omega^A_\mu=i^*\omega^A$, it follows that
            \begin{equation}\label{kkspolystruc}
                \omega^A_\mu(\nu)\left(\xi_{\mathfrak{g}^*\times\stackrel{k}{\ldots}\times
        \mathfrak{g}^*}(\nu), \eta _{\mathfrak{g}^*\times\stackrel{k}{\ldots}\times
        \mathfrak{g}^*}(\nu)\right) = -\nu_A[\xi,\eta]\,,
            \end{equation}
            for $\nu\in \mathcal{O}_\mu$ and $\xi,\eta\in \mathfrak{g}$.

        Observe that this polysymplectic structure coincides
        with the polysymplectic structure on
        $\mathcal{O}_{(\mu_1,\ldots,\mu_k)}$ described in (\ref{polysymp-orbit}).

%
%
        Now we consider the Kirillov-Kostant-Souriau theorem for the special case when
        $G=SO(3)$ (the rotation group), and we calculate the reduced polysymplectic structure.
        First, we briefly recall
        the main formulas regarding the special orthogonal group $SO(3)$,
        its Lie algebra $\mathfrak{so}(3)$, and its dual $\mathfrak{so}(3)^*$
        (for more details see, for instance, \cite{RTSST-2005})

        The Lie algebra $\mathfrak{so}(3)$ of $SO(3)$ can be identified with
        $\mathbb{R}^3$ as follows: we define the vector space isomorphism
        $\hat{\;}\colon\mathbb{R}^3\to \mathfrak{so}(3)$, by
$$
            \mathbf{x}=(x_1,x_2,x_3) \mapsto \hat{\mathbf{x}}=\left(
                                                                  \begin{array}{ccc}
                                                                    0 & -x_3 & x_2 \\
                                                                    x_3 & 0 & -x_1 \\
                                                                    -x_2 & x_1 & 0 \\
                                                                  \end{array}
                                                                \right).
$$
        As $(\mathbf{x}\times\mathbf{y})
        \hat{\;}=[\hat{\mathbf{x}},\hat{\mathbf{y}}]$, the map $\hat{\;}$ is
        a Lie algebra isomorphism between $\mathbb{R}^3$, with the cross
        product, and $(\mathfrak{so}(3),[\cdot,\cdot])$, where
        $[\cdot,\cdot]$ is the commutator  of matrices.

        Note that the identity
            \[
               \hat{\mathbf{x}}\mathbf{y}=\mathbf{x}\times\mathbf{y}\makebox{ for every }
               \mathbf{x},\mathbf{y}\in\mathbb{R}^3
            \]
        characterizes this isomorphism. We also note that the standard ``dot''
        product may be written as
            \[
                \mathbf{x}\cdot\mathbf{y}=\frac{1}
                {2}trace(\hat{\mathbf{x}}^{T}\hat{\mathbf{y}})=-\frac{1}
                {2}trace(\hat{\mathbf{x}}\hat{\mathbf{y}}) \ .
            \]

         It is well known that the adjoint representation
         $Ad\colon SO(3)\to  Aut(\mathfrak{so}(3))$ is given by
$$
            Ad_A\hat{\mathbf{x}}= A\hat{\mathbf{x}}A^T = (A\mathbf{x}) \hat{\;} \ ,
$$
    for every $A\in SO(3)$ and $\hat{\mathbf{x}}\in \mathfrak{so}(3)$.
         Using the isomorphism $\hat{\;}$, this action can be regarded as the
         action of $SO(3)$ on $\mathbb{R}^3$, given by
         $Ad_A\mathbf{x}=A\mathbf{x}$.

        The dual $\mathfrak{so}(3)^*$ is identified with
        $(\mathbb{R}^3,\times)$  by the isomorphism
        $\bar{\;}\colon\mathbb{R}^3\to \mathfrak{so}(3)^*$ given by
        $\bar{\mathbf{x}}(\hat{\mathbf{y}})\colon =
        \mathbf{x}\cdot\mathbf{y}$ for every
        $\mathbf{x},\mathbf{y}\in\mathbb{R}^3$. Then the coadjoint action of
        $SO(3)$ on $\mathfrak{so}(3)$ is given by
$$
            Coad(A,\bar{\mathbf{x}})=Ad_{A^{-1}}^*\bar{\mathbf{x}} = (A\mathbf{x}) \bar{\;}\,.
$$

        It is well known that the \textit{coadjoint orbit} associated to $SO(3)$ at
        $\pi_0\in\mathbb{R}^3\equiv\mathfrak{so}(3)^*$ ($\pi_0\neq (0,0,0)$)
is the $2$-sphere $S^2(||\pi_0||)$ and it has a symplectic
        structure given by
            \begin{equation}\label{symplorbit}
                \omega_{\pi_o}(\pi)(\xi,\eta)=-\pi\cdot (\xi\times\eta) \ ,
            \end{equation}
        where $\pi\in \mathcal{O}_{\pi_0}\equiv S^2(||\pi_0||)$, and
        $\xi,\eta\in T_\pi \mathcal{O}_{\pi_0}
        =\{\mathbf{v}\in\mathbb{R}^3\equiv T_{\pi}\mathbb{R}^3 \,\vert\,
        \mathbf{v}\in T_\pi S^2(||\pi_0||)\}$.

        Now we describe the $2$-coadjoint orbit at
        $\mu=(\mu^0_1,\mu^0_2)\in \mathfrak{so}(3)^*\times
        \mathfrak{so}(3)^*$.
        Using the above identifications, the $2$-coadjoint action
        $Coad^2\colon SO(3)\times \mathfrak{so}(3)^*\times
        \mathfrak{so}(3)^* \to \mathfrak{so}(3)^*\times \mathfrak{so}(3)^*$
        can be identified with the natural action
        \[\begin{array}{rccl}
         Coad^2\colon & SO(3)\times \mathbb{R}^3\times\mathbb{R}^3 & \to &
          \mathbb{R}^3\times\mathbb{R}^3\\\noalign{\medskip}
          & (A, \mathbf{\pi}_1,\mathbf{\pi}_2) & \mapsto & (A\mathbf{\pi}_1,A\mathbf{\pi}_2)
        \end{array}  \ .
        \]
        Then,  the
        $2$-coadjoint orbit $SO(3)\cdot (\mathbf{\pi}_1^0,\mathbf{\pi}_2^0)$
        at $ (\mathbf{\pi}_1^0,\mathbf{\pi}_2^0)\in
        \mathbb{R}^3\times\mathbb{R}^3$ is
        \[
            \mathcal{O}_{(\mathbf{\pi}_1^0,
            \mathbf{\pi}_2^0)}=\{(A\mathbf{\pi}_1^0,A\mathbf{\pi}_2^0)
            \in \mathbb{R}^3\times\mathbb{R}^3
             \ \mid \  A\in SO(3)\}\,.
        \]

        We distinguish the following cases:

        \begin{enumerate}
            \item \textit{The trivial case: $(\mathbf{\pi}_1^0,\mathbf{\pi}_2^0)= (0,0)$.}

                 In this case it is immediate that
                 $\mathcal{O}_{(\mathbf{\pi}_1^0,\mathbf{\pi}_2^0)}=0$.
            \item \textit{$\mathbf{\pi}_1^0 $ and $\mathbf{\pi}_2^0$ are
            linearly dependent and $(\mathbf{\pi}_1^0,\mathbf{\pi}_2^0)\neq (0,0)$.}

                Assume that $\mathbf{\pi}^0_1\neq 0$ and
                $\mathbf{\pi}^0_2= \lambda_0\mathbf{\pi}^0_1$,
                with $\lambda_0\in \mathbb{R}$. Then,
                \[
                    \begin{array}{lcl}
                        \mathcal{O}_{(\mathbf{\pi}_1^0,\mathbf{\pi}_2^0)}
                        &=& \{(A\mathbf{\pi}^0_1,\lambda_0 A\mathbf{\pi}^0_1)
                        \in \mathbb{R}^3\times   \mathbb{R}^3 \ \mid \  A\in SO(3)\}
                         \\\noalign{\medskip}&=&
                         \{(\pi,\lambda_0\pi)\in \mathbb{R}^3\times   \mathbb{R}^3 \ \mid \
                         \pi\in S^2(||\mathbf{\pi}^0_1||)\}
                         \\\noalign{\medskip}&\cong & \{\pi\in\mathbb{R}^3 \ \mid \
                         \pi\in S^2(||\mathbf{\pi}^0_1||)\} = S^2(||\mathbf{\pi}^0_1||)\,.
                    \end{array}
                \]
                We know that the orbit
                $\mathcal{O}_{(\mathbf{\pi}_1^0,\mathbf{\pi}_2^0)}$
                (and therefore $S^2(||\mathbf{\pi}^0_1||)$) is a
                polysymplectic manifold. Then, let $\pi\in S^2(||\mathbf{\pi}^0_1||)
                \equiv\mathcal{O}_{(\mathbf{\pi}_1^0,\mathbf{\pi}_2^0)}$; therefore
                    \[
                        T_{(\pi,\lambda_0\pi)}\mathcal{O}_{(\mathbf{\pi}_1^0,\mathbf{\pi}_2^0)} =
                        \{(\mathbf{v},\lambda_0\mathbf{v})\in \mathbb{R}^3\times \mathbb{R}^3 \equiv
                        T_\pi\mathbb{R}^3\times T_{\lambda_0\pi}\mathbb{R}^3 \ \mid \
                        \mathbf{v}\in T_{\pi}S^2(||\pi^0_1||)\} \ .
                    \]
                From (\ref{kkspolystruc}) and (\ref{symplorbit})
                we obtain the polysymplectic structure of
                $\mathcal{O}_{(\mathbf{\pi}_1^0,\mathbf{\pi}_2^0)}$:
                for $\pi\in S^2(||\mathbf{\pi}^0_1||),
                \mathbf{u},\mathbf{v}\in
                T_{\pi}S^2(||\pi^0_1||)$, this polysymplectic
                structure is given by
                \[
                    \begin{array}{lcl}
                        \omega^1_{(\pi^0_1,\pi^0_2)}(\pi,\lambda_0\pi)((\mathbf{u},
                        \lambda_0\mathbf{u}),(\mathbf{v},\lambda_0\mathbf{v})) &=&
                         -\pi\cdot (\mathbf{u}\times\mathbf{v})\\\noalign{\medskip}
                        \omega^2_{(\pi^0_1,\pi^0_2)}(\pi,\lambda_0\pi)((\mathbf{u},
                        \lambda_0\mathbf{u}),(\mathbf{v},\lambda_0\mathbf{v})) &=&
                         -\lambda_0\,\pi\cdot (\mathbf{u}\times\mathbf{v}) \ .
                    \end{array}
                \]
                Thus, under the canonical identification between
                $\mathcal{O}_{(\mathbf{\pi}_1^0,\mathbf{\pi}_2^0)}$
                and $S^2(||\mathbf{\pi}^0_1||)$, the
                $2$-polysymplectic structure on
                $S^2(||\mathbf{\pi}^0_1||)$ is given by
                \[
                    \begin{array}{lcl}
                        \omega^1(\pi)(\mathbf{u},\mathbf{v}) &=& -\pi\cdot
                         (\mathbf{u}\times\mathbf{v})\\\noalign{\medskip}
                        \omega^2(\pi)(\mathbf{u},\mathbf{v})&=& -\lambda_0\,\pi\cdot
                         (\mathbf{u}\times\mathbf{v}) \ .
                    \end{array}
                \]
            \item \textit{$\mathbf{\pi}_1^0 $ and $\mathbf{\pi}_2^0$ are
            linearly independent.}

            In this case there exist a diffeomorphism between
            $\mathcal{O}_{(\mathbf{\pi}_1^0,\mathbf{\pi}_2^0)}$ and
            $SO(3)$ given by the map
$$
                \begin{array}{rccl}
                            Coad^2_{(\mathbf{\pi}_1^0,\mathbf{\pi}_2^0)}\colon &SO(3) & \to &
                            \mathcal{O}_{(\mathbf{\pi}_1^0,\mathbf{\pi}_2^0)}\\\noalign{\medskip}
                             &A & \mapsto &   Coad^2_{(\mathbf{\pi}_1^0,\mathbf{\pi}_2^0)}
                             (A)=(A\pi^0_1,A\pi^0_2)\,.
                \end{array}
$$
            We need only to prove that this map is injective. Assume that
            $A, A'\in SO(3)$ are such that
            $Coad^2_{(\mathbf{\pi}_1^0,\mathbf{\pi}_2^0)}(A)=Coad^2_{(\mathbf{\pi}_1^0,
            \mathbf{\pi}_2^0)}(A')$,
            then for every $i=1,2$, $(A^TA')\pi^0_i=\pi^0_i$.
            Let $U^0=\langle\pi^0_1,\pi^0_2\rangle$ be the $2$-dimensional
            subspace of $\mathbb{R}^3$ generated by $\pi^0_1$ and
            $\pi^0_2$. If $B\colon=A^{-1}A'$, then $B\pi=\pi$ for every             $\pi\in U^0$.
            Now consider an orthonormal basis
            $\{\bar{\pi}^0_1,\bar{\pi}^0_2\}$ of $U^0$ and extend it  to
            a positively oriented orthonormal basis of  $\mathbb{R}^3$;
            that is,
            \[
                \{\bar{\pi}^0_1,\bar{\pi}^0_2,\bar{\pi}^0_3=\bar{\pi}^0_1\times \bar{\pi}^0_2\}\,.
            \]
            As $B\in SO(3)$ and $(U^0)^\bot=<\bar{\pi}^0_3>$, we
            obtain that $B\bar{\pi}^0_3\in (U^0)^\bot$; that is,
            $B\bar{\pi}^0_3=\lambda_0\bar{\pi}^0_3$, but as
            $B\bar{\pi}^0_3$ is unitary and
            $\{B\bar{\pi}^0_1,B\bar{\pi}^0_2,B\bar{\pi}^0_3\}$  must be
            a positively oriented basis, we deduce that $\lambda_0=1$.
            Therefore,
            \[
                B\pi=\pi\quad \forall \pi\in\mathbb{R}^3
            \]
            and so $B=A^{-1}A'=I$, that is $A= A'$.
            Therefore, we can identify
            $\mathcal{O}_{(\mathbf{\pi}_1^0,\mathbf{\pi}_2^0)}$
            with $SO(3)$. We know that
            $\mathcal{O}_{(\mathbf{\pi}_1^0,\mathbf{\pi}_2^0)}$
            is a $2$-polysymplectic manifold, and we will describe
            this structure.

            The diffeomorphism
            $Coad^2_{(\mathbf{\pi}_1^0,\mathbf{\pi}_2^0)}$ is
            equivariant with respect to the action of $SO(3)$ on itself
            by left translations and the action $Coad^2$ of
            $SO(3)$ on
            $\mathcal{O}_{(\mathbf{\pi}_1^0,\mathbf{\pi}_2^0)}$,
            that is, the following condition holds for every
            $A\in SO(3)$,
            \[
                Coad^2_A\circ Coad^2_{(\pi^0_1,\pi^0_2)}=Coad^2_{(\pi^0_1,\pi^0_2)}\circ L_A\,.
            \]

            \begin{lemma}
                The $2$-polysymplectic structure on
                $\mathcal{O}_{(\mathbf{\pi}_1^0,\mathbf{\pi}_2^0)}$
                is invariant by the action $Coad^2$.
            \end{lemma}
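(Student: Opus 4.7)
The plan is to verify the invariance by direct computation, using formula \eqref{kkspolystruc}, which expresses $\omega^A_\mu$ purely in terms of the Lie-algebraic data. Since $\mathcal{O}_{(\pi_1^0,\pi_2^0)}$ is a $Coad^2$-orbit, every tangent vector at a point $\nu \in \mathcal{O}_{(\pi_1^0,\pi_2^0)}$ is of the form $\xi_{\mathfrak{g}^*\times\mathfrak{g}^*}(\nu)$ for some $\xi \in \mathfrak{so}(3)$. Hence it suffices to evaluate both sides on such vectors.

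First, I would record the standard equivariance of infinitesimal generators: for every $B \in SO(3)$, $\nu \in \mathfrak{g}^*\times\mathfrak{g}^*$ and $\xi \in \mathfrak{g}$,
\[
   T_\nu Coad^2_B\bigl(\xi_{\mathfrak{g}^*\times\mathfrak{g}^*}(\nu)\bigr)
   = (Ad_B\xi)_{\mathfrak{g}^*\times\mathfrak{g}^*}\bigl(Coad^2_B(\nu)\bigr),
\]
obtained by differentiating the identity $Coad^2_B \circ Coad^2_{\exp(t\xi)} = Coad^2_{B\exp(t\xi)B^{-1}} \circ Coad^2_B$ at $t=0$. This formula, together with the coadjoint-invariance of the natural pairing, i.e. $(Coad_B\nu_A)(Ad_B\zeta) = \nu_A(\zeta)$, will be the two ingredients needed.

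Second, I would combine these with \eqref{kkspolystruc}. Writing $\xi_* := \xi_{\mathfrak{g}^*\times\mathfrak{g}^*}$ for brevity and using that $Ad_B$ is a Lie algebra automorphism, a direct calculation gives
\begin{align*}
\bigl((Coad^2_B)^*\omega^A_\mu\bigr)(\nu)\bigl(\xi_*(\nu),\eta_*(\nu)\bigr)
 &= \omega^A_\mu\bigl(Coad^2_B\nu\bigr)\bigl((Ad_B\xi)_*(Coad^2_B\nu),(Ad_B\eta)_*(Coad^2_B\nu)\bigr) \\
 &= -(Coad_B\nu_A)\bigl[Ad_B\xi,Ad_B\eta\bigr] \\
 &= -(Coad_B\nu_A)\bigl(Ad_B[\xi,\eta]\bigr) \\
 &= -\nu_A[\xi,\eta]
 = \omega^A_\mu(\nu)\bigl(\xi_*(\nu),\eta_*(\nu)\bigr),
\end{align*}
which is exactly the desired invariance for $A=1,2$.

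There is no substantial obstacle: the whole proof is a componentwise replay of the classical Kirillov--Kostant--Souriau invariance, and the 2-polysymplectic aspect contributes only the harmless index $A$. The only point requiring a small amount of care is the bookkeeping between $Coad^2$ acting on the pair $\nu=(\nu_1,\nu_2)$ and its component $Coad$-action on each $\nu_A$; once this is made explicit, the computation above closes the proof. Alternatively, one could give a more conceptual argument by observing that $\omega^A_\mu$ is obtained by Marsden--Weinstein reduction of the canonical polysymplectic structure on $(T^1_k)^*SO(3)$ by the lifted left translation action, that right translations lift to polysymplectomorphisms of $(T^1_k)^*SO(3)$ commuting with the left ones, and that under the identification $J^{-1}(\mu)/G_\mu \cong \mathcal{O}_\mu$ these descend precisely to the $Coad^2$-action.
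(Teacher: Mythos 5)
Your proof is correct, but it follows a different route from the paper's. You verify the identity $(Coad^2_B)^*\omega^A_\mu=\omega^A_\mu$ by a direct pointwise computation: you use the equivariance of infinitesimal generators $T_\nu Coad^2_B\bigl(\xi_{\mathfrak{g}^*\times\mathfrak{g}^*}(\nu)\bigr)=(Ad_B\xi)_{\mathfrak{g}^*\times\mathfrak{g}^*}(Coad^2_B\nu)$, the $Ad$-invariance of the pairing, and the explicit formula (\ref{kkspolystruc}), so that the whole argument collapses to $-(Coad_B\nu_A)\bigl(Ad_B[\xi,\eta]\bigr)=-\nu_A[\xi,\eta]$. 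The paper instead argues structurally: it writes $\omega^A_{(\pi^0_1,\pi^0_2)}=pr_A^*\omega_{\pi^0_A}$ (Proposition \ref{A4}), cites the known $Coad$-invariance of the Kirillov--Kostant--Souriau form on each single coadjoint orbit from \cite{AM-1978}, and then uses the equivariance relation $pr_A\circ Coad^2_B=Coad_B\circ pr_A$ to pull the invariance back in one line of functorial manipulation. Your computation is self-contained (it effectively re-proves the classical invariance componentwise), while the paper's argument is shorter because it delegates that computation to a standard reference and isolates the only genuinely new ingredient, namely the equivariance of the projections $pr_A$. Your closing remark about obtaining the invariance from the reduction picture on $(T^1_k)^*SO(3)$ is a third, also viable, viewpoint, though it is heavier than what is needed here. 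In all cases the conclusion and the underlying mechanism (equivariance plus $Ad$-invariance of the bracket pairing) are the same, so there is no gap.
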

            \begin{proof}
                Let $\omega_{\pi^0_i}$ be the symplectic
                structure on $\mathcal{O}_{\pi^0_i},\ i=1,2$.
                This structure is invariant by the action $Coad$
                (see \cite{AM-1978}, pag 485).
	Furthermore,
                    \[
                        \omega^i_{(\pi^0_1,\pi^0_2)}=pr_i^*\omega_{\pi^0_i}\ ,
                    \]
                where $pr_i\colon\mathcal{O}_{(\pi^0_1,\pi^0_2)} \to\mathcal{O}_{\pi^0_i}$
                is the projection (see Proposition \ref{A4}).
                Thus, we obtain:
                \[
                    \begin{array}{lcl}
                         (Coad^2_A)^*\omega^i_{(\pi^0_1,\pi^0_2)} &=&
                         (pr_i\circ Coad^2_A)^*\omega_{\pi^0_i}
                         = (Coad_A\circ pr_i)^*\omega_{\pi^0_i} \\
                         &=& pr_i^*\left((Coad_A)^*\omega_{\pi^0_i}\right)
                         = pr_i^*\omega_{\pi^0_i} = \omega^i_{(\pi^0_1,\pi^0_2)}\,.
                    \end{array}
                \]
            \end{proof}

            As a consequence of the above lemma, we have that
            the $2$-polysymplectic structure
            $(\omega^1,\omega^2)$ induced on $SO(3)$ by the
            diffeomorphism $Coad^2_{(\pi^0_1,\pi^0_2)}$ is
            invariant by left translations. Therefore, it is
            sufficient to compute $\omega^1(Id)$ and
            $\omega^2(Id)$.
            Using (\ref{kkspolystruc}), (\ref{symplorbit}) and the fact that the $2$-polysymplectic structure on
            $SO(3)$ is defined by
            \[
                \omega^A\colon=\left(Coad^2_{(\pi^0_1,\pi^0_2)}\right)^*
                \omega^A_{(\pi^0_1,\pi^0_2)}\,,\; A=1,2\, ,
            \]
            we deduce that
$$
                \begin{array}{lcl}
                    \omega^A(Id)(\hat{\xi}_1,\hat{\xi}_2)=  -\pi^{03}_A\,,\\\noalign{\medskip}
                    \omega^A(Id)(\hat{\xi}_2,\hat{\xi}_3)=  -\pi^{01}_A\,,\\\noalign{\medskip}
                    \omega^A(Id)(\hat{\xi}_3,\hat{\xi}_1)=  -\pi^{02}_A\,,
                \end{array}\quad A=1,2
$$
            where $\pi^0_A=(\pi^{01}_A,\pi^{02}_A,\pi^{03}_A)\in
            \mathbb{R}^3\equiv\mathfrak{so}(3)^*$.

            Finally, let $\{\xi_1,\xi_2,\xi_3\}$ be the
            canonical basis  of
            $\mathfrak{so}(3)\cong\mathbb{R}^3$ and
            $\{\xi^1,\xi^2,\xi^3\}$ the dual basis of
            $\mathfrak{so}(3)^*\cong\mathbb{R}^3$. We denote by
            $\{\theta^1,\theta^2,\theta^3\}$ the basis of left
            invariant $1$-forms on $SO(3)$ given by
            \[
                \theta^i(A)=\left(T_A^*L_{A^{-1}}\xi^i\right)(A);\quad A\in SO(3), \, i=1,2,3\,,
            \]
            then we have  that
$$
         \begin{array}{lcl}
                    \omega^1 &=&
        -\pi^{03}_1\theta^1\wedge\theta^2-\pi^{01}_1\theta^2\wedge\theta^3-\pi^{02}_1
        \theta^3\wedge\theta^1\,,\\
                    \noalign{\medskip}
                    \omega^2 &=&
  -\pi^{03}_2\theta^1\wedge\theta^2-\pi^{01}_2\theta^2\wedge\theta^3-\pi^{02}_2\theta^3\wedge\theta^1\,.
                \end{array}
$$
        \end{enumerate}

\section{Polysymplectic Hamiltonian Systems on the reduced space}
\label{Ham-red}

In this Section we study Hamiltonian systems in the reduced space.
First, a brief description of the dynamics in polysymplectic manifolds is done.

\subsection{Hamiltonian systems on polysymplectic manifolds}

        The dynamics in a polysymplectic manifold $(M,\omega^1,\ldots, \omega^k)$ is introduced
        by giving a Hamiltonian function  $ H \colon M\to \r$. The dynamics
        is given by $k$-vector fields; thus, we first recall
        this notion (see for instance \cite{MRS-2004}), which is a
         natural extension of the notion of a vector field.

        Let $M$ be an arbitrary manifold and $\tau^k_M\colon T^1_kM\to M$
 its tangent bundle of $k^1$-velocities, that is the Whitney sum of $k$
copies of the tangent bundle (for a complete description of this manifold, see for instance \cite{RSV-2007}).

        \begin{definition}
            A $k$-vector field ${\bf X}$ on $M$ is a section $\mathbf{X}\colon M\to T^1_kM$ of $\tau^k_M$.
        \end{definition}

        Since $T^1_kM$ may be canonically identified with the Whitney sum of
$k$ copies of $TM$, we deduce that a $k$-vector field $\mathbf{X}$ defines $k$ vector fields
 $X_1,\ldots, X_k$ on $M$ by projecting $\mathbf{X}$ onto every factor.
From now on, we will identify $\mathbf{X}$ with the $k$-tuple $(X_1,\ldots,
        X_k)$. Throughout this paper we denote by $\mathfrak{X}^k(M)$ the
        set of $k$-vector fields on $M$.

        Now assume that $M$ is a polysymplectic manifold with polysymplectic structure
$(\omega^1,\ldots, \omega^k)$. We define a vector bundle morphism $\flat_\omega$ as follows:
        \[
            \begin{array}{rccl}
                \flat_\omega\colon & T^1_kM & \to &T^*M\\\noalign{\medskip}
                 & (v_1,\ldots, v_k) & \mapsto & \flat_\omega(v_1,\ldots, v_k)=
                 \ds\sum_{A=1}^k\imath_{v_A}\omega^A\,.
            \end{array}
        \]
        The above morphism induces a morphism of $\mathcal{C}^\infty(M)$-modules
        between the corresponding space of sections,
        $\flat_\omega\colon \mathfrak{X}^k(M)\to \Omega^1(M)$.

        \begin{lemma}
            The map $\flat_\omega$ is surjective.
        \end{lemma}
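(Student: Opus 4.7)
The plan is to prove surjectivity pointwise first and then lift to sections by a standard argument using partitions of unity. Since $\flat_\omega : T^1_kM \to T^*M$ is a vector bundle morphism over the identity of $M$, it is enough to show that at every point $p \in M$ the linear map
\[
\flat_\omega(p) : \underbrace{T_pM \oplus \cdots \oplus T_pM}_{k\text{ times}} \longrightarrow T_p^*M, \qquad (v_1,\ldots,v_k) \longmapsto \sum_{A=1}^k \imath_{v_A}\omega^A(p)
\]
is surjective; the existence of local smooth right inverses and a partition of unity argument then produces, for every $\alpha \in \Omega^1(M)$, a $k$-vector field $\mathbf{X}\in \mathfrak{X}^k(M)$ with $\flat_\omega(\mathbf{X})=\alpha$.

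To verify fiberwise surjectivity, I would pass to the transpose: $\flat_\omega(p)$ is onto if and only if its adjoint
\[
\flat_\omega(p)^* : T_pM \longrightarrow T_p^*M \oplus \cdots \oplus T_p^*M
\]
is injective. A direct computation, using $\omega^A(v,w) = -(\imath_w\omega^A)(v)$, identifies this adjoint as
\[
\flat_\omega(p)^*(w) = \bigl(-\imath_w\omega^1(p),\ldots,-\imath_w\omega^k(p)\bigr).
\]
Consequently, $\ker \flat_\omega(p)^* = \displaystyle\bigcap_{A=1}^k \ker\,\omega^A(p)$, which vanishes by the polysymplectic condition (\ref{poly-cond}). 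Hence $\flat_\omega(p)^*$ is injective and $\flat_\omega(p)$ is surjective for every $p$.

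For the passage from fiberwise surjectivity to surjectivity of sections, I would observe that the constant fiber dimension of the image (equal to $\dim T_p^*M$) forces $\ker \flat_\omega$ to be a smooth subbundle of $T^1_kM$, so $\flat_\omega$ admits local smooth sections: around each $p$ there is an open neighbourhood $U_p$ and a smooth vector bundle map $s_p : T^*M|_{U_p} \to T^1_kM|_{U_p}$ with $\flat_\omega \circ s_p = \mathrm{Id}$. Given $\alpha \in \Omega^1(M)$ and a partition of unity $\{\rho_i\}$ subordinate to a locally finite refinement $\{U_i\}$ of such a cover, the $k$-vector field $\mathbf{X} = \sum_i \rho_i\, s_i(\alpha|_{U_i})$ satisfies $\flat_\omega(\mathbf{X}) = \sum_i \rho_i\, \alpha = \alpha$, by $\mathcal{C}^\infty(M)$-linearity of $\flat_\omega$.

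The only conceptual point is the pointwise surjectivity, and that is immediate from the duality between $\bigcap_A \ker\,\omega^A = 0$ and the surjectivity of the sum map; I do not expect any real obstacle, since the rest of the argument is the standard lift of a fiberwise surjective bundle morphism to sections.
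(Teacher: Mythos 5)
Your proof is correct and follows essentially the same route as the paper: both reduce to the fiberwise linear-algebra statement and obtain surjectivity by dualizing, observing that the transpose of $\flat_\omega(p)$ is (up to sign) the map $w \mapsto (\imath_w\omega^1(p),\ldots,\imath_w\omega^k(p))$, whose kernel $\bigcap_{A}\ker\,\omega^A(p)$ vanishes by the polysymplectic condition. The only difference is that you additionally spell out the standard passage from fiberwise surjectivity to surjectivity on sections (local right inverses plus a partition of unity), which the paper leaves implicit.
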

        \begin{proof}
            This result is a particular case of the following
            algebraic assertion: \textit{If $V$ is a vector space with
            a $k$-polysymplectic structure $(\omega^1,\ldots,
            \omega^k)$, then the map
            \[
            \begin{array}{rccl}
                \flat_\omega\colon & V\times\stackrel{k}{\ldots}\times V & \to &V^*\\\noalign{\medskip}
                 & (v_1,\ldots, v_k) & \mapsto & \flat_\omega(v_1,\ldots, v_k)=
                \ds\sum_{A=1}^k\imath_{v_A}\omega^A\,
            \end{array}
        \] is surjective.}

        In fact, we first consider the identification
            \beq
                \begin{array}{rlcl}
                    F\colon &V^*\times \stackrel{k}{\ldots}\times V^* &
                     \cong &(V\times\stackrel{k}{\ldots}\times V)^*\\\noalign{\medskip}
                    &(\alpha^1,\ldots,\alpha^k ) & \mapsto & F(\alpha^1,\ldots,\alpha^k) \ ,
                \end{array}
                \label{ident2}
            \eeq
            where $F(\alpha^1,\ldots,\alpha^k )(v_1,\ldots, v_k)={\rm trace}\big(\alpha^A(v_B)\big)=
\ds\sum_{A=1}^k\alpha^A(v_A)$.
            Now, we consider the map
            \[
            \begin{array}{rccl}
                \sharp_\omega\colon & V & \to &(V\times\stackrel{k}{\ldots}\times V)^*\equiv
                V^*\times\stackrel{k}{\ldots}\times V^*\\\noalign{\medskip}
                 & v & \mapsto & \sharp_\omega(v)=(\imath_v\omega^1,\ldots,\imath_v\omega^k)\ .
            \end{array}
        \]

        As $(\omega^1,\ldots, \omega^k)$ is a polysymplectic
        structure, we have $\ker\,\sharp_\omega=\cap_{A=1}^k
        \ker\,\omega^A=\{0\}$, that is, $\sharp_\omega$ is injective
        and thus the dual map $\sharp_\omega^*$ is surjective.

        Finally, using the identification (\ref{ident2}), it is immediate to prove
        that $\flat_\omega=-\sharp_\omega^*$ and therefore
        $\flat_\omega$ is surjective.
        \end{proof}

        Let $ H\in\mathcal{C}^\infty(M)$ be a function on $M$.
        As $d H\in \Omega^1(M)$ and the map $\flat_\omega$ is surjective,
        then there exists a $k$-vector field $\mathbf{X}^ H=(X_1^ H,\ldots, X_k^ H)$ satisfying
        \begin{equation}\label{poly ham eq}
            \flat_\omega(X_1^ H,\ldots, X_k^ H)=d H\,.
        \end{equation}
        This equation (\ref{poly ham eq}) is called the \textit{Hamiltonian polysymplectic equation}.

        \begin{remark}
        {\rm Observe that the solution to (\ref{poly ham eq}) is not, in general,  unique.}
        \end{remark}

        When we consider standard polysymplecic structures
        (that is, when $M$ has an atlas of canonical charts for $(\omega^1,\ldots, \omega^k)$,
        i.e. charts in which locally $(\omega^1,\ldots, \omega^k)$
 is written as the canonical model, see (\ref{locexp})),
        we obtain the classical local formulation of the Hamilton equations.

\subsection{Reduced polysymplectic Hamiltonian systems}

Now we want to induce Hamiltonian polysymplectic systems on the reduced phase space.

\begin{theorem}\label{reduction_dinamic}
    Under the assumptions of Theorem \ref{MW poly th},
     let $H\colon M\to \mathbb{R}$ be a Hamiltonian function which is invariant
     under the action of $G$. We denote by
     $\mathbf{X}^H=(X^H_1,\ldots, X^H_k)$ the $k$-vector field
     associated with $H$ which is a solution to (\ref{poly ham eq}). Assume that each $X^H_A$ satisfies:
    \begin{itemize}
        \item it is $G$-invariant; that is,
            \begin{equation}\label{vfGinv}
                T(\Phi_g)(X^H_A)=X^H_A\ ,\  {\rm for}\  \ g\in G\ , A=1,\ldots, k\ .
            \end{equation}
        \item The restriction $X^H_A\vert_{J^{-1}(\mu)}$ is tangent to $J^{-1}(\mu)$.
    \end{itemize}

    Then the flows $F^A_t$ of $X^H_A$ leave $J^{-1}(\mu)$
    invariant and commute with the action of $G_\mu$ on $J^{-1}(\mu)$,
    so they induce canonically flows $F^A_{t\mu}$ on $J^{-1}(\mu)/G_\mu$
    satisfying that $\pi_\mu \circ F^A_t=F^A_{t\mu}\circ\pi_\mu$.
    If $Y_A$ is the generator of $F^A_{t\mu}$ then $(Y_1,\ldots, Y_k)$
    is a solution to the Hamiltonian polysymplectic system on
    $J^{-1}(\mu)/G_\mu$ associated with a Hamiltonian function
    $H_\mu\colon J^{-1}(\mu)/G_\mu\to \mathbb{R}$
    satisfying that $H_\mu\circ \pi_\mu=H\circ i$.
     $H_\mu$ is called the {\rm reduced Hamiltonian function}.
\end{theorem}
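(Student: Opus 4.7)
The plan is to verify the three assertions in order and then deduce the reduced Hamiltonian equation by pulling back along $\pi_\mu$ and invoking the defining property $\pi_\mu^*\omega_\mu^A = i^*\omega^A$ from Theorem \ref{MW poly th}.

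First, I would check that the flow $F_t^A$ of $X_A^H$ preserves $J^{-1}(\mu)$: this is immediate from the tangency hypothesis $X_A^H|_{J^{-1}(\mu)}\in \mathfrak{X}(J^{-1}(\mu))$, which says that the integral curves of $X_A^H$ starting in $J^{-1}(\mu)$ stay there. Next, the $G$-invariance condition (\ref{vfGinv}) is equivalent to $\Phi_g\circ F_t^A = F_t^A\circ \Phi_g$ for every $g\in G$, and in particular for every $g\in G_\mu$. Hence the restricted flow on $J^{-1}(\mu)$ commutes with the $G_\mu$-action, and because $\pi_\mu\colon J^{-1}(\mu)\to J^{-1}(\mu)/G_\mu$ is a surjective submersion realizing the orbit quotient, the flow descends to a well-defined flow $F_{t\mu}^A$ on $J^{-1}(\mu)/G_\mu$ characterized by $\pi_\mu\circ F_t^A = F_{t\mu}^A\circ \pi_\mu$. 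Denoting by $Y_A$ its infinitesimal generator, this identity differentiated at $t=0$ gives that $X_A^H|_{J^{-1}(\mu)}$ is $\pi_\mu$-projectable onto $Y_A$.

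Next I would define the reduced Hamiltonian. Since $H$ is $G$-invariant, $H\circ i$ is $G_\mu$-invariant on $J^{-1}(\mu)$ and therefore descends uniquely to a smooth function $H_\mu$ on $J^{-1}(\mu)/G_\mu$ with $H_\mu\circ\pi_\mu = H\circ i$. Taking $d$ and using naturality yields $\pi_\mu^*(dH_\mu)=i^*(dH)$.

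The core step is to verify the polysymplectic Hamiltonian equation $\flat_{\omega_\mu}(Y_1,\ldots,Y_k)=dH_\mu$ on $J^{-1}(\mu)/G_\mu$. I would pull back via $\pi_\mu$: using the projectability $T\pi_\mu\circ X_A^H|_{J^{-1}(\mu)} = Y_A\circ\pi_\mu$, the naturality of interior products under related vector fields, and the reduction identity $\pi_\mu^*\omega_\mu^A = i^*\omega^A$, one computes
\begin{equation*}
\pi_\mu^*\Big(\sum_{A=1}^k \imath_{Y_A}\omega_\mu^A\Big)
= \sum_{A=1}^k \imath_{X_A^H|_{J^{-1}(\mu)}}\pi_\mu^*\omega_\mu^A
= \sum_{A=1}^k \imath_{X_A^H|_{J^{-1}(\mu)}} i^*\omega^A
= i^*\Big(\sum_{A=1}^k \imath_{X_A^H}\omega^A\Big)= i^*dH = \pi_\mu^*dH_\mu,
\end{equation*}
where the last equality uses (\ref{poly ham eq}) for $\mathbf{X}^H$. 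Because $\pi_\mu$ is a surjective submersion, $\pi_\mu^*$ is injective on differential forms, so the reduced equation $\flat_{\omega_\mu}(Y_1,\ldots,Y_k)=dH_\mu$ follows.

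The only delicate points are the two projectability verifications, and among these the main obstacle is justifying cleanly that $Y_A$ is a smooth vector field on $J^{-1}(\mu)/G_\mu$ whose pullback by $T\pi_\mu$ agrees with $X_A^H|_{J^{-1}(\mu)}$; this rests entirely on the $G_\mu$-equivariance of the restricted flow together with the fact that $\pi_\mu$ is a principal $G_\mu$-bundle projection (which is guaranteed by the free and proper action hypothesis in Theorem \ref{MW poly th}). Once these projectability statements are in hand, the remainder of the proof is a direct application of $\pi_\mu^*\omega_\mu^A = i^*\omega^A$ as displayed above.
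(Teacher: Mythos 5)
Your proposal is correct and follows essentially the same route as the paper's proof: tangency gives invariance of $J^{-1}(\mu)$ under the flows, $G$-invariance gives equivariance and hence descent of the flows and of $H$, and the reduced Hamiltonian equation is obtained from $\pi_\mu^*\omega^A_\mu=i^*\omega^A$ together with the projectability relation $T\pi_\mu\circ X^H_A = Y_A\circ\pi_\mu$. The only cosmetic difference is that you phrase the final computation as an equality of pulled-back $1$-forms and invoke injectivity of $\pi_\mu^*$, whereas the paper evaluates both sides pointwise on tangent vectors $[v_x]$; these are equivalent.
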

\begin{proof}
 As $X^H_A\vert_{J^{-1}(\mu)}\in T(J^{-1}(\mu))$,
 the flow $F^A_t$ of $X^H_A$ leaves $J^{-1}(\mu)$ invariant.

 From (\ref{vfGinv}) we deduce that $F^A_t\circ \Phi_g=\Phi_g\circ F^A_t$ for every $g\in G_\mu$.
 So, for every $A=1,\ldots, k$, we get a well-defined flow $F^A_{t\mu}$ on $J^{-1}(\mu)/G_\mu$
 such that $\pi_\mu \circ F^A_t=F^A_{t\mu}\circ\pi_\mu$.
Thus, as $H$ is $G$-invariant, we can define the function
$H_\mu\colon J^{-1}(\mu)/G_\mu\to \mathbb{R}$ by $H_\mu([x])=H(x)$, for every $x\in J^{-1}(\mu)$.

 Denote by $Y_A$ the generator of $F^A_{t\mu}$.
 As $\pi_\mu\circ F^A_t=F^A_{t\mu}\circ \pi_\mu$, we have
 \[
    T\pi_\mu\circ X^H_A= Y_A\circ \pi_\mu\,.
 \]
 Using $i^*\omega^A=\pi_\mu^*\omega^A_\mu$, we obtain
 \[
 \begin{array}{lcl}
 (dH_\mu)([v_x]) &=& i^*dH(v_x)=i^*\left(\ds\sum_{A=1}^k\imath_{X^H_A}\omega^A\right)(v_x)=
 \ds\sum_{A=1}^k i^*\omega^A(x)(X^H_A(x),v_x)\\\noalign{\medskip}
 &=& \ds\sum_{A=1}^k (\pi_\mu^*\omega^A_\mu)(x)(X^H_A(x),v_x) =
 \ds\sum_{A=1}^k\omega^A_\mu ([x])( T_x\pi_\mu (X^A_H(x)), T_x\pi_\mu (v_x))\\\noalign{\medskip}
 &=& \ds\sum_{A=1}^k\omega^A_\mu ([x])(Y_A([x]),[v_x])=
 \ds\sum_{A=1}^k(\imath_{Y_A}\omega^A_\mu)([v_x])\, ;
 \end{array}
 \]
 that is, $(Y_1,\ldots, Y_k)$ is a solution to the polysymplectic Hamiltonian equation
 (\ref{poly ham eq})
 on $J^{-1}(\mu)/G_\mu$ associated with $H_\mu$.
\end{proof}

\subsection{Examples}

    \subsubsection{The $k$-cotangent bundle of a Lie group}\label{general group example}

    In this part we discuss an application of Theorem \ref{reduction_dinamic}.
Let $(G,h)$ be a Lie group with a left-invariant metric $h$ and $\mathfrak{g}$ its Lie algebra.

In this example we consider the following canonical identifications
$TG\cong G\times \mathfrak{g}$ and $T^*G\cong G\times \mathfrak{g}^*$, via the diffeomorphisms
        \[
            \begin{array}{ccl}
                TG & \to & G\times \mathfrak{g}\\\noalign{\medskip}
                v_g & \mapsto & (g, T_gL_{g-1}(v_g))
            \end{array} \quad and \quad
            \begin{array}{ccl}
                T^*G & \to & G\times \mathfrak{g}^*\\\noalign{\medskip}
                \alpha_g & \mapsto & (g, \alpha_g\circ T_eL_g)
            \end{array}\,.
        \]

Hence, in a natural way we consider the identifications
        \[
            (T^1_k)^* G\cong T^*G\oplus\stackrel{k}{\cdots}\oplus T^*G\cong
G\times \mathfrak{g}^*\times\stackrel{k}{\cdots}\times\mathfrak{g}^*
        \]
     (see example \ref{KKSth}) and
        \[
            T\left((T^1_k)^*G\right) \cong \big ( G\times \mathfrak{g}^*\times\stackrel{k}{\cdots}
\times\mathfrak{g}^*\big )\times \big (\mathfrak{g}\times \mathfrak{g}^*\times\stackrel{k}{\cdots}
\times\mathfrak{g}^* \big)\,.
        \]

     Using these identifications, we can write the lift to $(T^1_k)^*G$ of the action of $G$
on itself by left translations, as follows:
        \[
            \begin{array}{rcl}
                G\times (G\times\mathfrak{g}^*\times\stackrel{k}{\ldots}
                \times\mathfrak{g}^*) & \to & G\times \mathfrak{g}^*\times\stackrel{k}{\ldots}\times
\mathfrak{g}^*\\\noalign{\medskip}
                (h,(g,\mu_1,\ldots,\mu_k)) & \mapsto & (hg,\mu_1,\ldots,\mu_k)
            \end{array}
        \]

      In this case, the canonical $k$-polysymplectic structure $(\omega^1_G,\ldots, \omega^k_G)$ on
$(T^1_k)^*G$ is defined as follows:
        \[
            \omega^A_G(g,\mu_1,\ldots,\mu_k)\left( ((T_eL_g)(\xi),\nu_1,\ldots, \nu_k),((T_eL_g)(\eta),
\gamma_1,\ldots, \gamma_k)\right) = \gamma_A(\xi)-\nu_A(\eta)+\mu_A[\xi,\eta]\,,\quad A=1,\ldots, k\,,
        \]
     where $(g,\mu_1,\ldots,\mu_k)\in G\times \mathfrak{g}^*\times\stackrel{k}{\cdots}\times\mathfrak{g}^*
        $ and $((T_eL_g)(\xi),\nu_1,\ldots, \nu_k), ((T_eL_g)(\eta),\gamma_1,\ldots, \gamma_k)\in
T_gG\times \mathfrak{g}^*\times\stackrel{k}{\cdots}\times\mathfrak{g}^*\,.$

    The momentum map $J\colon (T^1_k)^*G\cong G\times\mathfrak{g}^*\times \stackrel{k}{\ldots}\times
\mathfrak{g}^* \to \mathfrak{g}^*\times \stackrel{k}{\ldots}\times \mathfrak{g}^*$ is given by
    \[
        J(g,\mu_1,\ldots, \mu_k)= Coad^k_g(\mu_1,\ldots, \mu_k)=(Coad_g\mu_1,\ldots, Coad_g\mu_k)\,,
    \]
    for $(g,\mu_1,\ldots, \mu_k)\in G \times \mathfrak{g}^*\times \stackrel{k}{\ldots}\times \mathfrak{g}^*$
 (see Example \ref{KKSth})

    We consider the Hamiltonian function
        \[
            \begin{array}{rrcl}
                H\colon & G\times \mathfrak{g}^*\times \stackrel{k}{\ldots}\times\mathfrak{g}^* &
 \to & \mathbb{R}\\\noalign{\medskip}
                 & (g,\mu_1,\ldots, \mu_k)& \mapsto &
 \displaystyle\frac{1}{2}\displaystyle\sum_{A=1}^k<\mu_A,\mu_A>
            \end{array}
        \]
    where $<\cdot,\cdot>$ denotes the inner product  on $\mathfrak{g}^*$
induced by the inner product on $\mathfrak{g}$. It is trivial that this Hamiltonian is $G$-invariant.

  Throughout this example we consider the isomorphism induced by the inner product
$<\cdot,\cdot>$ given by
    \(\flat_{<\cdot,\cdot>}\colon  \mathfrak{g}  \to  \mathfrak{g}^*\)
    where $(\flat_{<\cdot,\cdot>}(\xi))(\eta)=<\xi,\eta>$ for every $\xi,\eta\in \mathfrak{g}$.

    We consider the $k$-vector field  $(X^H_1,\ldots, X^H_k)$ on
$G\times \mathfrak{g}^*\times\stackrel{k}{\ldots}\times\mathfrak{g}^*$ defined by
        \[
            X^H_A(g,\mu_1,\ldots,\mu_k)=\left( T_eL_g\left(\flat^{-1}_{<\cdot,\cdot>}(\mu_A)\right),
ad^*_{\flat^{-1}_{<\cdot,\cdot>}(\mu_A)}(\mu_1),\ldots, ad^*_{\flat^{-1}_{<\cdot,\cdot>}(\mu_A)}(\mu_k) \right)
        \]
    where $ad^*_\xi\mu\in \mathfrak{g}^*$ is such that $(ad^*_\xi\mu)(\eta)=\mu[\xi,\eta]$.
This $k$-vector field satisfies the following properties:
    \begin{itemize}
        \item Each $X^H_A$ is $G$-invariant.
        \item $X^H_A(g,\mu_1,\ldots,\mu_k)\in \ker\, T_{(g,\mu_1,\ldots,\mu_k)}J,$
for $(g,\mu_1,\ldots, \mu_k)\in G\times \mathfrak{g}^* \times \stackrel{k}{\ldots}\times \mathfrak{g}^*$.

            In fact, if
$(g,\mu_1,\ldots,\mu_k)\in G\times \mathfrak{g}^* \times \stackrel{k}{\ldots}\times \mathfrak{g}^*$
 we have that the transformation $Coad_g$ is a linear isomorphism and thus
            \begin{align*}
            &(T_{(g,\mu_1,\ldots,\mu_k)}J) (X^H_A(g,\mu_1,\ldots,\mu_k))\\ =
& (Coad_g(ad^*_{\flat^{-1}_{<\cdot,\cdot>}(\mu_A)}(\mu_1))- T_e(Coad_g\circ
 Coad_{\mu_1})(\flat^{-1}_{<\cdot,\cdot>}(\mu_A)), \ldots,\\ &
Coad_g(ad^*_{\flat^{-1}_{<\cdot,\cdot>}(\mu_A)}(\mu_k))- T_e(Coad_g\circ
Coad_{\mu_k})(\flat^{-1}_{<\cdot,\cdot>}(\mu_A)))\\=&(0,\ldots, 0)
            \end{align*}
        \item $(X^H_1,\ldots, X^H_k)$ is a solution to the Hamiltonian polysymplectic system, that is,
            \[
                \displaystyle\sum_{A=1}^k\imath_{X^H_A}\omega^A_G=dH\,.
            \]

            Indeed, if $(g,\mu_1,\ldots, \mu_k)\in G\times \mathfrak{g}^* \times
 \stackrel{k}{\ldots}\times \mathfrak{g}^*$ and $(\xi,\nu_1,\ldots,\nu_k)\in
 \mathfrak{g}\times \mathfrak{g}^* \times \stackrel{k}{\ldots}\times \mathfrak{g}^*$, it follows that
            \begin{align*}
            &\left( \sum_{A=1}^k\imath_{X^H_A}\omega^A_G\right)
(g,\mu_1,\ldots, \mu_k)\left((T_eL_g)(\xi),\nu_1,\ldots, \nu_k\right)\\
            =& \ds\sum_{A=1}^k\left(
            \nu_A(\flat^{-1}_{<\cdot,\cdot>}(\mu_A)) - ad^*_{\flat^{-1}_{<\cdot,\cdot>}(\mu_A)}\mu_A (\xi) +
\mu_A[\flat^{-1}_{<\cdot,\cdot>}(\mu_A), \xi]
            \right)
            \\=&dH(g,\mu_1,\ldots, \mu_k)((T_eL_g)(\xi),\nu_1,\ldots, \nu_k)  \ .
            \end{align*}
    \end{itemize}

    We can therefore apply Theorem \ref{reduction_dinamic} and there exists a solution
$(\widehat{X}_1^{H_\mu},\ldots, \widehat{X}_k^{H_\mu})$
to the Hamiltonian polysymplectic system on $J^{-1}(\mu)/G_\mu$
associated with a Hamiltonian function $ H_\mu\colon J^{-1}(\mu)/G_\mu\to \mathbb{R}$
 satisfying that $ H_\mu\circ\pi_\mu = H\circ i$.

     In order to write a solution $(\widehat{X}_1^{H_\mu},\ldots, \widehat{X}_k^{H_\mu})$
 to the reduced Hamiltonian polysymplectic system on $J^{-1}(\mu)/G_\mu$,
we consider the identification between $G$ and $J^{-1}(\mu_1,\ldots, \mu_k)$.
Under this identification, $H\vert_{J^{-1}(\mu_1,\ldots, \mu_k)}$ can be rewritten as follows:
        \[
            \begin{array}{rcl}
                H\vert_{J^{-1}(\mu_1,\ldots,\mu_k)} \colon  G & \to & \mathbb{R}\\\noalign{\medskip}
                 g & \mapsto &\displaystyle\frac{1}{2}\sum_{A=1}^k <Ad^*_g\mu_A, Ad^*_g\mu_A>\,.
            \end{array}
        \]
    Now, applying Theorem \ref{reduction_dinamic} we have
        \begin{equation}\label{reduced_vector_field}
            \widehat{X}^{H_\mu}_A(\nu_1,\ldots, \nu_k)=\left(ad^*_{\flat^{-1}_{<\cdot,\cdot>}\nu_A}\nu_1,\ldots,
 ad^*_{\flat^{-1}_{<\cdot,\cdot>}\nu_A}\nu_k \right)
        \end{equation}
    for each $(\nu_1,\ldots, \nu_k)$ in the $k$-coadjoint orbit $\mathcal{O}_\mu=J^{-1}(\mu)/G_\mu$.
Therefore, $(\widehat{X}^{H_\mu}_1,\ldots, \widehat{X}^{H_\mu}_k)$
is a solution to the reduced Hamiltonian polysymplectic system associated to
the reduced Hamiltonian function given by
        \[
            \begin{array}{rcl}
                H_\mu\colon  \mathcal{O}_\mu\subset \mathfrak{g}^*\times \stackrel{k}{\ldots}
\times\mathfrak{g}^* & \to & \mathbb{R}\\\noalign{\medskip}
                 (\nu_1,\ldots, \nu_k) & \mapsto &\displaystyle\frac{1}{2}\sum_{A=1}^k <\nu_A, \nu_A>  \ .
            \end{array}
        \]

    In the following subsubsection we consider this example in the particular case $G=SO(3)$.

    \subsubsection{Harmonic maps} \cite{CGR-2001,EL-1978}.

        Recall that a smooth map $\varphi\colon M\to N$ between
        Riemannian manifolds $(M,g)$ and $(N,h)$ is \textit{harmonic} if it is a
        critical point of the energy functional $E$, which, when $M$ is
        compact, is defined as
        \[
            E(\varphi)=\int_{M}\frac{1}{2}trace_g\varphi^*h\,dv_g,
        \]
        where $dv_g$ denotes the measure on $M$ induced by its metric and,
        in local coordinates, the expression $\frac{1}{2}trace_g\varphi^*h$
        reads
        \[
            \frac{1}{2} g^{ij}h_{\alpha\beta}\derpar{\varphi^\alpha}{x^i}\derpar{\varphi^\beta}{x^j},
        \]
        $(g^{ij})$ being the inverse of the metric matrix $(g_{ij})$ of $g$ and $(h_{\alpha\beta})$ the metric matrix of $h$.
         (This definition is extended to the case where $M$
         is not compact by requiring the restriction
         of $\varphi$ to every compact domain to be harmonic).

%

        \begin{remark} \rm Some examples of harmonic maps are as follows:
        \begin{itemize}
        \item If $(M,g)=(N,h)$, the identity and the constant map are harmonic.
        \item In the case $k=1$, that is, when $\varphi\colon\r\to N$ is a curve on $N$,
        then $\varphi$
        is a harmonic map if and only if it is a geodesic.
        \item If we consider the case $N=\r$ (with standard metric). Then $\varphi\colon\rk\to\r$ is a
        harmonic map if and only if  it is a harmonic function, that is,
        a solution to the Laplace equation.
        \end{itemize}
        \end{remark}

        In the sequel we consider the case $M=\mathbb{R}^2$ with
        $g_{ij}=\delta_{ij}$ and $N=SO(3)$ with a left-invariant
        metric $h$. Then, we can define a Hamiltonian function
   $$
            \begin{array}{rccl}
                 H\colon & (T^1_2)^*SO(3) & \to & \mathbb{R}\\\noalign{\medskip}
                 & (\alpha^1_g,\alpha^2_g) & \mapsto & \ds\frac{1}{2}
                 \left(\widetilde{h}(\alpha^1_g,\alpha^1_g)+
                 \widetilde{h}(\alpha^2_g,\alpha^2_g)\right) \ ,
            \end{array}
$$
   where $\widetilde{h}$ is the corresponding bundle  metric on $T^*SO(3)$.
        Locally,
        \[
             H(q^i, p^A_i)= \ds\frac{1}{2}h^{ij}p^A_ip^A_j\,.
        \]

        Since $h$ is left-invariant, so is $ H$. Moreover, one may prove, using general results on harmonic
maps (see, for instance \cite{EL-1978}), that if $(X^H_1,X^H_2)$ is a solution to the Hamiltonian
polysymplectic equation associated with $H$ and $\gamma\colon \mathbb{R}^2\to SO(3)$
 is an integral submanifold of the distribution generated by $X^H_1$ and $X^H_2$, then $\gamma$
 is a harmonic map.

        On the other hand, as we have seen in the general situation referred to
        the previous example \ref{general group example}, we have that under the assumptions
        of Theorem \ref{MW poly th}, there exist $(Y_1,Y_2)$
        a solution to the Hamiltonian polysymplectic system on $J^{-1}(\mu)/G_{\mu}$
        associated with a Hamiltonian function
        $ H_\mu\colon J^{-1}(\mu)/G_\mu=\mathcal{O}_\mu\to \mathbb{R}$
        satisfying $ H_\mu\circ \pi_\mu =  H\circ i$; that is,
        $(Y_1,Y_2)$ is a solution to the equation
        \[
            \imath_{Y_1}\omega^1_\mu + \imath_{Y_2}\omega^2_\mu =d H_\mu\,.
        \]
         In this particular case, the expression of the polysymplectic forms
         $\omega^1_\mu,\omega^2_\mu$ is described in Section \ref{KKSth} (see (\ref{kkspolystruc})).

   In accordance with the results and identifications in Section \ref{KKSth}, we consider the following cases:
         \begin{enumerate}
            \item $\pi^0_1$ and $\pi^0_2$ are linearly dependent and $(\pi^0_1,\pi^0_2)\neq 0$.
Assume that $\pi^0_1\neq 0$ and $\pi^0_2=\lambda_0\pi^0_1$ with $\lambda_0\neq 0$.
In this case $\mathcal{O}_{(\pi^0_1,\pi^0_2)}=S^2(||\pi^0_1||)$ and
                \[
                    T_{(\pi,\lambda_0\pi)}\mathcal{O}_{(\pi^0_1,\pi^0_2)}=\{
(\mathbf{v},\lambda_0\mathbf{v})\in\mathbb{R}^3\times \mathbb{R}^3 | \, \mathbf{v}\in T_\pi S^2(||\pi^0_1||)
                    \}\,.
                \]

                On the other hand,
                \[
                    T_\pi S^2(||\pi^0_1||)=\{ \xi_{\mathbb{R}^3}(\pi) / \xi\in \mathfrak{so}(3)\equiv\mathbb{R}^3\}
                \]
                and $\xi_\mathbb{R^3}(\pi)=\xi\times \pi$, for every $\xi\in \mathfrak{so}(3)\equiv\mathbb{R}^3$.

                Therefore, at a point $(\pi,\lambda_0\pi)\in \mathcal{O}_{(\pi^0_1,\pi^0_2)}=S^2(||\pi^0_1||)$,
 the solution to the reduced Hamiltonian polysymplectic system is (see \ref{reduced_vector_field})
                \begin{align*}
                    \widehat{X}^{H_\mu}_1(\pi,\lambda_0\pi)
     =& \big(ad^*_{\flat^{-1}_{<\cdot,\cdot>}(\pi)}\pi, ad^*_{\flat^{-1}_{<\cdot,\cdot>}(\pi)}(\lambda_0\pi) \big)
 = \big(\pi\times \flat_{<\cdot,\cdot>^{-1}}(\pi), \lambda_0\pi\times \flat_{<\cdot,\cdot>^{-1}}(\pi) \big)\\
\noalign{\medskip}
                    \widehat{X}^{H_\mu}_2(\pi,\lambda_0\pi)
                    =& \big(ad^*_{\flat^{-1}_{<\cdot,\cdot>}(\lambda_0\pi)}\pi, ad^*_{\flat^{-1}_{<\cdot,\cdot>}
(\lambda_0\pi)}(\lambda_0\pi) \big)
                    = \big(\lambda_0\pi\times \flat_{<\cdot,\cdot>^{-1}}(\pi), \lambda_0^2\pi\times \flat_{<\cdot,
\cdot>^{-1}}(\pi) \big) \\\noalign{\medskip}
                    =& \lambda_0 \widehat{X}^{H_\mu}_1(\pi,\lambda_0\pi) \ .
                \end{align*}
            \item $\pi^0_1$ and $\pi^0_2$ are linearly independent. In this case, we know that there exists a
diffeomorphism between $\mathcal{O}_{(\pi^0_1,\pi^0_2)}$ and $SO(3)$, where
                \[
                    \mathcal{O}_{(\pi^0_1,\pi^0_2)}=\{ (A\pi^0_1,A\pi^0_2) \, | \, A\in SO(3) \}  \ .
                \]
                Therefore, from (\ref{reduced_vector_field}), we have that
$(\widehat{X}^{H_\mu}_1,\widehat{X}^{H_\mu}_2)$
is a solution to the reduced Hamiltonian polysymplectic system where
                \begin{align*}
                    \widehat{X}^{H_\mu}_1(A\pi^0_1,A\pi^0_2)
=& \big(ad^*_{\flat^{-1}_{<\cdot,\cdot>}(A\pi^0_1)}(A\pi^0_1),
ad^*_{\flat^{-1}_{<\cdot,\cdot>}(A\pi^0_1)}(A\pi^0_2) \big)
                    \\\noalign{\medskip}
                    = &\big((A\pi^0_1)\times \flat_{<\cdot,\cdot>^{-1}}(A\pi^0_1), (A\pi^0_2)
\times \flat_{<\cdot,\cdot>^{-1}}(A\pi^0_1) \big)\\\noalign{\medskip}
                    \widehat{X}^{H_\mu}_2(A\pi^0_1,A\pi^0_2)
                    =& \big(ad^*_{\flat^{-1}_{<\cdot,\cdot>}(A\pi^0_2)}(A\pi^0_1),
 ad^*_{\flat^{-1}_{<\cdot,\cdot>}(A\pi^0_2)}(A\pi^0_2) \big)
                    \\\noalign{\medskip}
                    = & \big((A\pi^0_1)\times \flat_{<\cdot,\cdot>^{-1}}(A\pi^0_2), (A\pi^0_2)\times
 \flat_{<\cdot,\cdot>^{-1}}(A\pi^0_2) \big)\ .
                \end{align*}
         \end{enumerate}

\section{Conclusions and future work}

We study the reduction of polysymplectic manifolds
and Hamiltonian polysymplectic systems, such as those
that appear in some types of classical field theories.

First, we have given an example that shows a mistake
in the reduction scheme proposed by G\"{u}nther.

Then,
after stating the guidelines for reduction of a polysymplectic manifold
by a generic submanifold, we prove a generalized version of the Marsden-Weinstein reduction theorem for a
polysymplectic manifold $M$ in the presence of an equivariant momentum map for a polysymplectic action
on $M$. In this paper, we give the conditions for the polysymplectic reduction.  In fact, a new   additional hypothesis
must be added to the usual ones (regular values of the momentum map,
free and proper actions);
namely, the constancy of the rank of the characteristic foliation on the level set of the momentum map
 corresponding to a fixed value $\mu\in \mathfrak{g}^*$, and the fact that the leaves of this foliation are the
 orbits of the action of the isotropy group $G_\mu$ on the level set.
One of the main goals of this work is to study what
conditions ensure that this hypothesis holds
(see Section \ref{MWPR}).
Assuming all these conditions, we prove that the quotient space
is a manifold that inherits a polysymplectic structure
from the initial one.
In this way, the limitations of the reduction theorem presented in \cite{MRS-2004},
which are referred in the introduction, are overcome and corrected.

As an application of our theorem, we analyze the case of the product of symplectic manifolds and the
particular case of reduction of the standard model of polysymplectic ($k$-symplectic) manifold:
the cotangent bundle of $k^1$-covelocities.
Furthermore, we generalize the
{\sl Kirillov-Kostant-Souriau theorem} to the case of polysymplectic manifolds.

Finally, the reduction of polysymplectic Hamiltonian systems
is also studied as a natural continuation of the previous results,
showing how under the same hypothesis as above,
and assuming the invariance of the Hamiltonian function,
a new Hamiltonian polysymplectic system is defined in the quotient space.
These results are applied to analyzing the problem of reduction
of Hamiltonian polysymplectic systems defined in
cotangent bundles of $k^1$-covelocities, which admit a suitable decomposition and,
as a particular case,
the harmonic maps.

Another (possible) potential application of the previous results could be the following one. Some physical theories admit a Lagrangian formulation, as a classical field theory of first order, with a $G$-invariant regular Lagrangian function $L$ which is defined in the tangent bundle of $k$-velocities $T^1_kQ$ associated with a manifold $Q$ (for example, this situation appears when one deals with the dynamics of molecular strands; in fact, for this theory, $k=2$, $G$ is the special orthogonal group $SO(3)$ and $Q$ is the semi-direct product $SE(3)=SO(3)\circledS\mathbb{R}^3$ of $SO(3)$ with the abelian Lie group $\mathbb{R}^3$, see \cite{EGHPR-2009}). So, one may obtain the corresponding Lagrange-Poincar\'{e} field equations on the reduced space $T^1_kQ/G$ (see \cite{EGHPR-2009} for the particular case of molecular strands).

Using that the Lagrangian function $L$ is regular, one may develop a Hamiltonian formulation of the theory with $G$-invariant Hamiltonian function $H$ which is defined in the cotangent bundle of $k^1$-covelocities $(T^1_k)^* Q$. Moreover, the solutions of the Hamiltonian polysymplectic equation for $H$ are solutions of the Hamilton-De Donder-Weyl equations for the corresponding Hamiltonian classical field theory.

Thus, for a solution of the first equations which satisfies the hypotheses of Theorem \ref{reduction_dinamic}, one could obtain a solution of the reduced Hamiltonian polysymplectic system on the corresponding reduced space $J^{-1}(\mu)/G_{\mu}$.

It would be interesting to relate these solutions with the solutions of the Lagrange-Poincar\'{e} field equations on $T^1_kQ/G$. Note that the space of orbits $(T^1_k)^*Q/G$ admits a poly-Poisson structure (see \cite{IMV-2012}) and it seems likely that the reduced spaces $J^{-1}(\mu)/G_\mu$ can be leaves of the canonical polysymplectic foliation in $(T^1_k)^*Q/G$ (for the definition of the canonical polysymplectic foliation associated with a polysymplectic structure, see \cite{IMV-2012}). Then, the Legendre transformation between $T^1_kQ/G$ and $(T^1_k)^*Q/G$, induced by the reduced Lagrangian function on $T^1_kQ/G$, should relate the solutions of both equations.

Anyway, this paper is the first step towards a more ambitious program of reduction
(``a la Marsden-Weinstein'') of geometric classical field theories.
In particular, since the multisymplectic formulation constitutes the most general
geometric framework for describing classical field theories,
our next objective is to extend the results obtained here to multisymplectic manifolds,
in such a way that they can be applied to reduce multisymplectic Hamiltonian systems.

\appendix
\section{Examples of polysymplectic manifolds}
\label{examplepoly}

    In this appendix we describe some typical examples of polysymplectic manifolds.

    \subsection{The product of symplectic manifolds}\label{productsymp}

    Let $M_A$ be a symplectic manifold with symplectic form $\tilde{\omega}^A$, for $A\in \{1,\ldots, k\}$.

    We consider the product manifold
    \[
        M= M_1\times \cdots \times M_k
    \]
    and the $2$-form $\omega^A$ on $M$ given by
    \[
        \omega^A=(pr_A)^*(\tilde{\omega}^A)\,,
    \]
    where $pr_A\colon M\to M_A$ is the canonical projection, for $A\in \{1,\ldots, k\}$.

    Then, it is clear that $(\omega^1,\ldots, \omega^k)$ is a $k$-polysymplectic structure on $M$.
    \subsection{The cotangent bundle of $k^1$-covelocities of a manifold}\label{t1k*q}

        Let $Q$ be a differentiable manifold, $\dim Q = n$, and $\pi_Q: T^{\;*}Q \to Q$ its cotangent bundle.
        Denote by $(T^1_k)^*Q$  the Whitney sum $T^{\;*}Q \oplus \stackrel{k}{\dots}
        \oplus T^{\;*}Q$ of $k$ copies of $T^{\;*}Q$, with projection $\pi^k_Q\colon (T^1_k)^*Q\to Q$.

        $(T^1_k)^*Q$ can be identified with the manifold $J^1(Q,\rk)_0$ of 1-jets of maps
 $\sigma\colon Q\to\rk$ with target at $0\in \rk$,
        the diffeomorphism is given by
        \[
            \begin{array}{ccc}
                J^1(Q,\r^k)_0 & \equiv & T^{\;*}Q \oplus \stackrel{k}{\dots} \oplus T^{\;*}Q \\
                j^1_{q,0}\sigma  & \equiv & (d\sigma^1(q), \dots ,d\sigma^k(q))\ ,
            \end{array}
        \]
        where $\sigma^A= \pi^A \circ \sigma:Q \longrightarrow \r$ is the $A^{th}$ component of $\sigma$,
        and  $\pi^A:\r^k \to \r$ is the canonical projection onto the $A^{th}$ component, for $ A = 1, \ldots ,
        k$. $(T^1_k)^*Q$ is called {\sl the cotangent bundle of $k^1$-covelocities of the manifold $Q$}.

        If $(q^i)$ are local coordinates on $U \subseteq Q$, then the induced local coordinates  $(q^i ,
        p^A_i)$ on $(\pi^k_Q)^{-1}(U)=(T^1_k)^{\;*}U$ are given by
        \[
            q^i(\alpha^1_q, \ldots ,\alpha^k_q) = q^i(q)\, , \quad p^A_i(\alpha^1_q, \ldots ,\alpha^k_q) =
\alpha^A_q\left(\ds\frac{\partial}{\partial q^i}\Big\vert_q \right)\, ,\qquad 1\leq i\leq n;\,1\leq A\leq k\,.
        \]

        On $ (T^1_k)^*Q$, we consider the differential forms
        \[
            \theta^A= (\pi_Q^{k,A})^*\theta\, , \quad \omega^A= (\pi_Q^{k,A})^*\omega\, ,
        \]
        where  \(\omega=-d\theta=dq^i \wedge dp_i\) is the canonical symplectic form on $T^*Q$,
        $\theta=p_i \, dq^i$ is the Liouville $1$-form on $T^*Q$ and
        $\pi_Q^{k,A}:  (T^1_k)^*Q \rightarrow T^*Q$ is the projection defined by
        \[
            \pi_Q^{k,A}(\alpha^1_q, \ldots ,\alpha^k_q)=\alpha^A_q\,           .
        \]
         Obviously, $\omega^A = -d\theta^A$.

        In local natural coordinates, we have
        \begin{equation}\label{locexp}
            \theta^A = \displaystyle \, p^A_i\,dq^i \, ,  \quad \omega^A = \displaystyle  dq^i \wedge dp^A_i\, .
        \end{equation}

        A simple inspection of their expressions in local coordinates shows that the forms
        $\omega^A$ are closed and the relation (\ref{poly-cond}) holds; that is, $(\omega^1,\ldots, \omega^k)$
        is a $k$-polysymplectic structure on $(T^1_k)^* Q$.

    \subsection{Frame bundle}

        Let $LM$ be the frame bundle of $M$; that is, the manifold of all the vector space bases
         in all the tangent spaces at
         the various points of $M$. This bundle is a special type of principal bundle in the sense that its
geometry is fundamentally
          tied to the geometry of $M$. This relation can be expressed by means
           of the vector-valued $1$-form $\vartheta=\ds\sum_{A=1}^k\vartheta^A r_A\in \Omega^1(LM,\r^n)$
called the \textit{solder form}.
            This form is defined by
            \[
                \begin{array}{rccl}
                    \vartheta(u)\colon & T_u(LM) & \to & \r^n\\\noalign{\medskip} & X_u & \mapsto & \vartheta(u)
(X_u)=u^{-1}(T_u\pi (X_u)) \, ,
                \end{array}
            \]
        where $\pi\colon LM\to M$ is the canonical projection and
        $  u\colon \r^n\to T_xM$ a point  of $LM$.

        The solder form endows $LM$ with a $n$-polysymplectic structure given by
            \[
                \omega^A=d\vartheta^A,\quad A=1,\ldots, n\,.
            \]
(See \cite{Norris} for more details).

%
%
%
%

    \subsection{$k$-coadjoint orbits}\label{k-coadjoint orbit}

        Before describing this new example of a polysymplectic manifold,
it is necessary to recall the symplectic structure of the coadjoint orbit of a Lie group
(for more details see \cite{AM-1978}, page 303).

        Let $G$ be a Lie group, $\mathfrak{g}$ its Lie algebra. We consider the  \textit{coadjoint action}
            \[
                \begin{array}{lccl}
                    Coad\colon & G\times \mathfrak{g}^* &\to & \mathfrak{g}^*\\\noalign{\medskip}
                            & (g,\mu) & \mapsto & Coad(g,\mu)=\mu\circ Ad_{g^{-1}}
                \end{array}
            \]
        and the orbit of $\mu \in\mathfrak{g}^*$ in $\mathfrak{g}^*$ under this action,
            \[
                \mathcal{O}_\mu=\{Coad(g,\mu) \ \mid \  g\in G\}\,.
            \]

        It is well known that \(\mathcal{O}_\mu\) has a symplectic structure $\omega_{\mu}$
defined by the expression
            \begin{equation}\label{symp-orbit}
                \omega_{\mu}(\nu)\left(\xi_{\mathfrak{g}^*}(\nu),\eta_{\mathfrak{g}^*}(\nu)\right)=-\nu [\xi,\eta]
            \end{equation}
        where $\nu$ is an arbitrary point of $\mathcal{O}_{\mu}$, $ \xi_{\mathfrak{g}^*}(\nu),
\eta_{\mathfrak{g}^*}(\nu)\in T_\nu\mathcal{O}_\mu$.

        Let $(\mu_1,\ldots, \mu_k)$ be an element of
$\mathfrak{g}^*\times\stackrel{k}{\ldots}\times\mathfrak{g}^*$.
        We define the \textit{$k$-coadjoint orbit} as the orbit of
        $(\mu_1,\ldots, \mu_k)$ in $\mathfrak{g}^*\times\stackrel{k}{\ldots}\times\mathfrak{g}^*$, that is,
            \[
                \mathcal{O}_{(\mu_1,\ldots, \mu_k)} =\{Coad^k(g,\mu_1,\ldots, \mu_k)\ \mid \  g\in G\} \,,
            \]
     $Coad^k$  being the $k$-coadjoint action defined in (\ref{coad^k}). The space
$\mathcal{O}_{\mu_1,\ldots, \mu_k}$ was considered in \cite{Gunther-1987}.
 In fact, in \cite{Gunther-1987}, $\mathcal{O}_{\mu_1,\ldots, \mu_k}$
was called the polycoadjoint orbit by $(\mu_1,\ldots, \mu_k)$.

        Next, we will recall the definition of the $k$-polysymplectic structure on
 $\mathcal{O}_{\mu_1,\ldots, \mu_k}$ which was introduced in \cite{Gunther-1987}.

        \begin{lemma}\label{lem-orbit-1}
            For every $(\nu_1,\ldots,\nu_k)\in \mathcal{O}_{(\mu_1,\ldots, \mu_k)}$ we have that
                \[
                    T_{(\nu_1,\ldots,\nu_k)}\mathcal{O}_{(\mu_1,\ldots, \mu_k)}=
                    \{\xi_{\mathfrak{g}^*\times\stackrel{k}{\ldots}\times\mathfrak{g}^*}(\nu_1,\ldots,\nu_k)
 \ \mid \  \xi\in \mathfrak{g}\} \ ,
                \]
            where $\xi_{\mathfrak{g}^*\times\stackrel{k}{\ldots}\times\mathfrak{g}^*}$
            is the \textrm{infinitesimal generator} of the $k$-coadjoint action  corresponding to $\xi$.
        \end{lemma}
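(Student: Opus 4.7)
The plan is to identify the tangent space of the orbit with the image of the infinitesimal action map, using the general fact about orbits of smooth Lie group actions and applying it in the specific case of the $k$-coadjoint action $Coad^k$.

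First I would fix $(\nu_1,\ldots,\nu_k)\in \mathcal{O}_{(\mu_1,\ldots,\mu_k)}$ and introduce the orbit map
\[
\Psi_{(\nu_1,\ldots,\nu_k)}\colon G\longrightarrow \mathfrak{g}^*\times\stackrel{k}{\ldots}\times\mathfrak{g}^*,\qquad g\longmapsto Coad^k_g(\nu_1,\ldots,\nu_k).
\]
The image of $\Psi_{(\nu_1,\ldots,\nu_k)}$ is exactly $\mathcal{O}_{(\mu_1,\ldots,\mu_k)}$, and equips the orbit with the structure of an initial (immersed) submanifold of $\mathfrak{g}^*\times\stackrel{k}{\ldots}\times\mathfrak{g}^*$. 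By equivariance of $\Psi_{(\nu_1,\ldots,\nu_k)}$ with respect to left multiplication on $G$ and the $Coad^k$-action, $\Psi_{(\nu_1,\ldots,\nu_k)}$ has locally constant rank, and in particular its image at $e$ is the full tangent space to the orbit at $(\nu_1,\ldots,\nu_k)$.

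Next, I would compute the differential $T_e\Psi_{(\nu_1,\ldots,\nu_k)}$. For $\xi\in\mathfrak{g}$, consider the curve $t\mapsto \Psi_{(\nu_1,\ldots,\nu_k)}(\exp(t\xi))=Coad^k_{\exp(t\xi)}(\nu_1,\ldots,\nu_k)$. Differentiating at $t=0$ gives, by definition of the infinitesimal generator of the $Coad^k$-action,
\[
T_e\Psi_{(\nu_1,\ldots,\nu_k)}(\xi)=\xi_{\mathfrak{g}^*\times\stackrel{k}{\ldots}\times\mathfrak{g}^*}(\nu_1,\ldots,\nu_k).
\]
Combining these two observations,
\[
T_{(\nu_1,\ldots,\nu_k)}\mathcal{O}_{(\mu_1,\ldots,\mu_k)}=\mathrm{Im}\,T_e\Psi_{(\nu_1,\ldots,\nu_k)}=\{\xi_{\mathfrak{g}^*\times\stackrel{k}{\ldots}\times\mathfrak{g}^*}(\nu_1,\ldots,\nu_k)\mid\xi\in\mathfrak{g}\},
\]
which is the desired equality.

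The argument is essentially a direct application of the standard fact that for a smooth Lie group action the tangent space to any orbit is the image of the Lie algebra under the infinitesimal action. The only mild subtlety is that the orbit needs to be endowed with its canonical smooth (initial) submanifold structure so that ``tangent space of the orbit'' is well-defined; once that is invoked, the computation of $T_e\Psi_{(\nu_1,\ldots,\nu_k)}$ is immediate from the definition of $\xi_{\mathfrak{g}^*\times\stackrel{k}{\ldots}\times\mathfrak{g}^*}$ and the componentwise description of $Coad^k$ in \eqref{coad^k}. There is no real obstacle here; this lemma is essentially a restatement of a classical fact, tailored to the $k$-fold product setting.
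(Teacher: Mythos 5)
Your argument is correct and is precisely the standard proof of the classical fact that the paper itself invokes: the published proof of this lemma consists only of the remark that it is well known, with a citation to Abraham--Marsden, p.~267. You have simply written out that classical argument (orbit map, constant rank via equivariance, differential at the identity equals the infinitesimal generator), so there is nothing to add.
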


        \proof This is a well-known result (see for example \cite{AM-1978} p. 267).
        \qed

        \begin{lemma}\label{lem-orbit-2}
            For every $A=1,\dots, k$ and each $(\nu_1,\ldots,\nu_k)\in \mathcal{O}_{(\mu_1,\ldots, \mu_k)}$
 we obtain that
                \[
                    (pr_A)_*(\nu_1,\ldots,\nu_k)\left(\xi_{\mathfrak{g}^*\times\stackrel{k}{\ldots}
\times\mathfrak{g}^*}(\nu_1,\ldots,\nu_k)
                    \right)=\xi_{\mathfrak{g}^*}(\nu_A) \ ,
                \]
            where $pr_A$ is the canonical projection
                \[
                   \begin{array}{lccc}
                        pr_A\colon & \mathcal{O}_{(\mu_1,\ldots, \mu_k)} & \to & \mathcal{O}_{\mu_A}\\
\noalign{\medskip}
                         & (\nu_1,\ldots,\nu_k) & \mapsto & \nu_A \ .
                    \end{array}
                \]
        \end{lemma}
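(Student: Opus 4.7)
The plan is to derive the identity as a direct consequence of the equivariance of the projection $pr_A$ with respect to the relevant group actions, together with the general principle that an equivariant smooth map intertwines the infinitesimal generators.

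First I would observe that the very definition of $Coad^k$ in (\ref{coad^k}) is componentwise, so for every $g\in G$ the diagram
\[
\xymatrix{
\mathfrak{g}^*\times\stackrel{k}{\ldots}\times\mathfrak{g}^* \ar[r]^-{Coad^k_g} \ar[d]_-{pr_A} & \mathfrak{g}^*\times\stackrel{k}{\ldots}\times\mathfrak{g}^* \ar[d]^-{pr_A} \\
\mathfrak{g}^* \ar[r]^-{Coad_g} & \mathfrak{g}^*
}
\]
commutes, i.e., $pr_A\circ Coad^k_g = Coad_g\circ pr_A$. Moreover, the orbit $\mathcal{O}_{(\mu_1,\ldots,\mu_k)}$ is mapped by $pr_A$ into the orbit $\mathcal{O}_{\mu_A}$, so the commutative square restricts correctly to the orbits.

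Second, I would evaluate both sides at $(\nu_1,\ldots,\nu_k)\in\mathcal{O}_{(\mu_1,\ldots,\mu_k)}$ with $g$ replaced by the one-parameter subgroup $\exp(t\xi)$, and then differentiate at $t=0$. Recalling that the infinitesimal generator of any Lie group action is defined by $\xi_N(n)=\frac{d}{dt}\big|_{t=0}\Phi(\exp(t\xi),n)$, the right-hand side of the differentiated identity becomes $\xi_{\mathfrak{g}^*}(\nu_A)$ by definition. For the left-hand side, since $pr_A$ is a linear projection between vector spaces, its differential at every point coincides (under the canonical identifications of tangent spaces with the ambient vector spaces) with $pr_A$ itself; hence differentiating gives $(pr_A)_*(\nu_1,\ldots,\nu_k)\bigl(\xi_{\mathfrak{g}^*\times\stackrel{k}{\ldots}\times\mathfrak{g}^*}(\nu_1,\ldots,\nu_k)\bigr)$. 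Equating the two yields the claim.

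There is essentially no obstacle here: the whole assertion is the infinitesimal version of the obvious fact that $pr_A$ is equivariant for the two actions, and the linearity of $pr_A$ makes the passage from equivariance to ``intertwining the infinitesimal generators'' completely transparent. The only mild subtlety worth spelling out is the identification of $T_{(\nu_1,\ldots,\nu_k)}\mathcal{O}_{(\mu_1,\ldots,\mu_k)}$ with a subspace of $\mathfrak{g}^*\times\stackrel{k}{\ldots}\times\mathfrak{g}^*$ (given by Lemma \ref{lem-orbit-1}) and of $T_{\nu_A}\mathcal{O}_{\mu_A}$ with a subspace of $\mathfrak{g}^*$, so that $(pr_A)_*$ acts as the restriction of the linear projection $pr_A$ on these ambient vector spaces.
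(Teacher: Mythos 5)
Your argument is correct and is essentially the same as the paper's: both exploit the componentwise definition of $Coad^k$ and the chain rule, the paper phrasing it via the orbit-map identity $pr_A\circ Coad^k_{(\nu_1,\ldots,\nu_k)}=Coad_{\nu_A}$ and you phrasing it via the equivariance $pr_A\circ Coad^k_g=Coad_g\circ pr_A$ differentiated along $\exp(t\xi)$. These are two views of the same computation, so no substantive difference.
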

        \proof
        As the relation $pr_A\circ Coad^k_{(\nu_1,\ldots,\nu_k)} = Coad_{\nu_A}$ holds, we obtain
            \[
                (pr_A)_*(\nu_1,\ldots,\nu_k)\left(\xi_{\mathfrak{g}^*\times\stackrel{k}{\ldots}\times
\mathfrak{g}^*}(\nu_1,\ldots,\nu_k)
                    \right) = T_e(pr_A\circ Coad^k_{(\nu_1,\ldots,\nu_k)})(\xi)
                    =T_eCoad_{\nu_A}(\xi)=\xi_{\mathfrak{g}^*}(\nu_A)\,.
            \]
        \qed

        As a consequence of the above lemma we can consider the following relations:
        \begin{equation}\label{ident}
            \begin{array}{ccc}
                T_{(\nu_1,\ldots,\nu_k)}\mathcal{O}_{(\mu_1,\ldots, \mu_k)}
                & \subseteq & T_{\nu_1}\mathcal{O}_{\mu_1}\times \ldots \times T_{\nu_k}\mathcal{O}_{\mu_k}\\
\noalign{\medskip}
                \xi_{\mathfrak{g}^*\times\stackrel{k}{\ldots}\times\mathfrak{g}^*}(\nu_1,\ldots,\nu_k)
                & \equiv & \left(\xi_{\mathfrak{g}^*}(\nu_1),\ldots, \xi_{\mathfrak{g}^*}(\nu_k)\right).
            \end{array}
        \end{equation}

        \begin{prop}\label{A4}
            Let $\omega_{\mu_A}$ be the symplectic structure of the coadjoint orbit
             $\mathcal{O}_{\mu_A}$ at $\mu_A$, then the family $(\omega_\mu^{1},\ldots, \omega_\mu^k)$
 given by
                \[
                    \omega_\mu^A\colon = (pr_A)^*\omega_{\mu_A}
                \]
            is a $k$-polysymplectic structure on the $k$-coadjoint orbit $\mathcal{O}_{(\mu_1,\ldots, \mu_k)}$
 at $\mu=(\mu_1,\ldots, \mu_k)$.
        \end{prop}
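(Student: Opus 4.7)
The plan is to verify the two defining properties of a polysymplectic structure: closedness of each $\omega_\mu^A$ and the intersection condition $\bigcap_{A=1}^k \ker \omega_\mu^A = 0$.

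Closedness is immediate: since $\omega_{\mu_A}$ is the symplectic form on the coadjoint orbit $\mathcal{O}_{\mu_A}$, it is closed, and therefore $\omega_\mu^A = (pr_A)^* \omega_{\mu_A}$ is a closed $2$-form on $\mathcal{O}_{(\mu_1,\ldots,\mu_k)}$. So the bulk of the work is in the nondegeneracy condition.

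For the intersection condition, I would fix a point $(\nu_1,\ldots,\nu_k) \in \mathcal{O}_{(\mu_1,\ldots,\mu_k)}$ and pick a tangent vector $X \in \bigcap_{A=1}^k \ker \omega_\mu^A(\nu_1,\ldots,\nu_k)$. By Lemma \ref{lem-orbit-1}, $X$ has the form $\xi_{\mathfrak{g}^* \times \stackrel{k}{\ldots} \times \mathfrak{g}^*}(\nu_1,\ldots,\nu_k)$ for some $\xi \in \mathfrak{g}$, and by Lemma \ref{lem-orbit-2} together with the identification (\ref{ident}), the push-forward $(pr_A)_*(X) = \xi_{\mathfrak{g}^*}(\nu_A)$ for each $A$. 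The defining condition for $X$ is that, for every $Y \in T_{(\nu_1,\ldots,\nu_k)} \mathcal{O}_{(\mu_1,\ldots,\mu_k)}$,
\[
0 = \omega_\mu^A(\nu_1,\ldots,\nu_k)(X,Y) = \omega_{\mu_A}(\nu_A)\bigl((pr_A)_*X,(pr_A)_*Y\bigr).
\]

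Since every tangent vector at $\nu_A$ in $\mathcal{O}_{\mu_A}$ is of the form $\eta_{\mathfrak{g}^*}(\nu_A)$ for some $\eta \in \mathfrak{g}$, and any such $\eta$ gives rise to $Y = \eta_{\mathfrak{g}^* \times \stackrel{k}{\ldots} \times \mathfrak{g}^*}(\nu_1,\ldots,\nu_k) \in T_{(\nu_1,\ldots,\nu_k)} \mathcal{O}_{(\mu_1,\ldots,\mu_k)}$ with $(pr_A)_*Y = \eta_{\mathfrak{g}^*}(\nu_A)$, it follows that $\xi_{\mathfrak{g}^*}(\nu_A)$ lies in the kernel of $\omega_{\mu_A}(\nu_A)$. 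Because $\omega_{\mu_A}$ is symplectic, this forces $\xi_{\mathfrak{g}^*}(\nu_A) = 0$ for every $A = 1,\ldots,k$. Invoking (\ref{ident}) once more, the vector $X = \xi_{\mathfrak{g}^*\times\stackrel{k}{\ldots}\times\mathfrak{g}^*}(\nu_1,\ldots,\nu_k)$ is identified with the tuple $(\xi_{\mathfrak{g}^*}(\nu_1),\ldots,\xi_{\mathfrak{g}^*}(\nu_k))$, which is zero. Hence $X = 0$, as required.

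I do not anticipate a serious obstacle here: the argument hinges entirely on the two preceding lemmas, whose content is precisely that the push-forward by $pr_A$ of the infinitesimal $k$-coadjoint generator is the infinitesimal coadjoint generator, and that the whole tangent space is swept out by these generators. The only point requiring a bit of care is making sure that for each $\eta$ we can realize $\eta_{\mathfrak{g}^*}(\nu_A)$ as a push-forward of a genuine tangent vector to $\mathcal{O}_{(\mu_1,\ldots,\mu_k)}$, which is guaranteed by the very description of its tangent space in Lemma \ref{lem-orbit-1}.
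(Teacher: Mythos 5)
Your proof is correct and follows essentially the same route as the paper: closedness via pullback, then the kernel condition via Lemmas \ref{lem-orbit-1} and \ref{lem-orbit-2} and the identification (\ref{ident}), reducing to the nondegeneracy of each $\omega_{\mu_A}$. The only cosmetic difference is that the paper writes out the explicit value $-\nu_A[\xi,\eta]$ and notes its vanishing for all $\eta$ is equivalent to $\xi_{\mathfrak{g}^*}(\nu_A)=0$, whereas you invoke nondegeneracy of the Kirillov--Kostant--Souriau form directly; these are the same argument.
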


        \proof By definition, every $\omega_\mu^A$ is a closed $2$-form on
$\mathcal{O}_{(\mu_1,\ldots, \mu_k)}$.
        Now we have to prove that $\displaystyle\bigcap_{A=1}^k\ker\,\omega_\mu^A=0$.

        From Lemmas \ref{lem-orbit-1} and \ref{lem-orbit-2} and the expression (\ref{symp-orbit})
        of the symplectic form $\omega_{\mu_A}$,
              if $(\nu_1,\ldots,\nu_k)$ is an arbitrary point of
$\mathfrak{g}^*\times\stackrel{k}{\ldots}\times\mathfrak{g}^*$,
        we obtain that
            \begin{equation}\label{polysymp-orbit}
                \begin{array}{lcl}
                    \omega_\mu^A (\nu_1,\ldots,\nu_k)\left(\xi_{\mathfrak{g}^*\times\stackrel{k}{\ldots}\times
\mathfrak{g}^*}(\nu_1,\ldots,\nu_k) , \eta_{\mathfrak{g}^*\times\stackrel{k}{\ldots}\times\mathfrak{g}^*}
(\nu_1,\ldots,\nu_k) \right)                    &=&
                    \\\noalign{\medskip}
                    \left[(pr_A)^*\omega_{\mu_A}\right]\left(\xi_{\mathfrak{g}^*\times\stackrel{k}{\ldots}\times
\mathfrak{g}^*}(\nu_1,\ldots,\nu_k) , \eta_{\mathfrak{g}^*\times\stackrel{k}{\ldots}\times\mathfrak{g}^*}
(\nu_1,\ldots,\nu_k) \right)                    &=&
                    \\\noalign{\medskip}
                    \omega_{\mu_A}(\nu_A)\left(\xi_{\mathfrak{g}^*}(\nu_A),
                    \eta_{\mathfrak{g}^*}(\nu_A)\right) =-\nu_A[\xi,\eta]\ .
                \end{array}
            \end{equation}

        Let $\xi_{\mathfrak{g}^*\times\stackrel{k}{\ldots}\times\mathfrak{g}^*}(\nu_1,\ldots,\nu_k)$
        be an element of $\displaystyle\bigcap_{A=1}^k\ker\,\omega_\mu^A$.
        As a consequence of (\ref{polysymp-orbit}), we obtain that $\nu_A[\xi,\eta]=0$, for every $\eta\in\mathfrak{g}$, and
        this is equivalent to $\xi_{\mathfrak{g}^*}(\nu_A)=0$.
        Therefore, using the identification (\ref{ident}), we obtain that
        $\xi_{\mathfrak{g}^*\times\stackrel{k}{\ldots}\times\mathfrak{g}^*}(\nu_1,\ldots,\nu_k)=0$
        and thus $\displaystyle\bigcap_{A=1}^k\ker\,\omega_\mu^A=0$. \qed

\section*{Acknowledgments}
We acknowledge the partial financial support of the {\sl
Ministerio de Ciencia e Innovaci\'on} (Spain), projects
MTM2009-13383, MTM2011-22585, MTM2012-34478, and  MTM2011-15725-E,
of the {\sl Canary Government}, Project ProdID20100210,
and of {\sl Gobierno de Arag\' on} project E24/1.

We wish
to thank to Mr. Jeff Palmer for his assistance in preparing the
English version of the manuscript.

\end{document}